\documentclass[journal, onecolumn]{IEEEtran}

\usepackage{bm}
\usepackage{cite}
\usepackage{amsmath, amssymb, amsthm, mathtools}
\usepackage{graphicx, subfigure}
\usepackage[noend]{algpseudocode}
\usepackage{algorithmicx,algorithm}
\usepackage{color}
\usepackage{enumitem}
\usepackage{overpic}
\usepackage{booktabs}

\definecolor{my_dark_brown}{rgb}{0.547,0.161,0.161}
\definecolor{my_light_green}{rgb}{0.471,0.580,0.251}
\definecolor{my_light_blue}{rgb}{0.298,0.709,0.862}
\definecolor{my_dark_blue}{rgb}{0.075,0.333,0.435}
\definecolor{my_purple}{rgb}{0.412,0.072,0.365}

\providecommand{\mathbold}[1]{\bm{#1}}
\newcommand{\vct}[1]{\mathbold{#1}}
\newcommand{\mtx}[1]{\mathbold{#1}}
\def \R 	{\mathbb{R}}
\def \B 	{\mathbb{B}}
\def \S 	{\mathbb{S}}
\def \P 	{\mathbb{P}}
\def \E 	{\mathbb{E}}

\newcommand{\mA}{\mtx{A}}

\newcommand{\mI}{\mtx{I}}

\newcommand{\vu}{\vct{u}}
\newcommand{\va}{\vct{a}}
\newcommand{\vb}{\vct{b}}

\newcommand{\vy}{\vct{y}}
\newcommand{\vz}{\vct{z}}

\newcommand{\vs}{\vct{s}}
\newcommand{\vt}{\vct{t}}
\newcommand{\vv}{\vct{v}}
\newcommand{\vw}{\vct{w}}
\newcommand{\vd}{\vct{d}}

\newcommand{\vg}{\vct{g}}
\newcommand{\vh}{\vct{h}}
\newcommand{\vO}{\bm{0}}
\newcommand{\vtau}{\vct{\tau}}
\newcommand{\vepsilon}{\vct{\epsilon}}

\newcommand{\vkappa}{\vct{\kappa}}
\newcommand{\vomega}{\vct{\omega}}

\newcommand{\PP}{\mathcal{S}}

\newcommand{\K}{K}
\newcommand{\fdc}{\mathcal{C}}
\newcommand{\fds}{\mathcal{S}}

\newcommand{\fx}{\vct{x}}

\newcommand{\fxs}{\vct{x}^{\star}}

\DeclareMathOperator{\dist}{dist}
\DeclareMathOperator{\aff}{aff}
\DeclareMathOperator{\var}{Var}

\DeclareMathOperator{\interior}{int}
\DeclareMathOperator{\ri}{ri}

\DeclareMathOperator{\nullspace}{null}

\DeclareMathOperator{\closure}{cl}
\DeclareMathOperator{\cone}{cone}

\newcommand{\argmin}{\operatorname*{arg\; min}}

\newcommand{\st}{\operatorname*{s.t.}}

\floatname{algorithm}{Recipe}

\newtheorem{theorem}{Theorem} 
\newtheorem{lemma}{Lemma}

\newtheorem{proposition}{Proposition}
\newtheorem{fact}{Fact}

\newtheorem{corollary}{Corollary}

\newtheorem{definition}{Definition}
\newtheorem{remark}{Remark}

\hyphenation{op-tical net-works semi-conduc-tor}

\makeatletter

\newcommand{\Rmnum}[1]{\expandafter\@slowromancap\romannumeral #1@}
\makeatother

\begin{document}

\title{Phase Transition of Convex Programs for Linear Inverse Problems with Multiple Prior Constraints}

\author{Huan~Zhang,
        Yulong~Liu,
	  	and~Hong~Lei}

%

\maketitle

\begin{abstract}
  	A sharp phase transition emerges in convex programs when solving the linear inverse problem, which aims to recover a structured signal from its linear measurements. This paper studies this phenomenon in theory under Gaussian random measurements. Different from previous studies, in this paper, we consider convex programs with multiple prior constraints. These programs are encountered in many cases, for example, when the signal is sparse and its $\ell_2$ norm is known beforehand, or when the signal is sparse and non-negative simultaneously. Given such a convex program, to analyze its phase transition, we introduce a new set and a new cone, called the prior restricted set and prior restricted cone, respectively. Our results reveal that the phase transition of a convex problem occurs at the statistical dimension of its prior restricted cone. Moreover, to apply our theoretical results in practice, we present two recipes to accurately estimate the statistical dimension of the prior restricted cone. These two recipes work under different conditions, and we give a detailed analysis for them. To further illustrate our results, we apply our theoretical results and the estimation recipes to study the phase transition of two specific problems, and obtain computable formulas for the statistical dimension and related error bounds. Simulations are provided to demonstrate our results.
\end{abstract}

\begin{IEEEkeywords}
  	Linear inverse problem, phase transition, statistical dimension, compressed sensing, convex optimization with multiple prior constraints, $\ell_1$ minimization.
\end{IEEEkeywords}

%
\IEEEpeerreviewmaketitle

\section{Introduction} \label{sec: introduction}
\IEEEPARstart{T}{he} linear inverse problem refers to the problem of recovering an unknown signal from its linear measurements. It is frequently encountered in many applications, such as image processing \cite{LUSTIG2007}, network data analysis \cite{HAUPT2008} and so on. In practice, we often have less measurements than the dimension of the true signal. As a result, the problem is generally ill-posed. Therefore, to make recovery possible, we may assume that the true signal has low complexity under some structures. Commonly considered structures include sparsity and low rank, and the corresponding recovery problems are known as \textit{compressed sensing} and \textit{matrix completion}.

Given the structures of the signal, a popular approach for recovery is to solve a convex program that enforces the known prior information about the structures. For example, we pursue a sparse recovery through $\ell_1$ norm minimization in the compressed sensing problem, and a low-rank recovery through nuclear norm minimization in the matrix completion problem. This approach is shown to be simple and efficient in many practical applications. 

Meanwhile, a sharp phase transition is numerically observed, when we use convex programs to recover structured signals. The phase transition refers to the phenomenon that for a certain convex program, when the measurement number is greater than some threshold, it succeeds with high probability; while when the measurement number is smaller than another threshold, it fails with high probability. When we say a sharp phase transition, we mean that the transition region is very narrow. This phenomenon has attracted many researchers, and much work has been done to explain it in theory in the past several years. Some exciting results have been obtained since then.

In \cite{DONOHO2005, DONOHO2009, DONOHO2010a, DONOHO2010b}, Donoho and Tanner analyzed the phase transition of the compressed sensing problem in the asymptotic regime. They first demonstrated that the $\ell_1$ minimization approach succeeds if and only if the random projection preserves the structure of faces of \textit{cross-polytope}, and then used the theory of polytope angles to deal with this problem. In \cite{DONOHO2013a, DONOHO2013b, OYMAK2016}, the authors established a connection between the phase transition and the statistical decision theory, and revealed that the phase transition curve coincides with the minimax risk curve of denoising in many linear inverse problems. In \cite{AMEL2014}, Amelunxen \textit{et al.} presented a comprehensive analysis of the phase transition of convex programs in the linear inverse problem. They first formulated the phase transition problem to a geometry problem, then used tools from the theory of conic integral geometry to study this geometry problem. The results show that the phase transition of convex programs occurs at the \textit{statistical dimension} of the \textit{descent cone} of the structure inducing function at the true signal. In \cite{RUDELSON2008}, Rudelson and Vershynin studied the performance of the $\ell_1$ minimization approach using the ``escape from the mesh'' theorem \cite{GORDON1988} in Gaussian process theory. Later, their ideas were extended in the papers \cite{CHAN2012, TROPP2015, AMEL2014}, and the phase transition were identified by incorporating the arguments of Rudelson and Vershynin with a polarity argument. The obtained results are stated in terms of \textit{Gaussian width}, and consistent with the results in \cite{AMEL2014}. In \cite{BAYATI2015}, Bayati \textit{et al.} made use of a state evolution framework, inspired by ideas from statistical physics, and demonstrated that the phase transition of $\ell_1$ minimization is universal over a class of sensing matrices. Recently, in \cite{OYMAK2015}, Oymak and Tropp demonstrated the universality laws for the phase transition of convex programs for linear inverse problems, over a class of sensing matrices.

However, most of the above work focuses on the case when we have no additional prior constraints. But in many practical problems, we do have some additional prior information. For example, in image processing problems, in addition to the structures about texture etc, the fact that the pixel values are non-negative may help to recover the true image. In these cases, we would solve convex problems with (multiple) prior constraints to recover the true signal. While these problems exhibit a sharp phase transition as well, theoretical understanding of the phase transition is far from satisfactory. We mention that in \cite{DONOHO2005, DONOHO2009}, Donoho and Tanner studied the $\ell_1$ minimization problem with an additional non-negativity constraint, and ``weak threshold'' and ``strong threshold'' were obtained in the asymptotic regime, which marks the phase transition. Nevertheless, a comprehensive analysis about the phase transition of this problem does not exist. Furthermore, when the signal has structures other than sparsity, or when we have prior constraints other than non-negativity, it remains an open problem to prove the existence and identify the location of the phase transition.

In this paper, we study the phase transition of convex programs with multiple prior constraints under Gaussian random measurements. In our analysis, we first introduce a new set and a new cone, called the prior restricted set and prior restricted cone, respectively. Next, we give a sufficient and necessary condition for the success of convex programs, which involves the prior restricted cone. It states that convex programs succeed if and only if the intersection of the null space of the sensing matrix and the prior restricted cone contains only the origin. This condition has been well studied by Amelunxen \textit{et al.} in \cite{AMEL2014} using the theory of conic integral geometry. Utilizing their results, we obtain that the phase transition of convex programs with multiple prior constraints occurs at the statistical dimension of the prior restricted cone. Thus, intuitively, the ``dimension'' of the prior restricted cone (i.e., the statistical dimension of this cone) can be seen as a measure of how much we know about the true signal from the prior information, if convex programs are used to recover signals. Moreover, to apply our theoretical results in practice, we present two recipes to accurately estimate the statistical dimension of the prior restricted cone. The two recipes work under different conditions, and we give a detailed analysis for them. To further illustrate our results, we apply our theoretical results and the estimation recipes to study the phase transition of two specific problems: One is the linear inverse problem with $\ell_2$ norm constraints, and the other is the linear inverse problem with non-negativity constraints. We obtain computable formulas for the statistical dimension and related error bounds in either problem. The following simulations demonstrate that our results match the empirical successful probability perfectly.

The rest of the paper is organized as follows: In section \ref{sec: problem formulation}, we give a precise statement of the problems studied in this paper. In section \ref{sec: preliminaries and notations}, some preliminaries and notations are introduced. In section \ref{sec: main results}, we state our main results. In section \ref{sec: applications}, we apply our main results to study the phase transition of two specific problems. In section \ref{sec: simulation results}, simulations are provided to demonstrate our theoretical results. In section \ref{sec: conclusion}, we conclude the paper.

\section{Problem Formulation} \label{sec: problem formulation}

In this section, we provide a precise statement of the problems studied in this paper. In section \ref{subsec: model}, we introduce the linear inverse problem. In section \ref{subsec: approach}, we introduce the convex optimization procedure to recover signals from compressed, linear measurements.

\subsection{Linear Inverse Problem} \label{subsec: model}
In the linear inverse problem, we observe a signal via its linear measurements:
\begin{equation} \label{eq: model}
  \vy = \mA \fx^{\star},
\end{equation}
where $\vy \in \R^m$ is the measurement vector, $\mA \in \R^{m \times n}$ is the sensing matrix, and $\fx^{\star} \in \R^n$ is the unknown signal. Our goal is to recover $\fxs$ given the knowledge of $\vy$ and $\mA$.

\subsection{Convex Optimization Procedure} \label{subsec: approach}
In many applications, we often have compressed measurements, i.e., $m < n$. As a result, to recovery $\fxs$ from $\vy$ and $\mA$ is an ill-posed problem. Hence, to make recovery possible, it is commonly assumed that the signal $\fxs$ is well structured. In this case, a simple yet efficient approach for recovery is to solve a convex program, which forces the solution to have the corresponding structures. Moreover, apart from the assumed structures, we may have some additional prior information about $\fxs$. For example, we may know the $\ell_2$ norm of $\fxs$ beforehand, or the signal $\fxs$ is non-negative. The additional prior information often acts as constraints.

Suppose that $f_0: \R^n \rightarrow \overline{\R}$ is a proper convex function and promotes the structures of $\fxs$, and $f_i: \R^n \rightarrow \overline{\R}, 1 \le i \le k,$ are some proper convex functions and promote the additional prior information of $\fxs$. Then in practice the following convex program is often used to recover the true signal $\fxs$:

\begin{equation} \label{eq: problem with multiple prior}
 		\min f_0(\fx), 
		\quad \st \  \vy = \mA \fx,
  		\   f_i(\fx) \le f_i(\fxs), \ i = 1,\dots,k.
\end{equation}
We say that the convex problem \eqref{eq: problem with multiple prior} \textit{succeeds} if the unique solution $\hat{\fx}$ satisfies $\hat{\fx} = \fxs$; otherwise, we say it \textit{fails}.

In this paper, we study the phase transition of problem \eqref{eq: problem with multiple prior}. The analysis relies on some knowledge from convex analysis and convex geometry. Hence, in the next section, we give a brief introduction about the needed knowledge.

\section{Preliminaries} \label{sec: preliminaries and notations}

In this section, we present some preliminaries that will be used in our analysis.
\subsection{Subgradient}
Suppose $h: \R^n \rightarrow \overline{\R}$ is a proper convex function. Then the \textit{subdifferential} of $h$ at $\vz \in \R^n$ is the set
$$
\partial h(\vz) = \big\{ \vu \in \R^n:h(\vz + \vt) \ge h(\vz) + \left<\vu, \vt\right> \ \textnormal{for all} \ \vt \in \R^n \big\}.
$$
\subsection{Descent Cones and Normal Cones of Convex Functions}
The descent cone of a proper convex function $h: \R^n \rightarrow \overline{\R}$ at $\vz \in \R^n$ is the set of all non-ascent directions of $h$ at $\vz$:
$$
D(h, \vz) = \big\{ \vd \in \R^n: \exists \, a > 0, h( \vz + a \cdot \vd) \le h(\vz) \big\}.
$$
The \textit{normal cone} of a proper convex function $h: \R^n \rightarrow \overline{\R}$ at $\vz \in \R^n$ is the polar of the descent cone of $h$ at $\vz$:
$$
N(h, \vz) = D(h, \vz)^{\circ} = \big\{ \vu \in \R^n: \left< \vu, \vd \right> \le 0 \ \textnormal{for all} \ \vd \in D(h, \vz) \big\}.
$$
Suppose $\partial h(\vz)$ is non-empty, compact, and does not contain the origin, then the normal cone is the cone generated by the subdifferential \cite[Corollary 23.7.1]{ROCK1970}:
\begin{equation*} \label{eq: relation_subdiff_and_normal}
  N(h, \vz) = \cone\big(\partial h(\vz) \big) = \big\{ \vu \in \R^n: \exists \, \tau \ge 0, \vu \in \tau \cdot \partial h(\vz) \big\}.
\end{equation*}

\subsection{Normal Cone to Convex Sets}
Let $C \subseteq \R^n$ be a convex set with $\bar{\fx} \in C$. The \textit{normal cone} to $C$ at $\bar{\fx}$ is
$$
N(\bar{\fx}; C) \coloneqq \{ \vv \in \R^n : \ \left< \vv, \fx - \bar{\fx}\right> \le 0, \ \forall \, \fx \in C \}.
$$

\subsection{Statistical Dimension of Convex Cones}
For a convex cone $\K$, the \textit{statistical dimension} of $\K$ is defined as:
$$
\delta(\K) = \E \Big( \sup_{\vt \in \K \cap \B^n} \left< \vg, \vt \right> \Big)^2, \text{ where }\vg \sim N(\vO,\mI_n).
$$
The statistical dimension of a convex cone has a number of important properties, see \cite[Proposition 3.1]{AMEL2014}. 
Moreover, the statistical dimension satisfies the following additivity property:
\begin{fact} [Additivity of statistical dimension] \label{fact: statistical dimension of sum}
  	Let $\K_1$ and $\K_2$ be two convex cones in $\R^n$. The following holds:
	\begin{enumerate}
		\item
		  	If for any $\va \in \K_1$ and $\vb \in \K_2$, we have $\left< \va, \vb \right> = 0$. Then
			$$
			\delta ( \K_1 + \K_2 ) = \delta( \K_1 ) + \delta (\K_2).
			$$
		\item
		  	If for any $\va \in \K_1$ and $\vb \in \K_2$, we have $\left< \va, \vb \right> \le 0$. Then
			$$
			\delta ( \K_1 + \K_2 ) \ge \delta( \K_1 ) + \delta (\K_2).
			$$
		\item
		  	If for any $\va \in \K_1$ and $\vb \in \K_2$, we have $\left< \va, \vb \right> \ge 0$. Then
			$$
			\delta ( \K_1 + \K_2 ) \le \delta( \K_1 ) + \delta (\K_2).
			$$
	\end{enumerate}
\end{fact}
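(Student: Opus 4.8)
The plan is to replace the Gaussian-width definition of $\delta$ by its \emph{metric-projection} form. By \cite[Proposition 3.1]{AMEL2014}, for a closed convex cone $\K$ we have $\delta(\K)=\E\,\|\Pi_{\K}(\vg)\|^2$, where $\Pi_{\K}$ is Euclidean projection onto $\K$ and $\vg\sim N(\mO,\mI_n)$; also, directly from the defining formula, $\delta$ is unchanged if a cone is replaced by its closure. Since $\K_1+\K_2$ is again a convex cone and none of the hypotheses or conclusions is affected by closing up $\K_1$ and $\K_2$, I may assume both are closed. The one extra ingredient is the elementary identity
\begin{equation*}
\|\Pi_{\closure(S)}(\vg)\|^2=\sup_{\vt\in S}\Big(2\langle\vg,\vt\rangle-\|\vt\|^2\Big),
\end{equation*}
valid whenever $\closure(S)$ is a closed convex cone; it follows from $2\langle\vg,\vt\rangle-\|\vt\|^2=\|\vg\|^2-\|\vg-\vt\|^2$ together with the Pythagorean relation $\|\vg\|^2=\|\Pi_{\closure(S)}(\vg)\|^2+\dist(\vg,\closure(S))^2$ for projection onto a cone.

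First I would apply this identity with $S=\K_1+\K_2$, writing a generic element as $\vt_1+\vt_2$ with $\vt_i\in\K_i$ and expanding $\|\vt_1+\vt_2\|^2=\|\vt_1\|^2+\|\vt_2\|^2+2\langle\vt_1,\vt_2\rangle$, which gives
\begin{equation*}
\|\Pi_{\closure(\K_1+\K_2)}(\vg)\|^2=\sup_{\vt_1\in\K_1,\ \vt_2\in\K_2}\Big[\big(2\langle\vg,\vt_1\rangle-\|\vt_1\|^2\big)+\big(2\langle\vg,\vt_2\rangle-\|\vt_2\|^2\big)-2\langle\vt_1,\vt_2\rangle\Big].
\end{equation*}
Under hypothesis (2), $\langle\vt_1,\vt_2\rangle\le0$, so the bracket is at least the sum of its first two groups of terms; these decouple, so maximizing separately over $\vt_1\in\K_1$ and $\vt_2\in\K_2$ and invoking the identity again for $\K_1$ and for $\K_2$ gives the pointwise inequality $\|\Pi_{\closure(\K_1+\K_2)}(\vg)\|^2\ge\|\Pi_{\K_1}(\vg)\|^2+\|\Pi_{\K_2}(\vg)\|^2$; taking expectations (and using $\delta(\closure(\K_1+\K_2))=\delta(\K_1+\K_2)$) yields $\delta(\K_1+\K_2)\ge\delta(\K_1)+\delta(\K_2)$. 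Under hypothesis (3), $\langle\vt_1,\vt_2\rangle\ge0$, so the same chain runs with every inequality reversed and gives $\delta(\K_1+\K_2)\le\delta(\K_1)+\delta(\K_2)$. The hypothesis of part (1) implies both the hypothesis of (2) and that of (3), so the two bounds hold at once and we get equality.

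I do not anticipate a genuine obstacle; once the projection formula is available the argument is a one-line manipulation of the cross term. The only points requiring care are technical: (i) the Minkowski sum $\K_1+\K_2$ need not be closed, so one must work with its closure throughout and lean on $\delta$ being closure-invariant --- this is the part I would write out most carefully; (ii) no interchange of supremum and expectation is needed, since the comparison of the integrands is carried out pointwise \emph{before} integrating; and (iii) integrability of $\vg\mapsto\|\Pi_\K(\vg)\|^2$, which is immediate from $0\le\|\Pi_\K(\vg)\|\le\|\vg\|$ and the $1$-Lipschitz continuity of metric projection onto a closed convex set. As a consistency check, in case (1) orthogonality of $\K_1$ and $\K_2$ also lets one verify $\Pi_{\K_1+\K_2}(\vg)=\Pi_{\K_1}(\vg)+\Pi_{\K_2}(\vg)$ with orthogonal summands, so that $\|\Pi_{\K_1+\K_2}(\vg)\|^2=\|\Pi_{\K_1}(\vg)\|^2+\|\Pi_{\K_2}(\vg)\|^2$ directly.
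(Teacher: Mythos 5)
Your proof is correct and is essentially the paper's argument in dual form: both hinge on expanding the squared norm of a sum so that the cross term $2\langle\vt_1,\vt_2\rangle$ appears with a definite sign and the optimization decouples. The paper carries this out on $\E\,\dist^2(\vg,\K_1+\K_2)=\delta\big((\K_1+\K_2)^{\circ}\big)$ and then invokes the complementarity $\delta(\K)+\delta(\K^{\circ})=n$, whereas you work directly with $\E\,\|\Pi_{\K_1+\K_2}(\vg)\|^2$ via the Moreau/Pythagorean identity --- the same computation up to sign.
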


\begin{proof}
  	See Appendix \ref{sec: proof of statistical dimension of sum}.
\end{proof}

Fact \ref{fact: statistical dimension of sum} generalizes the fact that for two linear subspaces $L_1$ and $L_2$, suppose $L_1 \perp L_2$, then $\dim(L_1+L_2) = \dim(L_1) + \dim(L_2)$, since the statistical dimension extends the dimension of a linear subspace to the class of convex cones \cite{AMEL2014}.

\subsection{Indicator Function of a Convex Set}
Let $C \subseteq \R^n$ be a convex set. Then the \textit{indicator function} of the set $C$ is defined as
\begin{displaymath}
	I_C(\fx) = \left\{
	\begin{array}{cl}
	  	0, & \textnormal{when} \ \fx \in C, \\
	  	\infty, & \textnormal{when} \ \fx \notin C.
	\end{array}
\	\right.
\end{displaymath}
For any $\bar{\fx} \in C$, the subdifferential of $I_C$ is \cite[Example 2.32]{MORD2014}:
\begin{equation} \label{eq: subdifferential of indicator function}
	\partial I_C(\bar{\fx}) = N(\bar{\fx}; C) = \{\vv \in \R^n: \left< \vv, \fx-\bar{\fx}\right> \le 0, \ \forall \, \fx \in C \}.
\end{equation}

\subsection{Prior Restricted Set and Prior Restricted Cone}
We first define the \textit{prior restricted set} of convex problem \eqref{eq: problem with multiple prior}:
\begin{definition} [Prior Restricted Set] \label{def: prior restricted set}
  	For the convex problem \eqref{eq: problem with multiple prior}, suppose $\fxs \in \R^n$ is the true signal, then we define its prior restricted set as the following set:
	\begin{equation*} \label{eq: prior restricted set}
		\PP = \big\{ \vd \in \R^n: f_i(\fxs + \vd) \le f_i(\fxs), \ i =0, 1, \dots, k \big\}.
  	\end{equation*}
\end{definition}
Using this set, we can define the prior restricted cone of problem \eqref{eq: problem with multiple prior}:

\begin{definition} [Prior Restricted Cone] \label{def: prior restricted cone}
  	For the convex problem \eqref{eq: problem with multiple prior}, suppose $\fxs \in \R^n$ is the true signal, then we define its prior restricted cone as the following set:
	$$
	\fdc = \cone(\PP) =  \{ \vu \in \R^n: \exists \, t > 0, f_i(\fxs + t \cdot \vu) \le f_i(\fxs), \ i =0, 1, \dots, k \}.
	$$
\end{definition}

\subsection{Notations}
Throughout, we denote $\R^n_+$ the non-negative orthant in $\R^n$: $\R^n_+ \coloneqq \{ \fx \in \R^n: \fx_i \ge 0 \ \textnormal{for} \ 1 \le i \le n\}$, and $\R^n_{++}$ the positive part: $\R^n_{++} \coloneqq \{ \fx \in \R^n: \fx_i > 0 \ \textnormal{for} \ 1 \le i \le n\}$.

For a set $C \in \R^n$, we use $\interior(C)$ to denote its \textit{interior}:
$$
\interior(C) \coloneqq \{\fx \in C: \B(\fx, r) \subseteq C \ \textnormal{for some} \ r > 0\}.
$$
Denote $\aff (C)$ the \textit{affine hull} of $C$:
$$
\aff (C) = \{ \theta_1 \fx_1 + \dots + \theta_k \fx_k : \, \fx_1,\dots,\fx_k \in C, \ \theta_1 + \dots + \theta_k = 1\},
$$
and $\ri(C)$ the \textit{relative interior} of the set $C$:
$$
\ri(C) = \{\fx \in C: \, \B(\fx, r) \cap \aff (C) \subseteq C \ \textnormal{for some} \ r > 0\}.
$$
The closure of $C$ is denoted by either $\overline{C}$ or $\closure(C)$. 

Given a point $\vu \in \R^n$ and a subset $C \subseteq \R^n$, the distance of $\vu$ to the set $C$ is denoted by $\dist(\vu, C)$:
$$
\dist(\vu, C) \coloneqq \inf_{\fx \in C} \|\vu - \fx\|_2.
$$
We denote $\Pi_{C}(\vu)$ the projection of $\vu$ onto the set $C$:
$$
\Pi_C(\vu) := \{\fx \in C: \|\vu - \fx\|_2 = \dist(\vu, C) \}.
$$
If $C$ is non-empty, convex, and closed, the projection $\Pi_C(\vu)$ is a singleton. In this case, $\Pi_C(\vu)$ may denote the unique point in it, depending on the context.

\section{Main Results} \label{sec: main results}

In this section, we state our main results in this paper. We first give results about the phase transition of problem \eqref{eq: problem with multiple prior} in subsection \ref{subsec: phase transition of convex programs}, and then present two recipes to estimate the statistical dimension of the prior restricted cone in subsections \ref{subsec: calc_statistical_prior_1} and \ref{subsec: calc_statistical_prior_2}.

\subsection{Phase Transition of Convex Programs with Multiple Prior Constraints} \label{subsec: phase transition of convex programs}

In this subsection, we state our main results about the phase transition of problem \eqref{eq: problem with multiple prior}. We begin by a geometry condition which determines the success of problem \eqref{eq: problem with multiple prior}:

\begin{lemma} [Optimality condition] \label{lem: optimality condition}
  	Consider problem \eqref{eq: problem with multiple prior} to recover the true signal $\fxs$. If $f_i$ is a proper convex function for any $0 \le i \le k$, problem \eqref{eq: problem with multiple prior} succeeds if and only if
	$$
	\fdc \cap \nullspace(\mA) = \{ \vO \},
	$$
	where $\fdc$ denotes the prior restricted cone of problem \eqref{eq: problem with multiple prior}.
\end{lemma}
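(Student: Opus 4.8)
The plan is to argue directly from the definitions, with no appeal to Lagrangian duality or KKT conditions. The pivot is the trivial remark that $\fxs$ is always feasible for \eqref{eq: problem with multiple prior} (it satisfies $\vy = \mA\fxs$ and $f_i(\fxs) \le f_i(\fxs)$ for $i = 1,\dots,k$), so ``success'' is equivalent to the statement that every feasible point other than $\fxs$ has strictly larger $f_0$-value. Accordingly I would prove both implications by contraposition, working with the second (``$\exists\, t>0$'') description of $\fdc$ from Definition \ref{def: prior restricted cone}; the hypothesis that each $f_i$ is proper convex enters only to guarantee that this description really does coincide with the conic-hull description $\fdc = \cone(\PP)$.

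For necessity (success $\Rightarrow$ trivial intersection) I would establish the contrapositive. Assume some $\vu \neq \vO$ lies in $\fdc \cap \nullspace(\mA)$. By Definition \ref{def: prior restricted cone} there is $t > 0$ with $f_i(\fxs + t\vu) \le f_i(\fxs)$ for $i = 0,1,\dots,k$; put $\vd \coloneqq t\vu$, which is nonzero. Since $\nullspace(\mA)$ is a subspace, $\mA(\fxs + \vd) = \mA\fxs = \vy$, and the inequalities for $i = 1,\dots,k$ make $\fxs + \vd$ feasible, while $f_0(\fxs + \vd) \le f_0(\fxs)$ shows it is at least as good an objective value as $\fxs$. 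Hence $\fxs$ is not the unique optimal point of \eqref{eq: problem with multiple prior}, i.e., the program fails.

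For sufficiency, suppose $\fdc \cap \nullspace(\mA) = \{\vO\}$ and let $\hat\fx \neq \fxs$ be any feasible point. Then $\vd \coloneqq \hat\fx - \fxs$ satisfies $\mA\vd = \vy - \vy = \vO$ and $\vd \neq \vO$, so $\vd \notin \fdc$; since $\PP \subseteq \fdc$ (take $t = 1$ in Definition \ref{def: prior restricted cone}), also $\vd \notin \PP$, so at least one of $f_i(\fxs + \vd) \le f_i(\fxs)$, $i = 0,\dots,k$, fails. But feasibility of $\hat\fx$ gives $f_i(\fxs + \vd) \le f_i(\fxs)$ for $i = 1,\dots,k$, so the offending index is $i = 0$: $f_0(\hat\fx) > f_0(\fxs)$. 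As this holds for every feasible $\hat\fx \neq \fxs$, the signal $\fxs$ is the unique minimizer and \eqref{eq: problem with multiple prior} succeeds.

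I do not expect a genuine obstacle: the argument is bookkeeping of the definitions once the feasibility of $\fxs$ is noted. The two points that require care are (i) reconciling the conic-hull form of $\fdc$ with its ``$\exists\, t>0$'' form — this rests on $\PP$ being convex (a sublevel set of the convex function $\vd \mapsto f_i(\fxs + \vd)$), which is exactly where proper convexity of the $f_i$ is used, since for a convex set the conic hull is precisely the set of nonnegative rescalings; and (ii) observing that we never need the infimum in \eqref{eq: problem with multiple prior} to be attained, because $\fxs$ itself certifies feasibility and collapses ``success'' to a direct comparison of objective values (here it is tacitly assumed that $\fxs \in \dom f_i$ for all $i$, otherwise the constraints and the objective at $\fxs$ are vacuous).
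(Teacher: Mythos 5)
Your proposal is correct and follows essentially the same route as the paper: necessity via a nonzero $\vd$ in the intersection yielding a feasible point $\fxs+t\vd$ with no worse objective, and sufficiency by showing any feasible $\hat\fx\neq\fxs$ would place $\hat\fx-\fxs$ in $\fdc\cap\nullspace(\mA)$ (the paper phrases both directions as contradictions, but the content is identical).
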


\begin{proof}
  	See Appendix \ref{app: proof of main results}.
\end{proof}

Fig. \ref{fig: noise-free} gives a geometric interpretation of Lemma \ref{lem: optimality condition}. Note that when there is no additional prior constraint, i.e., when we consider problem
\begin{equation} \label{eq: problem no addi_prior}
	\min f_0(\fx), \quad \st  \ \vy = \mA \fx
\end{equation}
to recover $\fxs$, the prior restricted cone is exactly $D(f_0,\fxs)$, the descent cone of $f_0$ at $\fxs$. In this case, our optimality condition, Lemma \ref{lem: optimality condition}, will degenerate to the optimality condition given by Chandrasekaran \textit{et al.} in \cite[Fact 2.8]{CHAN2012}. 

\begin{figure*}
  	\centering
	\subfigure[]{
		\begin{overpic}[scale=0.49]{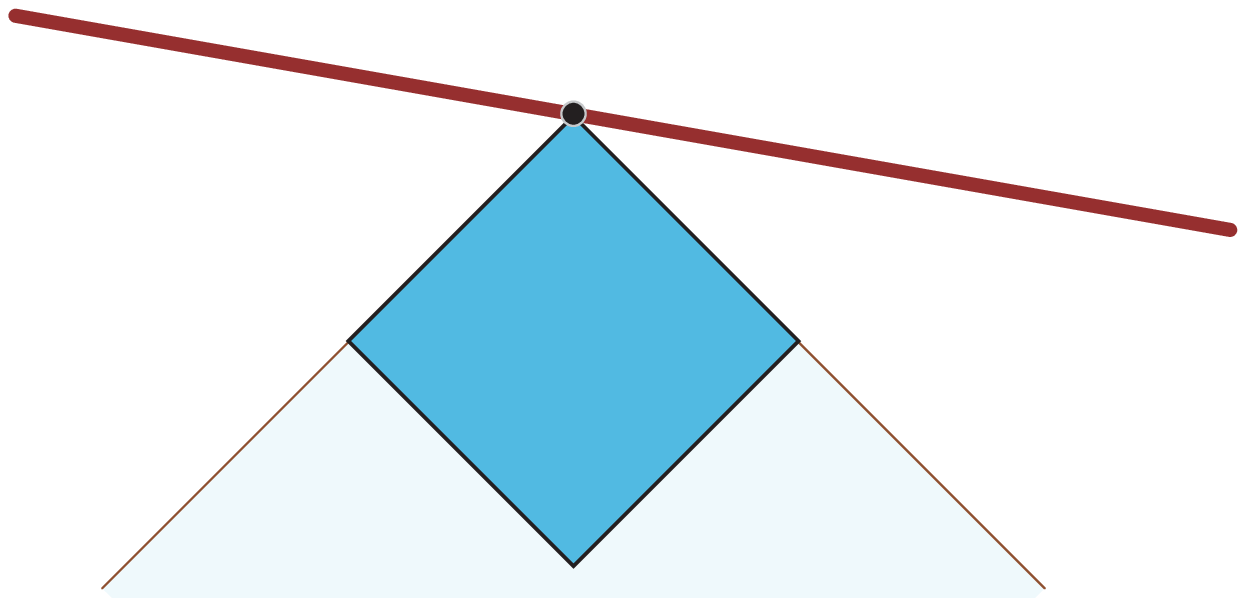}
		  	\put(90,44){\color{my_dark_brown}{$\nullspace(\mA)$}}
			\put(28,25){\color{my_light_blue}{$\fdc$}}
			\put(41,39){\color{my_dark_blue}{$\PP$}}
		\end{overpic}
  	}
	\subfigure[]{
		\begin{overpic}[scale=0.49]{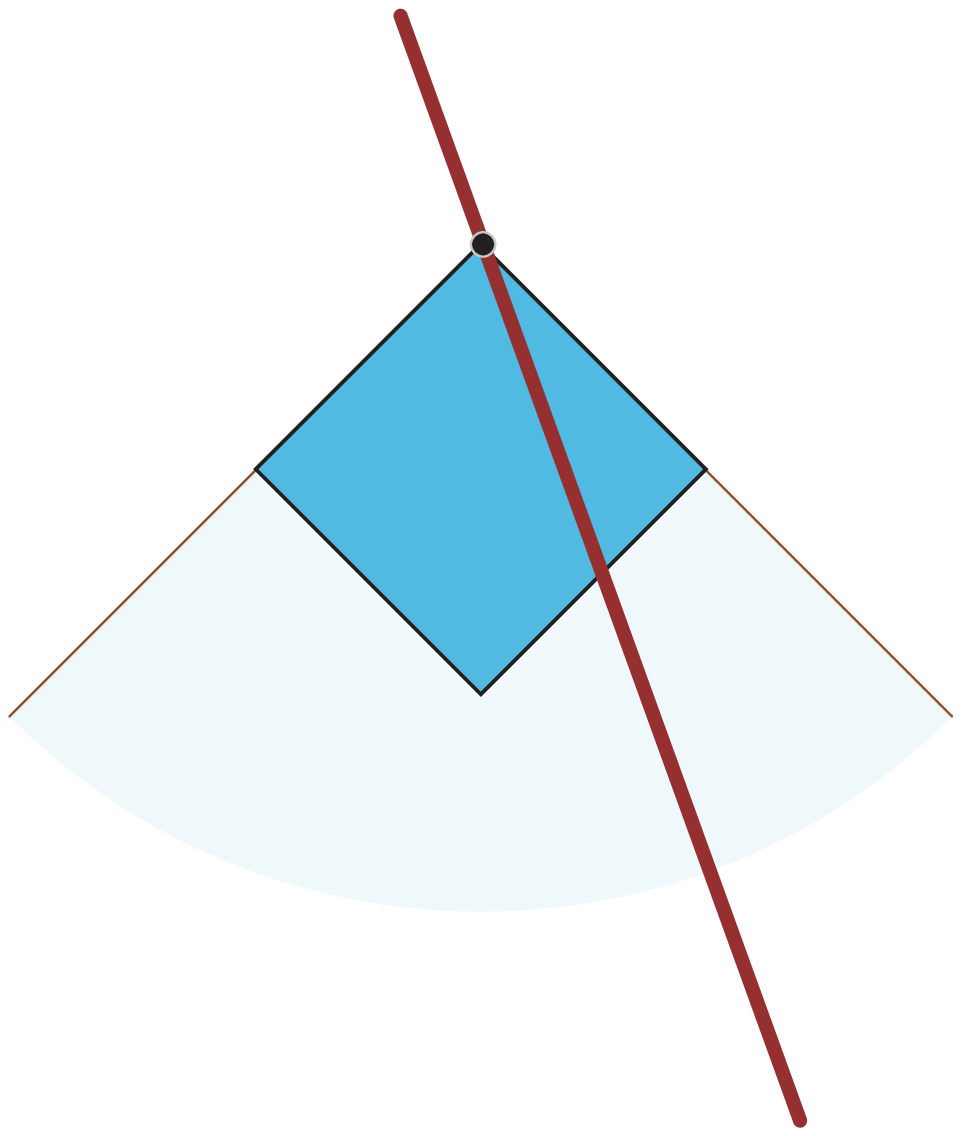}
		  	\put(69,3){\color{my_dark_brown}{$\nullspace(\mA)$}}
			\put(28,25){\color{my_light_blue}{$\fdc$}}
			\put(41,39){\color{my_dark_blue}{$\PP$}}
		\end{overpic}
  	}
	\caption{A geometric interpretation of the optimality condition for success of problem \eqref{eq: problem with multiple prior}, i.e., Lemma \ref{lem: optimality condition}. In both figures, the dark red line denotes the null space of $\mA$, the light blue region denotes the prior restricted cone of problem \eqref{eq: problem with multiple prior} (i.e., $\fdc$), and the dark blue region denotes the prior restricted set of problem \eqref{eq: problem with multiple prior} (i.e., $\PP$). In figure (a), the intersection of $\nullspace(\mA)$ and $\fdc$ contains only the origin. In this case, problem \eqref{eq: problem with multiple prior} succeeds. In figure (b), the intersection of $\nullspace(\mA)$ and $\fdc$ contains a ray. In this case, problem \eqref{eq: problem with multiple prior} fails.}
	\label{fig: noise-free}
\end{figure*}

Using Lemma \ref{lem: optimality condition}, we can study the phase transition of problem \eqref{eq: problem with multiple prior}. For this purpose, we assume that we have random sensing matrix. In particular, we assume that $\mA$ is drawn at random from the standard normal distribution on $\R^{m \times n}$. According to Lemma \ref{lem: optimality condition}, to study the phase transition of problem \eqref{eq: problem with multiple prior}, it is sufficient to answer the following questions:

\begin{itemize}[leftmargin=1em]
  	\item Under what conditions the kernel of $\mA$ intersects the cone $\fdc$ trivially with high probability?
	\item Under what conditions the kernel of $\mA$ intersects the cone $\fdc$ nontrivially with high probability?
\end{itemize}

This questions have been well studied in recent years. We borrow the answer from \cite{AMEL2014}:

\begin{proposition} [\cite{AMEL2014}, Theorem I] \label{prop: probability of intersection}
  	Fix a tolerance $\zeta$. Suppose the matrix $\mA \in \R^{m \times n}$ has independent standard normal entries, and $K$ denotes a convex cone. Then when
	$$
	m \le \delta (K) - a_{\zeta} \sqrt{n},
	$$
	we have $\nullspace(\mA) \cap K = \{ \vO \}$ with probability less than $\zeta$. On the contrary, when
	$$
	m \ge \delta (K) + a_{\zeta} \sqrt{n},
	$$
	we have $\nullspace(\mA) \cap K = \{ \vO \}$ with probability at least $1-\zeta$. The quantity $a_{\zeta} \coloneqq \sqrt{8\log (4/\zeta)}$.
\end{proposition}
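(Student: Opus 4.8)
This proposition is taken from \cite{AMEL2014} (it is essentially Theorem~I specialized to a linear subspace), so I would not re-derive it; but its proof has a clean three-part shape, which I sketch. The idea is to convert the statement into a purely geometric fact about a \emph{fixed} cone and a uniformly random linear subspace, and then feed that into conic integral geometry.

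First, I would record that a Gaussian $\mA\in\R^{m\times n}$ produces a uniformly oriented null space. Rotational invariance gives $\mA\mtx{Q}\stackrel{d}{=}\mA$ for every orthogonal $\mtx{Q}$, so the row space of $\mA$ is almost surely $m$-dimensional and Haar-uniform on the Grassmannian, whence $\nullspace(\mA)$ has the same law as $\mtx{Q}L_{n-m}$ for a fixed $(n-m)$-dimensional subspace $L_{n-m}$ and a Haar-random rotation $\mtx{Q}$. So the object to control is $p(K)\coloneqq\P\{K\cap\mtx{Q}L_{n-m}=\{\vO\}\}$, a function of $K$ and $m$ alone (one works with closed convex cones throughout).

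Next, I would bring in the conic intrinsic volumes $v_0(K),\dots,v_n(K)$: they are nonnegative and sum to $1$, so they are the law of an integer random variable $V_K$ with $\P\{V_K=i\}=v_i(K)$, and the statistical dimension is precisely its mean, $\delta(K)=\E V_K=\sum_i i\,v_i(K)$. The conic kinematic (Crofton) formula, with one of the two cones taken to be the subspace $L_{n-m}$, expresses $\P\{K\cap\mtx{Q}L_{n-m}\neq\{\vO\}\}$ as a one-sided tail of this distribution, up to an overall factor at most $2$ and a parity bookkeeping, so that $p(K)$ is, in the same sense, comparable to the complementary tail $\P\{V_K<m\}$. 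The decisive step is then the sharp concentration of $V_K$ about $\delta(K)$: $\P\{|V_K-\delta(K)|\ge t\}$ decays like a Gaussian tail with variance proxy of order $n$, uniformly over all cones $K$, a consequence of the master Steiner formula and Gaussian concentration. Taking $t=a_\zeta\sqrt{n}$ with $a_\zeta=\sqrt{8\log(4/\zeta)}$ forces the relevant tail below $\zeta$ even after the factor-$2$ loss. Putting the pieces together: if $m\ge\delta(K)+a_\zeta\sqrt{n}$ then $\P\{V_K\ge m\}$ is negligible, so $p(K)\ge1-\zeta$; and if $m\le\delta(K)-a_\zeta\sqrt{n}$ then $\P\{V_K<m\}$ is negligible, so $p(K)\le\zeta$ — exactly the stated dichotomy. (Combined with Lemma~\ref{lem: optimality condition} applied to $K=\fdc$, this is what yields the phase transition of problem~\eqref{eq: problem with multiple prior}.)

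The main obstacle is the concentration estimate with the \emph{right constant}: a crude variance bound would only locate the transition window to within a constant factor, whereas pinning it at $a_\zeta\sqrt{n}=\sqrt{8\log(4/\zeta)}\,\sqrt{n}$ requires the sharp sub-Gaussian bound for $V_K$ (variance $\lesssim n$ with an explicit constant, valid for every cone), which is the technical heart of \cite{AMEL2014}; everything else — the reduction to a random subspace and the kinematic identity — is routine. I note in passing that a qualitatively similar but less precise statement can be obtained without intrinsic volumes at all, using Gordon's escape-through-the-mesh theorem \cite{GORDON1988} for the ``trivial intersection'' half and a polarity argument for the other half, as in \cite{CHAN2012,TROPP2015}.
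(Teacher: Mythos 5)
The paper offers no proof of this proposition beyond citing it as a direct consequence of Theorem~I of \cite{AMEL2014}, and your proposal does the same, with an accurate sketch of the underlying machinery (reduction to a Haar-random subspace, the conic kinematic formula, and concentration of the conic intrinsic volumes about $\delta(K)$). Your account matches the source and the paper's treatment, so there is nothing to correct.
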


Proposition \ref{prop: probability of intersection} is a direct consequence of \cite[Theorem I]{AMEL2014}. The proof involves the theory of conic integral geometry. See reference \cite{AMEL2014} for details. Now combining Lemma \ref{lem: optimality condition} and Proposition \ref{prop: probability of intersection}, we obtain our main results about the phase transition:

\begin{theorem} [Phase transition of convex programs with multiple prior constraints] \label{th: main theorem}
  	Consider convex problem \eqref{eq: problem with multiple prior} to solve the linear inverse problem. If the sensing matrix $\mA$ has independent standard normal entries, the phase transition of problem \eqref{eq: problem with multiple prior} occurs at the statistical dimension of its prior restricted cone. More precisely, for any $\zeta > 0$, when the measurement number $m$ satisfies 
$$
m \le \delta(\fdc) - a_{\zeta}\sqrt{n},
$$
problem \eqref{eq: problem with multiple prior} fails with probability at least $1-\zeta$. On the contrary, when
$$
m \ge \delta(\fdc) + a_{\zeta}\sqrt{n},
$$
problem \eqref{eq: problem with multiple prior} succeeds with probability at least $1-\zeta$. The quantity $a_{\zeta} \coloneqq \sqrt{8\log (4/\zeta)}$.
\end{theorem}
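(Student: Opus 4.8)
The plan is to derive Theorem~\ref{th: main theorem} directly by combining the deterministic optimality condition of Lemma~\ref{lem: optimality condition} with the probabilistic intersection estimate of Proposition~\ref{prop: probability of intersection}, taking the cone in the latter to be $K = \fdc$. The key observation is that Lemma~\ref{lem: optimality condition} turns the event ``problem~\eqref{eq: problem with multiple prior} succeeds'' into the purely geometric event $\{\,\nullspace(\mA) \cap \fdc = \{\vO\}\,\}$, which is exactly the event whose probability Proposition~\ref{prop: probability of intersection} controls in terms of $m$, $n$, and $\delta(\fdc)$. So the proof amounts to checking the hypotheses and rephrasing.

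First I would verify that $\fdc$ is an object to which Proposition~\ref{prop: probability of intersection} applies, i.e.\ a convex cone (containing the origin). Each constraint $f_i(\fxs + \vd) \le f_i(\fxs)$ cuts out a sublevel set of the proper convex function $\vd \mapsto f_i(\fxs + \vd)$, so the prior restricted set $\PP$ is an intersection of convex sets and hence convex, and $\vO \in \PP$ because $f_i(\fxs) \le f_i(\fxs)$; therefore $\fdc = \cone(\PP)$ is a convex cone. If one wishes to invoke a version of Proposition~\ref{prop: probability of intersection} (or of the conic kinematic formula of \cite{AMEL2014} underlying it) phrased for \emph{closed} convex cones, it is enough to work with $\closure(\fdc)$ instead, noting that $\delta(\fdc) = \delta(\closure(\fdc))$ and that, with probability one, $\nullspace(\mA) \cap \fdc = \{\vO\}$ if and only if $\nullspace(\mA) \cap \closure(\fdc) = \{\vO\}$ (a generic subspace that meets $\closure(\fdc)$ beyond the origin already meets $\fdc$ beyond the origin). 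This is implicit in the way \cite{AMEL2014} passes from descent cones to their closures, so I would simply cite it.

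Next I would handle the two regimes separately. For the failure regime, suppose $m \le \delta(\fdc) - a_{\zeta}\sqrt{n}$. The first half of Proposition~\ref{prop: probability of intersection} with $K = \fdc$ gives $\P\{\,\nullspace(\mA) \cap \fdc = \{\vO\}\,\} < \zeta$; by Lemma~\ref{lem: optimality condition} this is precisely the probability that problem~\eqref{eq: problem with multiple prior} succeeds, so problem~\eqref{eq: problem with multiple prior} succeeds with probability less than $\zeta$, hence fails with probability at least $1 - \zeta$. For the success regime, suppose $m \ge \delta(\fdc) + a_{\zeta}\sqrt{n}$; the second half of Proposition~\ref{prop: probability of intersection} gives $\P\{\,\nullspace(\mA) \cap \fdc = \{\vO\}\,\} \ge 1 - \zeta$, and Lemma~\ref{lem: optimality condition} again identifies this with the success probability, which therefore is at least $1 - \zeta$. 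The constant $a_{\zeta} = \sqrt{8\log(4/\zeta)}$ is inherited verbatim from Proposition~\ref{prop: probability of intersection}.

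In short, there is no real obstacle: the theorem is bookkeeping on top of Lemma~\ref{lem: optimality condition} and Proposition~\ref{prop: probability of intersection}. The only points deserving a line of care are the two identifications used above --- that the cone handed to the conic-geometry estimate is the same convex (and, if required, closed) cone for which the estimate and the statistical dimension are defined, and that measurability of the ``success'' event is inherited from that of the intersection event via Lemma~\ref{lem: optimality condition}. Once these are stated, the conclusion follows immediately.
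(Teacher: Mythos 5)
Your proposal is correct and is exactly the paper's argument: the theorem is obtained by applying Proposition~\ref{prop: probability of intersection} with $K = \fdc$ and translating the intersection event into success/failure via Lemma~\ref{lem: optimality condition}. The extra care you take about convexity and closedness of $\fdc$ is sound but goes beyond what the paper writes down.
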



Since the phase transition occurs at the statistical dimension of the prior restricted cone, intuitively, we can see it as a measure of how much we know about the true signal from the prior information.

\begin{remark}
  	We can apply our Theorem \ref{th: main theorem} to analyze the phase transition of problem \eqref{eq: problem no addi_prior}. The prior restricted cone of problem \eqref{eq: problem no addi_prior} is exactly $D(f_0, \fxs)$, the descent cone of $f_0$ at $\fxs$. Thus, in this case, our Theorem \ref{th: main theorem} can be read as: The phase transition of problem \eqref{eq: problem no addi_prior} occurs at the statistical dimension of $D(f_0, \fxs)$. This coincides with the results in \cite[Theorem II]{AMEL2014}.
\end{remark}

\subsection{Statistical Dimension of Prior Restricted Cones: Part I} \label{subsec: calc_statistical_prior_1}

In theory, Theorem \ref{th: main theorem} have revealed that the phase transition of problem \eqref{eq: problem with multiple prior} occurs at the statistical dimension of its prior restricted cone. However, if we want to apply these results in practice, we must find ways to compute the statistical dimension efficiently. For this purpose, in this subsection, we present a recipe that provides a reliable estimate for the statistical dimension of the prior restricted cone, when all the subdifferentials are compact and do not contain the origin. The idea is inspired by the recipe proposed by Amelunxen \textit{et al.} \cite[pp. 244-248]{AMEL2014} for the computation of the statistical dimension of a descent cone.

The basic idea for the recipe is that the statistical dimension of the prior restricted cone can be expressed in terms of its polar, which has a close relation with the normal cones, and furthermore, the subdifferentials, of the functions $f_i$'s, $0 \le i \le k$. Let us first express the statistical dimension in terms of normal cones.

\begin{lemma} \label{lem: calc_statistical_dimension}
  	Consider problem \eqref{eq: problem with multiple prior} to recover the true signal $\fxs$. Let $D(f_i, \fxs),N(f_i, \fxs)$ denote the descent cone and normal cone of $f_i$ at $\fxs$ for $0 \le i \le k$, respectively. Suppose that 
	$$
	\ri\big(D(f_0, \fxs) \big) \cap \ri\big(D(f_1, \fxs)\big) \cap \dots \cap \ri\big(D(f_k, \fxs)\big) \neq \emptyset.
	$$
	Then the polar of the prior restricted cone can be expressed as follows:
	$$
	\fdc^{\circ} = \sum_{i=0}^k N(f_i, \fxs),
	$$
	and the statistical dimension of the prior restricted cone can be expressed as follows:
	$$
	\delta(\fdc) = \E \dist^2 \big(\vg, \sum_{i=0}^k N(f_i, \fxs)\big).
	$$
\end{lemma}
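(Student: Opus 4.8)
The plan is to reduce the claim to two known facts: a convex‑analytic duality between intersections and sums of cones, and the identity from \cite[Proposition 3.1]{AMEL2014} expressing the statistical dimension of a cone as the expected squared distance to its polar. The bridge between them is the observation that the prior restricted cone is exactly the intersection of the descent cones of the $f_i$'s.

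\emph{Step 1: identify $\fdc$ with $\bigcap_{i=0}^k D(f_i,\fxs)$.} The inclusion $\fdc\subseteq\bigcap_{i=0}^k D(f_i,\fxs)$ is immediate from the definitions, since a direction $\vu$ admitting a single $t>0$ with $f_i(\fxs+t\vu)\le f_i(\fxs)$ for all $i$ lies in every $D(f_i,\fxs)$. For the reverse inclusion, take $\vu\in D(f_i,\fxs)$ for each $i$, so there are $t_i>0$ with $f_i(\fxs+t_i\vu)\le f_i(\fxs)$; set $t=\min_i t_i>0$. Convexity of $f_i$ gives, for each $i$,
$$
f_i(\fxs+t\vu)=f_i\!\big((1-t/t_i)\fxs+(t/t_i)(\fxs+t_i\vu)\big)\le (1-t/t_i)f_i(\fxs)+(t/t_i)f_i(\fxs)=f_i(\fxs),
$$
using $0<t/t_i\le 1$, so $\vu\in\fdc$. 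This also shows each $D(f_i,\fxs)$ is a convex cone (the cone generated by the convex sublevel set of $f_i$ at level $f_i(\fxs)$, translated by $-\fxs$), hence so is $\fdc$.

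\emph{Step 2: dualize the intersection.} Here I would invoke the duality between intersections and sums of convex cones: for convex cones $K_1,K_2$ one has $(K_1\cap K_2)^{\circ}=\closure(K_1^{\circ}+K_2^{\circ})$, and the closure can be dropped when $\ri K_1\cap\ri K_2\neq\emptyset$ \cite[Corollary 16.4.2]{ROCK1970}. Iterating over $i=0,\dots,k$, at each stage one needs the relative interior of the partial intersection to meet the relative interior of the next descent cone; this follows from $\ri\big(\bigcap_{i=0}^{j}D(f_i,\fxs)\big)=\bigcap_{i=0}^{j}\ri\big(D(f_i,\fxs)\big)$ \cite[Theorem 6.5]{ROCK1970}, which is applicable because the hypothesis $\bigcap_{i=0}^k\ri(D(f_i,\fxs))\neq\emptyset$ forces every partial intersection of relative interiors to be nonempty. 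Consequently every closure operation is vacuous, and combining with Step 1 and $N(f_i,\fxs)=D(f_i,\fxs)^{\circ}$ we get $\fdc^{\circ}=\big(\bigcap_{i=0}^k D(f_i,\fxs)\big)^{\circ}=\sum_{i=0}^k N(f_i,\fxs)$, the first assertion.

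\emph{Step 3: pass to the statistical dimension.} Finally I would apply the conic identity $\delta(K)=\E\,\dist^2(\vg,K^{\circ})$, valid for any convex cone $K$ with $\vg\sim N(\vO,\mI_n)$ \cite[Proposition 3.1]{AMEL2014}, with $K=\fdc$, and substitute the expression for $\fdc^{\circ}$ from Step 2 to obtain $\delta(\fdc)=\E\,\dist^2\big(\vg,\sum_{i=0}^k N(f_i,\fxs)\big)$. The step I expect to be the main obstacle is Step 2: one must argue rigorously that the relative‑interior hypothesis removes the closure at \emph{every} stage of the iteration — equivalently, that $\sum_{i=0}^k N(f_i,\fxs)$ is already closed and coincides exactly with $(\bigcap_i D(f_i,\fxs))^{\circ}$. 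A secondary technical point is that the descent cones, though convex, need not be closed, so one should either carry their closures throughout (polars are insensitive to closure) or cite the versions of the intersection and relative‑interior theorems that apply to non‑closed convex cones.
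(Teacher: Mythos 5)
Your proposal is correct and follows essentially the same route as the paper: identify $\fdc$ with $\bigcap_{i=0}^k D(f_i,\fxs)$ via the convexity argument with $t=\min_i t_i$, dualize the intersection into a sum of normal cones under the relative-interior hypothesis, and conclude with $\delta(K)=\E\dist^2(\vg,K^{\circ})$. The only cosmetic difference is that the paper invokes the multi-set normal-cone identity \cite[Corollary 23.8.1]{ROCK1970} in one shot rather than iterating the two-cone polar duality, and your worry about closures at each stage is exactly what that hypothesis (together with \cite[Theorem 6.5]{ROCK1970}, as you note) disposes of.
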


\begin{proof}
  	See Appendix \ref{subsec: statistical dimension in terms of normal cones}.
\end{proof}


Lemma \ref{lem: calc_statistical_dimension} establishes connections between the prior restricted cone and the individual normal cones, but it does not allow us to compute the statistical dimension of the prior restricted cone efficiently. This can be done by incorporating the subdifferential expression for normal cones.

\begin{theorem} [The statistical dimension of the prior restricted cone] \label{th: calc_statistical_dimension_1}
  	Let $\fdc$ be the prior restricted cone of problem \eqref{eq: problem with multiple prior}, and let $\fxs \in \R^n$ be the true signal. Assume that for any $0 \le i \le k$, the subdifferential $\partial f_i(\fxs)$ is non-empty, compact, and does not contain the origin. Assume that the descent cones satisfy
	$$
	\ri\big(D(f_0, \fxs) \big) \cap \ri\big(D(f_1, \fxs)\big) \cap \dots \cap \ri\big(D(f_k, \fxs)\big) \neq \emptyset.
	$$
	Define the function $J: \R^{k + 1}_{+} \rightarrow \R$ to be
	$$
	J(\vtau) \coloneqq \E \Big[ \dist^2 \Big(\vg, \sum_{i=0}^{k} \vtau_i \cdot \partial f_i(\fxs)\Big)\Big],
	$$
	where $\vg \sim N(\vO, \mI_{n})$. Then the statistical dimension of the prior restricted cone has the following upper bound:
	$$
	\delta\big(\fdc\big) \le \inf_{\vtau \in \R^{k + 1}_{+}} J(\vtau).
	$$
	The function $J(\vtau)$ is convex, continuous, and continuously differentiable in $\R^{k+1}_{+}$. It attains its minimum in a compact subset of $\R^{k+1}_+$. Moreover, suppose that
	$$
	\textnormal{the two sets }\sum_{i = 0}^{k} \vtau_i \cdot \partial f_i(\fxs) \textnormal{ and } \sum_{i = 0}^{k} \tilde{\vtau}_i \cdot \partial f_i(\fxs) \textnormal{ are not identical, for any }\vtau \neq \tilde{\vtau} \in \R^{k+1}_+.
	$$
	Then the function $J(\vtau)$ is strictly convex, and attains its minimum at a unique point. For the differential of $J$ at the boundary of $\R^{k+1}_+$, we interpret the partial derivative $\frac{\partial J}{\partial \vtau_i}$ similarly as the right derivative, if $\vtau_i = 0$. 
\end{theorem}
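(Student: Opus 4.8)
The plan is to read the statistical dimension off Lemma \ref{lem: calc_statistical_dimension} and to analyze the integrand $\vtau\mapsto\dist^2(\vg,C(\vtau))$ as the value function of a parametric optimization problem with a \emph{fixed compact} feasible set. Throughout write $S_i:=\partial f_i(\fxs)$, $C(\vtau):=\sum_{i=0}^{k}\vtau_i S_i$ for $\vtau\in\R^{k+1}_+$, and $R_i:=\sup_{\vu\in S_i}\|\vu\|_2<\infty$; by hypothesis each $S_i$ is nonempty, compact, convex with $\vO\notin S_i$, so $N(f_i,\fxs)=\cone(S_i)$ and this cone is pointed (otherwise $\vO$ would lie in the convex set $S_i$). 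The upper bound is immediate: $C(\vtau)\subseteq\sum_i N(f_i,\fxs)$ for every $\vtau$, so $\dist(\vg,\sum_i N(f_i,\fxs))\le\dist(\vg,C(\vtau))$ pointwise, and taking expectations and then the infimum over $\vtau$ in the formula of Lemma \ref{lem: calc_statistical_dimension} gives $\delta(\fdc)\le\inf_\vtau J(\vtau)$. For convexity, $\vtau\mapsto\dist(\vg,C(\vtau))$ is midpoint convex for each fixed $\vg$: writing $\Pi_{C(\vtau)}(\vg)=\sum_i\vtau_i\vu_i$ and $\Pi_{C(\tilde{\vtau})}(\vg)=\sum_i\tilde{\vtau}_i\tilde{\vu}_i$ with $\vu_i,\tilde{\vu}_i\in S_i$, for $\bar{\vtau}=\tfrac12(\vtau+\tilde{\vtau})$ the point $\tfrac12\Pi_{C(\vtau)}(\vg)+\tfrac12\Pi_{C(\tilde{\vtau})}(\vg)$ lies in $C(\bar{\vtau})$ (on each coordinate $i$ with $\bar{\vtau}_i>0$ it equals $\bar{\vtau}_i$ times a convex combination of $\vu_i$ and $\tilde{\vu}_i$), whence $\dist(\vg,C(\bar{\vtau}))\le\tfrac12\dist(\vg,C(\vtau))+\tfrac12\dist(\vg,C(\tilde{\vtau}))$. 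Thus $\vtau\mapsto\dist(\vg,C(\vtau))$ is convex and nonnegative, its square is convex, and $J=\E[\dist^2(\vg,C(\cdot))]$ is convex. Finiteness is clear (bound $\dist^2(\vg,C(\vtau))$ by $\|\vg-\sum_i\vtau_i\vu_i^0\|_2^2$ for a fixed $\vu_i^0\in S_i$), so the convex function $J$ is continuous on the polyhedral set $\R^{k+1}_+$.

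The point behind ``$J$ attains its minimum in a compact subset'' is that $\vO\notin C(\vtau)$ for every $\vtau\in\R^{k+1}_+\setminus\{\vO\}$. Suppose $\sum_i\vtau_i\vu_i=\vO$ with $\vu_i\in S_i$, and pick $\vz^\star\in\bigcap_{i=0}^{k}\ri(D(f_i,\fxs))$, nonempty by hypothesis. Since $\vz^\star\in D(f_i,\fxs)\subseteq N(f_i,\fxs)^\circ$ and $\vu_i\in N(f_i,\fxs)$, we have $\langle\vz^\star,\vu_i\rangle\le0$ for each $i$, while $\sum_i\vtau_i\langle\vz^\star,\vu_i\rangle=0$; hence $\langle\vz^\star,\vu_i\rangle=0$ for every $i$ with $\vtau_i>0$. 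By the classical fact that for a closed convex cone $K$ with $\vz^\star\in\ri(K)$ the set $\{\vu\in K^\circ:\langle\vu,\vz^\star\rangle=0\}$ is exactly the lineality space of $K^\circ$, applied to $K=\overline{D(f_i,\fxs)}$ (so $K^\circ=N(f_i,\fxs)$ and $\ri K=\ri D(f_i,\fxs)$), each such $\vu_i$ lies in $\operatorname{lin}(N(f_i,\fxs))$, which is $\{\vO\}$ by pointedness — impossible since $\vO\notin S_i$. Thus $\vtau=\vO$, so $\vv\mapsto\dist(\vO,C(\vv))$ is continuous and strictly positive on the compact unit simplex of $\R^{k+1}_+$, hence bounded below by some $c_0>0$, and by homogeneity $\dist(\vO,C(\vtau))\ge c_0\|\vtau\|_1$. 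Then $\dist(\vg,C(\vtau))\ge c_0\|\vtau\|_1-\|\vg\|_2$, so $J(\vtau)\ge\E[\max\{c_0\|\vtau\|_1-\|\vg\|_2,\,0\}^2]\to\infty$ as $\|\vtau\|_1\to\infty$; hence $J$ is coercive and its minimum is attained on a compact sublevel set.

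The differentiability statement is the technical core. Fix $\vg$ and write $F(\vtau):=\dist^2(\vg,C(\vtau))=\min_{\vu\in U}\phi(\vtau,\vu)$, where $U:=\prod_i S_i$ is compact and independent of $\vtau$ and $\phi(\vtau,\vu):=\|\vg-\sum_i\vtau_i\vu_i\|_2^2$ is smooth with $\nabla_\vtau\phi(\vtau,\vu)=(-2\langle\vg-\sum_j\vtau_j\vu_j,\vu_i\rangle)_{i=0}^{k}$. Danskin's theorem gives $F'(\vtau;\vd)=\min_{\vu\in U^\star(\vtau)}\langle\nabla_\vtau\phi(\vtau,\vu),\vd\rangle$ with $U^\star(\vtau)=\argmin_{\vu\in U}\phi(\vtau,\vu)$. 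For every $\vu\in U^\star(\vtau)$ one has $\sum_i\vtau_i\vu_i=\Pi_{C(\vtau)}(\vg)$, so $\vz:=\vg-\Pi_{C(\vtau)}(\vg)$ is independent of the choice of $\vu$; moreover, for $\vtau_i>0$ the first-order condition obtained by varying $\vu_i$ alone over $S_i$ reads $\langle\vz,\vw-\vu_i\rangle\le0$ for all $\vw\in S_i$, i.e. $\langle\vz,\vu_i\rangle=h_{S_i}(\vz):=\sup_{\vw\in S_i}\langle\vz,\vw\rangle$, which depends only on $\vz$. Hence $\nabla_\vtau\phi(\vtau,\cdot)$ is constant on $U^\star(\vtau)$, $F'(\vtau;\cdot)$ is linear, and $F$ is differentiable at interior $\vtau$ with $\partial F/\partial\vtau_i=-2h_{S_i}(\vz)$ (continuously so, $F$ being convex); at a boundary point with $\vtau_i=0$ the same Danskin formula for the admissible direction $\vd=\ve_i$ yields the right derivative $\min_{\vu\in U^\star(\vtau)}(-2\langle\vz,\vu_i\rangle)=-2h_{S_i}(\vz)$, because $\vu_i$ ranges freely over $S_i$ on $U^\star(\vtau)$. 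To pass to $J=\E F$, observe that by convexity the difference quotients in $\vtau_i$ are bounded between one-sided derivatives that are $O(1+\|\vg\|_2)$ locally uniformly in $\vtau$, hence dominated by an integrable function; thus $\partial J/\partial\vtau_i(\vtau)=-2\,\E\,h_{S_i}(\vg-\Pi_{C(\vtau)}(\vg))$. Since $C(\cdot)$ is Lipschitz in the Hausdorff metric and the metric projection onto a closed convex set depends continuously on that set, $\vtau\mapsto\Pi_{C(\vtau)}(\vg)$ is continuous, and dominated convergence makes each $\partial J/\partial\vtau_i$ continuous on all of $\R^{k+1}_+$ — in particular the boundary right derivatives depend continuously on $\vtau$.

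For the last part, strict convexity follows from the midpoint estimate above together with strict convexity of $t\mapsto t^2$ on $[0,\infty)$: for $\vtau\ne\tilde{\vtau}$ and $\bar{\vtau}=\tfrac12(\vtau+\tilde{\vtau})$, $\dist^2(\vg,C(\bar{\vtau}))\le\big(\tfrac12\dist(\vg,C(\vtau))+\tfrac12\dist(\vg,C(\tilde{\vtau}))\big)^2\le\tfrac12\dist^2(\vg,C(\vtau))+\tfrac12\dist^2(\vg,C(\tilde{\vtau}))$, with the last inequality strict whenever $\dist(\vg,C(\vtau))\ne\dist(\vg,C(\tilde{\vtau}))$; since $C(\vtau)$ and $C(\tilde{\vtau})$ are distinct closed sets under the hypothesis, their distance functions differ on a nonempty open set of $\vg$, and integrating gives $J(\bar{\vtau})<\tfrac12 J(\vtau)+\tfrac12 J(\tilde{\vtau})$. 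A strictly convex function has at most one minimizer, so together with the coercivity above $J$ has a unique minimizer. I expect the main obstacle to be the differentiability step: establishing that $\nabla_\vtau\phi$ is constant over the possibly non-singleton argmin set $U^\star(\vtau)$ and correctly matching the boundary case to the right-derivative convention; once that is in hand, interchanging $\partial/\partial\vtau_i$ with the expectation and verifying continuity of the gradient are routine given the domination and the continuity of the set-valued projection.
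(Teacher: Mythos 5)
Your proposal is correct and establishes every claim in the theorem, but it reaches several of them by routes that differ from the paper's. For the upper bound, you use monotonicity of the distance under the inclusion $\sum_i\vtau_i\,\partial f_i(\fxs)\subseteq\sum_iN(f_i,\fxs)$, whereas the paper writes $\sum_iN(f_i,\fxs)$ as the union over $\vtau\in\R^{k+1}_+$ of these sets and then applies Jensen's inequality; the paper's route is slightly longer but makes visible the exact identity $\delta(\fdc)=\E\big[\inf_{\vtau}\dist^2(\vg,\sum_i\vtau_i\,\partial f_i(\fxs))\big]$, so that the only slack in the recipe is the Jensen gap. For the key non-degeneracy fact that $\vO\notin\sum_i\vtau_i\,\partial f_i(\fxs)$ for $\vtau\neq\vO$ (which drives coercivity), the paper's Lemma \ref{lem: sum of subdifferentials not contain origin} argues directly from the subgradient inequalities and produces a common strict descent direction from the relative-interior hypothesis, while you run a polarity argument: pairing against a point of $\bigcap_i\ri\big(D(f_i,\fxs)\big)$ forces each active $\vu_i$ into the lineality space of $N(f_i,\fxs)=\cone\big(\partial f_i(\fxs)\big)$, which is trivial by pointedness. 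Both are valid; yours is more geometric, the paper's more elementary. The largest divergence is differentiability: the paper constructs each partial derivative by hand (a one-variable reduction, subdifferential calculus for the marginal function, and a mean-value argument to get total differentiability up to the boundary of $\R^{k+1}_+$), whereas you view the integrand as a parametric minimum of a smooth function over the fixed compact set $\prod_iS_i$ and invoke Danskin's theorem, checking that the gradient of the inner objective is constant on the argmin set via the support function of $S_i$ evaluated at $\vg-\Pi_{C(\vtau)}(\vg)$. This is shorter, yields the same closed-form partial derivatives (and the same right-derivative interpretation at the boundary), and the subsequent interchange with the expectation and continuity of the gradient via Hausdorff continuity of $\vtau\mapsto C(\vtau)$ and dominated convergence parallels the paper's use of the dominated convergence theorem. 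The convexity, coercivity, strict convexity, and uniqueness arguments are essentially the same as the paper's (distinct compact sums have distance functions differing on an open set of positive Gaussian measure, and strict convexity of the square does the rest).
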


\begin{proof}
  	Since the subdifferential is non-empty, compact, and does not contain the origin, the normal cones is the cone generated by the subdifferential \cite[Corollary 23.7.1]{ROCK1970}. Thus, by Lemma \ref{lem: calc_statistical_dimension},
	\begin{align*}
		\delta(\fdc) & = \E \big[ \dist^2 (\vg, \sum_{i=0}^{k} N(f_i,\fxs))\big] = \E \Big[ \dist^2 \Big(\vg, \sum_{i=0}^{k} \Big( \bigcup_{\vtau_i \ge 0} \vtau_i \cdot \partial f_i(\fxs) \Big)\Big)\Big] \\
		& = \E \Big[ \dist^2 \Big(\vg, \bigcup_{\vtau \in \R_{+}^{k+1}}\Big( \sum_{i=0}^{k} \vtau_i \cdot \partial f_i(\fxs) \Big)\Big)\Big] = \E \Big[ \inf_{\vtau \in \R_{+}^{k+1}} \dist^2 \Big(\vg, \sum_{i=0}^{k} \vtau_i \cdot \partial f_i(\fxs)\Big)\Big] \\
		& \le \inf_{\vtau \in \R_{+}^{k+1}} \E \Big[ \dist^2 \Big(\vg, \sum_{i=0}^{k} \vtau_i \cdot \partial f_i(\fxs)\Big)\Big].
	\end{align*}
	The inequality results from Jensen's inequality. The proof of properties of $J$ appears in Appendix \ref{subsec: distance to sum of subdifferentials} and Appendix \ref{subsec: expected distance to sum of sets}.
\end{proof}

\begin{algorithm}[t]
	\caption{The statistical dimension of the prior feasible descent cone}
  	\label{rec: calc_statistical_dimension_1}
	{\bf Assume} that for $0 \le i \le k$, the function $f_i : \R^n \rightarrow \overline{\R}$ is a proper convex function and $\fxs \in \R^n$. \\
	{\bf Assume} that the intersection of interiors of descent cones are non-empty, i.e., $\ri\big(D(f_0, \fxs) \big) \cap \ri\big(D(f_1, \fxs)\big) \cap \dots \cap \ri\big(D(f_k, \fxs)\big) \neq \emptyset$. \\
	{\bf Assume} that for $0 \le i \le k$, the subdifferential $\partial f_i(\fxs)$ is non-empty, compact, and does not contain the origin. \\
	{\bf Assume} that for any $\vtau \neq \tilde{\vtau} \in \R^{k+1}_+$, the two sets $\sum_{i = 0}^{k} \vtau_i \cdot \partial f_i(\fxs)$ and $\sum_{i = 0}^{k} \tilde{\vtau}_i \cdot \partial f_i(\fxs)$ are not identical.
	\begin{algorithmic}[1]
	  	\State Identify the subdifferential $S_i = \partial f_i(\fxs)$, for $0 \le i \le k$.
		\State For any $\vtau \in \R^{q+1}_+$, find the following set, which is the Minkowski sum of the subdifferentials: $S(\vtau) = \sum_{i = 0}^k \vtau_i \cdot S_i$.
		\State Compute the function $J(\vtau) \coloneqq \E \dist^2 \big(\vg, S(\vtau) \big)$, where $\vg \sim N(\vO, \mI_n)$.
		\State Compute the differential, $\nabla J(\vtau)$, of function $J(\vtau)$.
		\State Find the minimizer $\vtau^{\star}$ of $J(\vtau)$ over $\R^{k+1}_+$, using its differential $\nabla J(\vtau)$.
		\State The statistical dimension of the prior restricted cone has the upper bound $\delta( \fdc ) \le J(\vtau^{\star})$.
	\end{algorithmic}
\end{algorithm}

Theorem \ref{th: calc_statistical_dimension_1} provides an effective way to estimate the statistical dimension of the prior restricted cone, when all the subdifferentials are non-empty, compact, and does not contain the origin. We summarize it in Recipe \ref{rec: calc_statistical_dimension_1}. In subsection \ref{subsec: linear inverse with bounded l2 norm}, we apply Recipe \ref{rec: calc_statistical_dimension_1} to study the phase transition of linear inverse problems with $\ell_2$ norm constraints.

\begin{remark} \label{re: remark for calc}
  	In \cite{AMEL2014}, Amelunxen \textit{et al.} studied the phase transition of problem \eqref{eq: problem no addi_prior}. They proved that the phase transition occurs at the statistical dimension of the descent cone $D(f_0,\fxs)$, and provided a recipe to compute it. Our Recipe \ref{rec: calc_statistical_dimension_1} can be seen as a generalization of this recipe from one function to multiple functions, and our proof idea for Theorem \ref{th: calc_statistical_dimension_1} is inspired by the proof for \cite[Proposition 4.1]{AMEL2014}.
\end{remark}


\subsection{Statistical Dimension of Prior Restricted Cones: Part II} \label{subsec: calc_statistical_prior_2}

Recipe \ref{rec: calc_statistical_dimension_1} gives a reliable estimate of the statistical dimension of the prior restricted cone, when all the subdifferentials are compact and do not contain the origin. However, in many practical applications, we may encounter the case that some of the subdifferentials are unbounded or contain the origin. For example, consider the linear inverse problem with non-negativity constraints, i.e., 
\begin{equation} \label{eq: linear inverse nonnegative}
	\min f_0(\fx), \quad \st \  \vy = \mA \fx, \ \fx \ge \vO.
\end{equation}
Note that $\fx \ge \vO$ is equivalent to $I_{\R^{n}_+}(\fx) \le I_{\R^{n}_+}(\fxs)$. The subdifferential of indicator functions has specific formula \eqref{eq: subdifferential of indicator function}. It is easy to verify that the subdifferential of $I_{\R^{n}_+}$ at $\fxs$ contains the origin, and if $\fxs$ contains zero entries, it is unbounded. Therefore, Recipe \ref{rec: calc_statistical_dimension_1} cannot be used directly, and we have to find other ways to compute the statistical dimension. Actually, an effective way in this case is to express the normal cones via the subdifferentials, only for those functions whose subdifferentials are compact and do not contain the origin.

\begin{theorem} [The statistical dimension of the prior restricted cone] \label{th: calc_statistical_dimension_2}
  	Let $\fdc$ be the prior restricted cone of problem \eqref{eq: problem with multiple prior}, and let $\fxs \in \R^n$ be the true signal. Assume that for $0 \le i \le q$, the subdifferentials $\partial f_i(\fxs)$'s are non-empty, compact, and do not contain the origin, and for $q < i \le k$, the subdifferentials $\partial f_i(\fxs)$'s are non-empty, where $0 \le q < k$ is a natural number. Assume that the descent cones satisfy
	$$
	\ri\big(D(f_0, \fxs) \big) \cap \ri\big(D(f_1, \fxs)\big) \cap \dots \cap \ri\big(D(f_k, \fxs)\big) \neq \emptyset.
	$$
	Assume that for any $\vtau \in \S^{q} \cap \R^{q+1}_{+}$, we have
	\begin{equation*} \label{eq: condition for J_2}
		\vO \notin \closure \big( \sum_{i = q + 1}^{k} N(f_i,\fxs) \big) + \sum_{i = 0}^{q} \vtau_i \cdot \partial f_i(\fxs),
  	\end{equation*}
	Define the function $J: \R^{q + 1}_{+} \rightarrow \R$ by
	$$
	J(\vtau) \coloneqq \E \Big[ \dist^2 \Big(\vg, \sum_{i=0}^{q} \vtau_i \cdot \partial f_i(\fxs) + \sum_{i = q + 1}^{k} N(f_i,\fxs) \Big)\Big],
	$$
	where $\vg \sim N(\vO, \mI_{n})$. Then the statistical dimension of the prior restricted cone has the following upper bound:
	$$
	\delta(\fdc) \le \inf_{\vtau \in \R^{q + 1}_{+}} J(\vtau).
	$$
	The function $J(\vtau)$ is convex, continuous, and continuously differential in $\R^{q+1}_{+}$. It attains its minimum in a compact subset of $\R^{q+1}_+$. Moreover, suppose that for any $\vtau \neq \tilde{\vtau} \in \R^{q+1}_+$, 
	$$
	\textnormal{the two sets} \ \closure\Big( \sum_{i = q + 1}^{k} N(f_i,\fxs) \Big) + \sum_{i = 0}^{q} \vtau_i \cdot \partial f_i(\fxs)  \ \textnormal{and} \ \closure\Big( \sum_{i = q + 1}^{k} N(f_i,\fxs) \Big) + \sum_{i = 0}^{q} \tilde{\vtau}_i \cdot \partial f_i(\fxs) \ \textnormal{are not identical}.
	$$
	Then the function $J(\vtau)$ is strictly convex, and attains its minimum at a unique point. For the differential of $J$ at the boundary of $\R^{q+1}_+$, we interpret the partial derivative $\frac{\partial J}{\partial \vtau_i}$ similarly as the right derivative, if $\vtau_i = 0$.
\end{theorem}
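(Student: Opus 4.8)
The plan is to mirror the proof of Theorem~\ref{th: calc_statistical_dimension_1}, but to convert into subdifferential form only the normal cones $N(f_i,\fxs)$ with $0\le i\le q$ (those coming from compact, origin-free subdifferentials), keeping the cones $N(f_i,\fxs)$ with $q<i\le k$ untouched. First, since $\ri(D(f_0,\fxs))\cap\cdots\cap\ri(D(f_k,\fxs))\neq\emptyset$, Lemma~\ref{lem: calc_statistical_dimension} applies and gives $\fdc^{\circ}=\sum_{i=0}^{k}N(f_i,\fxs)$ and $\delta(\fdc)=\E\dist^2\bigl(\vg,\sum_{i=0}^{k}N(f_i,\fxs)\bigr)$ with $\vg\sim N(\vO,\mI_n)$. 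For $0\le i\le q$ we use \cite[Corollary 23.7.1]{ROCK1970} to write $N(f_i,\fxs)=\cone(\partial f_i(\fxs))=\bigcup_{\vtau_i\ge 0}\vtau_i\cdot\partial f_i(\fxs)$, so, merging the $q+1$ scalars into a single union over $\vtau\in\R^{q+1}_{+}$,
$$
\sum_{i=0}^{k}N(f_i,\fxs)=\bigcup_{\vtau\in\R^{q+1}_{+}}S(\vtau),\qquad S(\vtau):=\sum_{i=0}^{q}\vtau_i\cdot\partial f_i(\fxs)+\sum_{i=q+1}^{k}N(f_i,\fxs),
$$
and hence $\dist(\vg,\fdc^{\circ})=\inf_{\vtau\in\R^{q+1}_{+}}\dist(\vg,S(\vtau))$. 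Squaring, taking expectations, and interchanging the expectation with the infimum (by Jensen's inequality, exactly as in Theorem~\ref{th: calc_statistical_dimension_1}) gives $\delta(\fdc)\le\inf_{\vtau\in\R^{q+1}_{+}}J(\vtau)$. It is worth noting at the outset that $\dist(\vg,\cdot)$ is insensitive to taking closures, so $J(\vtau)=\E\dist^2\bigl(\vg,\closure(S(\vtau))\bigr)$ as well, and that $\closure(S(\vtau))=\closure\bigl(\sum_{i=q+1}^{k}N(f_i,\fxs)\bigr)+\sum_{i=0}^{q}\vtau_i\cdot\partial f_i(\fxs)$ (the closure distributes over the compact summand) is a closed convex set, even though $S(\vtau)$ itself, being a compact set plus a possibly non-closed cone, need not be.

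Convexity of $J$ will follow from the set inclusion $S\bigl((1-\lambda)\vtau+\lambda\tilde{\vtau}\bigr)\supseteq(1-\lambda)S(\vtau)+\lambda S(\tilde{\vtau})$ for $\lambda\in[0,1]$, which holds because each $\partial f_i(\fxs)$ is convex (so $aC+bC=(a+b)C$ when $a,b\ge 0$) and each $N(f_i,\fxs)$ is a convex cone (so $(1-\lambda)K+\lambda K=K$); this makes $\vtau\mapsto\dist(\vg,S(\vtau))$ convex, its square then convex because it is non-negative, and $J$ convex after integrating in $\vg$. For continuity and continuous differentiability of $J$ on $\R^{q+1}_{+}$, together with the stated right-derivative interpretation of $\partial J/\partial\vtau_i$ on the face $\{\vtau_i=0\}$, the plan is to combine the identity $\nabla_{\vg}\dist^2(\vg,C)=2\bigl(\vg-\Pi_C(\vg)\bigr)$ for non-empty closed convex $C$ with the analysis of projections onto Minkowski sums already carried out in Appendix~\ref{subsec: distance to sum of subdifferentials} and Appendix~\ref{subsec: expected distance to sum of sets}, which transfers once we allow the single additional fixed convex-cone summand $\closure\bigl(\sum_{i=q+1}^{k}N(f_i,\fxs)\bigr)$.

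The step that genuinely departs from Theorem~\ref{th: calc_statistical_dimension_1} is coercivity of $J$, and this is precisely what the new hypothesis $\vO\notin\closure\bigl(\sum_{i=q+1}^{k}N(f_i,\fxs)\bigr)+\sum_{i=0}^{q}\vtau_i\cdot\partial f_i(\fxs)$ for $\vtau\in\S^{q}\cap\R^{q+1}_{+}$ is designed to supply (in Theorem~\ref{th: calc_statistical_dimension_1} this role is played by the much simpler condition $\vO\notin\partial f_i(\fxs)$). Writing $\vtau=R\vomega$ with $R=\|\vtau\|_2$ and $\vomega\in\S^{q}\cap\R^{q+1}_{+}$, and using that $\sum_{i=q+1}^{k}N(f_i,\fxs)$ is a cone, one gets $\closure(S(R\vomega))=R\cdot\closure(S(\vomega))$, and the hypothesis says exactly that $d(\vomega):=\dist\bigl(\vO,\closure(S(\vomega))\bigr)>0$ for every such $\vomega$. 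A lower-semicontinuity argument for $\vomega\mapsto d(\vomega)$ on the compact set $\S^{q}\cap\R^{q+1}_{+}$ (the only delicate point being that $\closure(S(\vomega))$ itself moves with $\vomega$) then upgrades this to $d_{\min}:=\inf_{\vomega}d(\vomega)>0$, whence $\dist\bigl(\vg,\closure(S(R\vomega))\bigr)\ge R\,d_{\min}-\|\vg\|$ and $J(R\vomega)\ge\E\bigl[(R\,d_{\min}-\|\vg\|)_{+}^{2}\bigr]\to\infty$ as $R\to\infty$, uniformly in $\vomega$. Thus $J$ is coercive and, being continuous, attains its infimum on a compact subset of $\R^{q+1}_{+}$.

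For the strict-convexity and uniqueness claim, the non-identity hypothesis says precisely that $\closure(S(\vtau))\neq\closure(S(\tilde{\vtau}))$ whenever $\vtau\neq\tilde{\vtau}$ in $\R^{q+1}_{+}$. Two distinct closed convex sets $A\neq B$ in $\R^n$ must have $\dist(\cdot,A)\not\equiv\dist(\cdot,B)$ --- otherwise $A=\{\fx:\dist(\fx,A)=0\}=\{\fx:\dist(\fx,B)=0\}=B$ --- so by continuity these two distance functions disagree on a nonempty open set. Along any segment $\vtau_\lambda=(1-\lambda)\vtau_0+\lambda\vtau_1$, the inclusion $\closure(S(\vtau_\lambda))\supseteq(1-\lambda)\closure(S(\vtau_0))+\lambda\closure(S(\vtau_1))$ gives $\dist(\vg,\closure(S(\vtau_\lambda)))\le(1-\lambda)\dist(\vg,\closure(S(\vtau_0)))+\lambda\dist(\vg,\closure(S(\vtau_1)))$, and squaring with the strict convexity of $t\mapsto t^2$ turns this into a \emph{strict} inequality at every $\vg$ where the two endpoint distances differ; integrating over that open set yields strict convexity of $J$, hence a unique minimizer. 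The hard parts will be the coercivity step --- since the cones $N(f_i,\fxs)$, $i>q$, are unbounded, the elementary lower bound used in Theorem~\ref{th: calc_statistical_dimension_1} no longer applies and the ``origin outside the closed sum'' condition must be exploited quantitatively and uniformly over the sphere slice --- and the continuity/differentiability claims, which rest on a careful description of the projection onto the Minkowski sum of a compact convex set and a possibly non-closed convex cone; both are the technical content naturally deferred to the appendices.
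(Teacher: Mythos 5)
Your proposal is correct and follows essentially the same route as the paper: Lemma \ref{lem: calc_statistical_dimension} for the polar/normal-cone expression, conversion to cones of subdifferentials only for the indices $0\le i\le q$, the union-to-infimum rewriting followed by Jensen's inequality for the upper bound, and then convexity via Minkowski-sum identities, coercivity from the origin-exclusion condition uniformly over $\S^{q}\cap\R^{q+1}_{+}$ (the paper establishes your ``lower semicontinuity'' of $\vomega\mapsto d(\vomega)$ as a Lipschitz estimate in its Lemma \ref{lem: sum of unbounded sets}), and strict convexity from the non-identical-closures hypothesis. The analytic details you defer (continuity, differentiability, the projection onto a compact set plus a closed cone) are exactly the content the paper places in Appendices \ref{subsec: dist to sum of unbouded sets} and \ref{subsec: expected distance to sum of unbounded sets}.
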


\begin{proof} 
  	We proceed similarly as in Theorem \ref{th: calc_statistical_dimension_1},
	\begin{align*}
	  	\delta(\fdc) & = \E \big[ \dist^2 (\vg, \sum_{i=0}^{k} N(f_i,\fxs))\big] = \E \Big[ \dist^2 \Big(\vg, \sum_{i=0}^{q} \Big( \bigcup_{\vtau_i \ge 0} \vtau_i \cdot \partial f_i(\fxs) \Big) + \sum_{i = q + 1}^{k} N(f_i,\fxs) \Big)\Big] \\
		& = \E \Big[ \dist^2 \Big(\vg, \bigcup_{\vtau \in \R_{+}^{q+1}}\Big( \sum_{i=0}^{q} \vtau_i \cdot \partial f_i(\fxs) + \sum_{i = q + 1}^{k} N(f_i,\fxs) \Big)\Big)\Big] \\
		& = \E \Big[ \inf_{\vtau \in \R_{+}^{q+1}} \dist^2 \Big(\vg, \sum_{i=0}^{q} \vtau_i \cdot \partial f_i(\fxs) + \sum_{i = q + 1}^{k} N(f_i,\fxs) \Big)\Big] \\
		& \le \inf_{\vtau \in \R_{+}^{q+1}} \E \Big[ \dist^2 \Big(\vg, \sum_{i=0}^{q} \vtau_i \cdot \partial f_i(\fxs) + \sum_{i = q + 1}^{k} N(f_i,\fxs) \Big)\Big].
	\end{align*}
	The inequality results from Jensen's inequality. The proof of properties of $J$ appears in Appendix \ref{subsec: dist to sum of unbouded sets} and Appendix \ref{subsec: expected distance to sum of unbounded sets}.
\end{proof}

Theorem \ref{th: calc_statistical_dimension_2} further generalizes our Theorem \ref{th: calc_statistical_dimension_1} and \cite[Proposition 4.1]{AMEL2014} to the case when some of the subdifferentials are unbounded or contain the origin, and the proofs share similar ideas. We summarize it in Recipe \ref{rec: calc_statistical_dimension_2}. In subsection \ref{subsec: linear inverse with non-negativity}, we apply Recipe \ref{rec: calc_statistical_dimension_2} to study the phase transition of linear inverse problems with non-negativity constraints.

\begin{algorithm}[t]
	\caption{The statistical dimension of the prior feasible descent cone}
  	\label{rec: calc_statistical_dimension_2}
	{\bf Assume} that for $0 \le i \le k$, the function $f_i : \R^n \rightarrow \overline{\R}$ is a proper convex function and $\fxs \in \R^n$. \\
	{\bf Assume} that the intersection of interiors of descent cones are non-empty, i.e., $\ri\big(D(f_0, \fxs) \big) \cap \ri\big(D(f_1, \fxs)\big) \cap \dots \cap \ri\big(D(f_k, \fxs)\big) \neq \emptyset$. \\
	{\bf Assume} that for $0 \le i \le q$, the subdifferential $\partial f_i(\fxs)$ is non-empty, compact, and does not contain the origin. \\
	{\bf Assume} that for $q+1 \le i \le k$, the subdifferential $\partial f_i(\fxs)$ is non-empty. \\
	{\bf Assume} that for any $\vtau \in \S^{q} \cap \R^{q+1}_{+}$, the set $\closure \big( \sum_{i = q + 1}^{k} N(f_i,\fxs) \big) + \sum_{i = 0}^{q} \vtau_i \cdot \partial f_i(\fxs)$ does not contain the origin. \\
	{\bf Assume} that for any $\vtau \neq \tilde{\vtau} \in \R^{q+1}_{+}$, the two sets 
	$$
	\closure \big( \sum_{i = q + 1}^{k} N(f_i,\fxs) \big) + \sum_{i = 0}^{q} \vtau_i \cdot \partial f_i(\fxs)  \ \textnormal{and} \ \closure \big( \sum_{i = q + 1}^{k} N(f_i,\fxs) \big) + \sum_{i = 0}^{q} \tilde{\vtau}_i \cdot \partial f_i(\fxs)
	$$
	are not identical.
	\begin{algorithmic}[1]
	  	\State Identify the subdifferential $S_i = \partial f_i(\fxs)$, for $0 \le i \le q$, and the normal cone $N_i = N(f_i, \fxs)$, for $q+1 \le i \le k$.
		\State For any $\vtau \in \R^{q+1}_+$, find the following set, which is the Minkowski sum of the subdifferentials and the normal cones: $S(\vtau) = \sum_{i = 0}^q \vtau_i S_i + \sum_{i = q+1}^k N_i$.
		\State Compute the function $J(\vtau) \coloneqq \E \dist^2 \big(\vg, S(\vtau) \big)$, where $\vg \sim N(\vO, \mI_n)$.
		\State Compute the differential, $\nabla J(\vtau)$, of function $J(\vtau)$.
		\State Find the minimizer $\vtau^{\star}$ of $J(\vtau)$ over $\R^{k+1}_+$, using its differential $\nabla J(\vtau)$.
		\State The statistical dimension of the prior restricted cone has the upper bound $\delta ( \fdc ) \le J(\vtau^{\star})$.
	\end{algorithmic}
\end{algorithm}

\section{Examples and Applications} \label{sec: applications}

In this section, we make use of our theoretical results and the computation recipes to study the phase transition of several specific problems. In subsection \ref{subsec: linear inverse with bounded l2 norm}, we study the phase transition of linear inverse problems with $\ell_2$ norm constraints, and in subsection \ref{subsec: linear inverse with non-negativity}, we study the phase transition of linear inverse problems with non-negativity constraints.

\subsection{Phase Transition of Linear Inverse Problem with $\ell_2$ Norm Constraints} \label{subsec: linear inverse with bounded l2 norm}

In this subsection, we make use of Recipe \ref{rec: calc_statistical_dimension_1} to study the phase transition of linear inverse problems with $\ell_2$ norm constraints. In other words, we study the phase transition of the following convex problem:
\begin{equation} \label{eq: linear inverse bounded norm}
		\min f_0(\fx), \quad \st \  \vy = \mA \fx, \ \|\fx\|_2 \le \|\fxs\|_2.
\end{equation}
Note that for $\fxs \neq \vO$, the subdifferential of $\ell_2$ norm is $\partial \|\fxs\|_2 = \big\{ \frac{\fxs}{\|\fxs\|_2}\big\}$. Therefore, applying Recipe \ref{rec: calc_statistical_dimension_1} directly, we obtain the following results:

\begin{corollary} \label{coro: calc_bounded_norm}
Let $\fdc_1$ be the prior restricted cone of problem \eqref{eq: linear inverse bounded norm}, and let $\fxs \in \R^n$ be the true signal. Assume that the subdifferential $\partial f_0(\fxs)$ is non-empty, compact, do not contain the origin. Assume that the descent cones satisfy
	$$
	\ri\big(D(f_0, \fxs)\big) \cap \ri \big(D(\|\cdot\|_2, \fxs)\big) \neq \emptyset.
	$$
	Define the function $J_1: \R^{2}_{+} \rightarrow \R$ to be
	$$
  	J_1(\vtau) = \E \dist^2 \Big[ \big( \vg, \vtau_0 \cdot \partial f_0(\fxs) + \vtau_1 \cdot \frac{\fxs}{\|\fxs\|_2} \big) \Big],
	$$
	where $\vg \sim N(\vO, \mI_{n})$. Then the statistical dimension of the prior restricted cone of problem \eqref{eq: linear inverse bounded norm} has the following upper bound:
	$$
	\delta(\fdc_1) \le \inf_{\vtau \in \R^{2}_{+}} J_1(\vtau).
	$$
	The function $J_1(\vtau)$ is convex, continuous, and continuously differentiable in $\R^{2}_{+}$. It attains its minimum in a compact subset of $\R^2_+$. Moreover, suppose that
	\begin{equation*} \label{eq: sets not equal-bounded}
	  \textnormal{the two sets }\vtau_0 \cdot \partial f_0(\fxs) + \vtau_1 \cdot \frac{\fxs}{\|\fxs\|_2} \textnormal{ and }\tilde{\vtau}_0 \cdot \partial f_0(\fxs) + \tilde{\vtau}_1 \cdot \frac{\fxs}{\|\fxs\|_2} \textnormal{ are not identical for any }\vtau \neq \tilde{\vtau} \in \R^2_+.
  	\end{equation*}
	Then the function $J_1(\vtau)$ is strictly convex, and attains its minimum at a unique point. For the differential of $J_1$ at the boundary of $\R^{2}_+$, we interpret the partial derivative $\frac{\partial J_1}{\partial \vtau_i}$ similarly as the right derivative, if $\vtau_i = 0$. 
\end{corollary}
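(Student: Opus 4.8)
The plan is to read Corollary~\ref{coro: calc_bounded_norm} as the special case $k=1$, $f_1 = \|\cdot\|_2$ of Theorem~\ref{th: calc_statistical_dimension_1}, so that essentially no new argument is needed: problem~\eqref{eq: linear inverse bounded norm} is precisely problem~\eqref{eq: problem with multiple prior} with one extra constraint $f_1(\fx) = \|\fx\|_2 \le \|\fxs\|_2 = f_1(\fxs)$, hence $\fdc_1$ is the prior restricted cone of that instance. It then suffices to (i) check that the hypotheses of Theorem~\ref{th: calc_statistical_dimension_1} hold here and (ii) substitute the subdifferential of the $\ell_2$ norm into its statement.

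First I would verify the structural hypotheses. For $i=0$, the assumptions that $\partial f_0(\fxs)$ is non-empty, compact, and does not contain the origin are given directly. For $i=1$, since the corollary is stated for $\fxs \neq \vO$ (implicit in writing $J_1$ through $\fxs/\|\fxs\|_2$), we have $\partial f_1(\fxs) = \partial\|\fxs\|_2 = \{\fxs/\|\fxs\|_2\}$; a singleton is non-empty and compact, and since $\fxs/\|\fxs\|_2 \neq \vO$ it does not contain the origin. The descent-cone condition $\ri\big(D(f_0,\fxs)\big) \cap \ri\big(D(\|\cdot\|_2,\fxs)\big) \neq \emptyset$ is assumed verbatim, and the uniqueness hypothesis — that $\vtau_0 \cdot \partial f_0(\fxs) + \vtau_1 \cdot \fxs/\|\fxs\|_2$ and $\tilde{\vtau}_0 \cdot \partial f_0(\fxs) + \tilde{\vtau}_1 \cdot \fxs/\|\fxs\|_2$ are not identical for $\vtau \neq \tilde{\vtau} \in \R^2_+$ — is exactly the corresponding hypothesis of Theorem~\ref{th: calc_statistical_dimension_1} after substituting $\partial f_1(\fxs) = \{\fxs/\|\fxs\|_2\}$. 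So every hypothesis transfers.

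Next I would substitute. Because $\vtau_1 \cdot \partial f_1(\fxs) = \{\vtau_1 \fxs/\|\fxs\|_2\}$ is a single point, the function $J$ of Theorem~\ref{th: calc_statistical_dimension_1} collapses to $J(\vtau) = \E\big[\dist^2\big(\vg,\, \vtau_0 \cdot \partial f_0(\fxs) + \vtau_1 \cdot \fxs/\|\fxs\|_2\big)\big]$, which is precisely $J_1$. Theorem~\ref{th: calc_statistical_dimension_1} then yields every assertion of the corollary: the bound $\delta(\fdc_1) \le \inf_{\vtau \in \R^2_+} J_1(\vtau)$; convexity, continuity, and continuous differentiability of $J_1$ on $\R^2_+$; attainment of the minimum on a compact subset of $\R^2_+$; and, under the extra hypothesis, strict convexity with a unique minimizer, together with the right-derivative convention on the boundary of $\R^2_+$.

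I do not expect a real obstacle — all the content lives in Theorem~\ref{th: calc_statistical_dimension_1}. The only point worth a line of care is that the summand $\vtau_1 \cdot \partial f_1(\fxs)$ degenerates to a singleton, so one should note that a point is a legitimate compact summand and that $\vtau_0 \cdot \partial f_0(\fxs) + \{\vtau_1 \fxs/\|\fxs\|_2\}$ is just a translate of $\vtau_0 \cdot \partial f_0(\fxs)$; equivalently, via Lemma~\ref{lem: calc_statistical_dimension}, $N(\|\cdot\|_2,\fxs) = \cone\{\fxs\}$ is a ray and $D(\|\cdot\|_2,\fxs)$ a closed half-space, consistent with the general formula.
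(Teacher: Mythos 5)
Your proposal is correct and matches the paper's own proof, which likewise just applies Theorem~\ref{th: calc_statistical_dimension_1} with $k=1$ and $f_1 = \|\cdot\|_2$, noting $\partial\|\fxs\|_2 = \{\fxs/\|\fxs\|_2\}$. Your extra care in verifying that the singleton subdifferential satisfies the compactness and non-origin hypotheses is a welcome elaboration but not a departure from the paper's route.
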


\begin{proof}
  	Applying Theorem \ref{th: calc_statistical_dimension_1} to problem \eqref{eq: linear inverse bounded norm} directly, we obtain Corollary \ref{coro: calc_bounded_norm}.
\end{proof}

Corollary \ref{coro: calc_bounded_norm} implies that to study the phase transition of problem \eqref{eq: linear inverse bounded norm}, we need to find the infimum of $J_1$. When $f_0$ is a general proper convex function, the infimum may be attained anywhere in $\R^2_+$. However, when $f_0$ is a norm, an important result is that the infimum of $J_1$ must be attained in $\R_+ \times \{0\}$.

\begin{proposition} \label{prop: minimizer of J in bounded case}
  Consider problem \eqref{eq: linear inverse bounded norm}. Assume that $f_0$ is a norm, and the conditions in Corollary \ref{coro: calc_bounded_norm} hold. Define the function $J_2: \R_+ \rightarrow \R$ as
	$$
	J_2(\tau) = \E \dist^2\big(\vg, \tau \cdot \partial f_0(\fxs) \big) \quad \textnormal{for } \tau \ge 0.
	$$
	The function $J_2(\tau)$ is strictly convex, continuously differentiable in $\R_+$, and attains its minimum at a unique point. Moreover, the unique minimizer $\vtau^{\star} = (\vtau^{\star}_0, \vtau^{\star}_1)$ of $J_1$ satisfies $\vtau^{\star}_1 = 0$ and $\vtau^{\star}_0$ is the unique minimizer of $J_2$, and the minimum of $J_1$ over $\R^2_+$ and that of $J_2$ over $\R_+$ are equal: 
	$$
	\inf_{\vtau \in \R^2_+} J_1(\vtau) = J_1(\vtau^{\star}) = J_2(\vtau^{\star}_0) = \inf_{\tau \in \R_+} J_2(\tau).
	$$
\end{proposition}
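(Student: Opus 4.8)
The plan is to exploit a rigidity specific to norms. For $f_0$ a norm and $\fxs \ne \vO$ (which is forced by the hypothesis $\vO \notin \partial f_0(\fxs)$, since $\vO$ always lies in the subdifferential of a norm at the origin), every subgradient $\vz \in \partial f_0(\fxs)$ satisfies $\langle \vz, \fxs \rangle = f_0(\fxs)$: the subgradient inequality at $\vy = 2\fxs$ gives $\langle\vz,\fxs\rangle \le f_0(\fxs)$ and at $\vy = \vO$ gives $\langle\vz,\fxs\rangle \ge f_0(\fxs)$. Writing $\vu := \fxs/\|\fxs\|_2$ and $c := f_0(\fxs)/\|\fxs\|_2 > 0$, this says that $\partial f_0(\fxs)$ lies entirely in the affine hyperplane $\{\vw : \langle\vw,\vu\rangle = c\}$ with unit normal $\vu$. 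Consequently, for every $\vtau = (\vtau_0,\vtau_1) \in \R^2_+$ the set $\vtau_0 \partial f_0(\fxs) + \vtau_1 \vu$ lies in $\{\vw : \langle\vw,\vu\rangle = c\vtau_0 + \vtau_1\}$, while its orthogonal projection onto the hyperplane $\vu^{\perp}$ equals $\vtau_0\,\Pi_{\vu^{\perp}}\big(\partial f_0(\fxs)\big)$ and is independent of $\vtau_1$.

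Next I would split the squared distance along $\vu$: for any nonempty set $T \subseteq \{\vw : \langle\vw,\vu\rangle = a\}$ one has $\dist^2(\vg, T) = (\langle\vg,\vu\rangle - a)^2 + \dist^2\big(\Pi_{\vu^{\perp}}(\vg), \Pi_{\vu^{\perp}}(T)\big)$. Applying this with $T = \vtau_0\partial f_0(\fxs) + \vtau_1\vu$ and taking expectations (using linearity of expectation and $\E\langle\vg,\vu\rangle^2 = \|\vu\|_2^2 = 1$, $\E\langle\vg,\vu\rangle = 0$) yields
$$
J_1(\vtau) = 1 + (c\,\vtau_0 + \vtau_1)^2 + g(\vtau_0), \qquad g(\vtau_0) := \E\,\dist^2\!\big(\Pi_{\vu^{\perp}}(\vg),\; \vtau_0\,\Pi_{\vu^{\perp}}(\partial f_0(\fxs))\big).
$$
Since $c > 0$ and $\vtau_0,\vtau_1 \ge 0$, for each fixed $\vtau_0$ the map $\vtau_1 \mapsto (c\vtau_0 + \vtau_1)^2$ is strictly increasing on $\R_+$, so $J_1(\vtau_0,\vtau_1) \ge J_1(\vtau_0,0)$ with equality only at $\vtau_1 = 0$. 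As $J_1(\vtau_0,0) = \E\,\dist^2(\vg, \vtau_0\partial f_0(\fxs)) = J_2(\vtau_0)$ directly from the definitions, this already gives $\inf_{\vtau\in\R^2_+} J_1(\vtau) = \inf_{\tau\in\R_+} J_2(\tau)$ and shows every minimizer of $J_1$ must have $\vtau_1^{\star} = 0$.

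It then remains to handle $J_2$, which is exactly the instance of Theorem \ref{th: calc_statistical_dimension_1} with $k = 0$ and the single function $f_0$; its hypotheses hold here once I verify the ``sets not identical'' condition, i.e. $\tau\,\partial f_0(\fxs) \ne \tilde\tau\,\partial f_0(\fxs)$ whenever $\tau \ne \tilde\tau$ in $\R_+$. This is immediate: $\partial f_0(\fxs)$ is compact and bounded away from $\vO$, so an equality $\partial f_0(\fxs) = (\tilde\tau/\tau)\,\partial f_0(\fxs)$ with $\tau,\tilde\tau > 0$, $\tau \ne \tilde\tau$, iterated, would map a nonzero subgradient either to $\vO$ or off to infinity while staying inside a fixed bounded set, which is impossible; and $\tau = 0$ is clear since $0\cdot\partial f_0(\fxs) = \{\vO\}$. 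Theorem \ref{th: calc_statistical_dimension_1} then supplies that $J_2$ is strictly convex, continuously differentiable on $\R_+$ (right derivative at $0$), and attains its minimum at a unique point $\vtau_0^{\star}$. Combining with the previous paragraph and with Corollary \ref{coro: calc_bounded_norm} (which already guarantees $J_1$ attains its minimum): every minimizer of $J_1$ is of the form $(\vtau_0,0)$ with $\vtau_0$ a minimizer of $J_2$, hence $\vtau_0 = \vtau_0^{\star}$; therefore $(\vtau_0^{\star},0)$ is the unique minimizer of $J_1$ and $\inf_{\R^2_+}J_1 = J_1(\vtau_0^{\star},0) = J_2(\vtau_0^{\star}) = \inf_{\R_+}J_2$, as claimed.

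The main obstacle is really the single observation in the first paragraph that for a norm the entire subdifferential $\partial f_0(\fxs)$ sits at a constant positive height $c$ in the direction $\vu = \fxs/\|\fxs\|_2$; once this is in hand, the $\vu$-versus-$\vu^{\perp}$ split mechanically isolates the $\vtau_1$-dependence into the single strictly increasing term $(c\vtau_0+\vtau_1)^2$, and the remaining claims reduce to the already-established one-function recipe.
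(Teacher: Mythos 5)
Your proof is correct, and it takes a genuinely different route from the paper's. Both arguments hinge on the same key fact about norms --- every $\vs \in \partial f_0(\fxs)$ satisfies $\left<\vs,\fxs\right> = f_0(\fxs) > 0$, so the subdifferential sits at constant positive height in the direction $\vu = \fxs/\|\fxs\|_2$ --- but you exploit it by an orthogonal split of the squared distance along $\vu$ versus $\vu^{\perp}$, obtaining the closed form $J_1(\vtau) = 1 + (c\vtau_0+\vtau_1)^2 + g(\vtau_0)$ in which the $\vtau_1$-dependence is isolated in an explicitly increasing quadratic term. The paper instead differentiates: it computes $\frac{\partial J_1}{\partial \vtau_1}(\vtau) = 2\vtau_0\, f_0(\fxs/\|\fxs\|_2) + 2\vtau_1 \ge 0$ (using the derivative formula for the expected distance from Lemma \ref{lem: expected distance to the sum of multiple sets}) and then rules out $\vtau_1^{\star} > 0$ via a first-order argument that leans on the strict convexity and unique minimizer of $J_1$ guaranteed by Corollary \ref{coro: calc_bounded_norm}. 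Your decomposition is more elementary and self-contained: it needs neither the differentiability machinery nor strict convexity of $J_1$, it yields the pointwise inequality $J_1(\vtau_0,\vtau_1) > J_1(\vtau_0,0)$ for $\vtau_1 > 0$ directly, and it derives uniqueness of the $J_1$-minimizer from uniqueness of the $J_2$-minimizer rather than assuming it. The closing step is a citation in both treatments --- the paper invokes \cite[Proposition 4.1]{AMEL2014} for the properties of $J_2$, while you instantiate Theorem \ref{th: calc_statistical_dimension_1} with $k=0$ after verifying the distinct-sets condition (your verification is fine; it also follows even more quickly from the heights $\tau f_0(\fxs) \neq \tilde{\tau} f_0(\fxs)$) --- so the two are equivalent there. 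The one thing the paper's derivative computation buys that yours does not make explicit is the exact value of $\frac{\partial J_1}{\partial \vtau_1}$, but that is not needed for the proposition.
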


\begin{proof}
  	The first part of this proposition, i.e., the properties of $J_2(\tau)$, has been proved in \cite{AMEL2014}. For the proof of the results about $\vtau^{\star}$, please see Appendix \ref{subsec: minimizer of J in bounded case}.
\end{proof}

\begin{remark}
  Let $\fdc_2$ denote the prior restricted cone of problem \eqref{eq: problem no addi_prior}, i.e., $\fdc_2 = D(f_0, \fxs)$. In \cite{AMEL2014}, Amelunxen \textit{et al.} have proved that $\inf_{\tau \in \R_+} J_2(\tau)$ is a reliable estimate of $\delta(\fdc_2)$. Therefore, Proposition \ref{prop: minimizer of J in bounded case} implies that when we use Recipe \ref{rec: calc_statistical_dimension_1} to compute the statistical dimension of the prior restricted cone of problem \eqref{eq: linear inverse bounded norm}, the obtained phase transition point is exactly the same as that of problem \eqref{eq: problem no addi_prior}.
\end{remark}

At the first sight, the above results may be surprising, since we have more prior information, but the obtained phase transition point is the same. From another point of view, this implies that if our Recipe \ref{rec: calc_statistical_dimension_1} can provide an accurate estimation of the statistical dimension, $\delta(\fdc_1)$ and $\delta(\fdc_2)$ must be nearly equal. Actually, we can verify that this is the case.

\begin{proposition} \label{prop: equivalence of statistical dimensions}
  	Let $\fdc_1$ and $\fdc_2$ denote the prior restricted cones of problem \eqref{eq: linear inverse bounded norm} and problem \eqref{eq: problem no addi_prior}, respectively. Assume that $f_0$ is a norm. Then
	\begin{equation} \label{eq: statistical dimension are equal}
	  	\delta(\fdc_1) \le \delta(\fdc_2) \le \delta(\fdc_1) + \frac{1}{2}.
	\end{equation}
\end{proposition}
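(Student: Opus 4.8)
The plan is to pass to polar cones and read off the gap between $\delta(\fdc_1)$ and $\delta(\fdc_2)$ as the difference between $\delta$ of a sum of normal cones and $\delta$ of one of them, which is controlled by the sub-additivity property in Fact~\ref{fact: statistical dimension of sum} together with the elementary fact that a ray has statistical dimension $\tfrac12$. Throughout I may assume $\fxs\neq\vO$, since otherwise the constraint $\|\fx\|_2\le\|\fxs\|_2$ forces $\fx=\vO$ and both cones collapse to $\{\vO\}$, so \eqref{eq: statistical dimension are equal} holds trivially. The first inequality $\delta(\fdc_1)\le\delta(\fdc_2)$ is immediate from the definition of the prior restricted cone: dropping the constraint $\|\fxs+t\vu\|_2\le\|\fxs\|_2$ only enlarges the cone, so $\fdc_1\subseteq D(f_0,\fxs)=\fdc_2$, and $\delta$ is monotone under inclusion \cite[Proposition 3.1]{AMEL2014}.

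For the second inequality I would first check that Lemma~\ref{lem: calc_statistical_dimension} applies to problem~\eqref{eq: linear inverse bounded norm}. Since $f_0$ is a norm, it is continuous with $f_0(\vO)=0<f_0(\fxs)$, so a small ball around $-\fxs$ is contained in $\{\vd:f_0(\fxs+\vd)\le f_0(\fxs)\}\subseteq D(f_0,\fxs)$; also $\langle\fxs,-\fxs\rangle<0$ puts $-\fxs$ in the interior of $D(\|\cdot\|_2,\fxs)$. Hence $-\fxs\in\ri\big(D(f_0,\fxs)\big)\cap\ri\big(D(\|\cdot\|_2,\fxs)\big)$, and Lemma~\ref{lem: calc_statistical_dimension} gives $\fdc_1^{\circ}=N(f_0,\fxs)+N(\|\cdot\|_2,\fxs)$ while $\fdc_2^{\circ}=N(f_0,\fxs)$. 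Abbreviating $N_0:=N(f_0,\fxs)$ and using $\partial\|\fxs\|_2=\{\fxs/\|\fxs\|_2\}$ so that $N(\|\cdot\|_2,\fxs)=R:=\{t\fxs:t\ge0\}$, the duality $\delta(\K)+\delta(\K^{\circ})=n$ \cite[Proposition 3.1]{AMEL2014} gives
$$
\delta(\fdc_2)-\delta(\fdc_1)=\delta(\fdc_1^{\circ})-\delta(\fdc_2^{\circ})=\delta(N_0+R)-\delta(N_0).
$$

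It remains to bound $\delta(N_0+R)-\delta(N_0)$ by $\tfrac12$. Because $-\fxs\in D(f_0,\fxs)$ (indeed $f_0(\fxs-a\fxs)=|1-a|f_0(\fxs)\le f_0(\fxs)$ for $0\le a\le 1$) and $N_0=D(f_0,\fxs)^{\circ}$, every $\va\in N_0$ satisfies $\langle\va,\fxs\rangle\ge 0$; hence $\langle\va,\vb\rangle\ge0$ for all $\va\in N_0$ and $\vb\in R$, and Fact~\ref{fact: statistical dimension of sum}(3) yields $\delta(N_0+R)\le\delta(N_0)+\delta(R)$. Finally $\delta(R)=\tfrac12$, since for $\vg\sim N(\vO,\mI_n)$ the supremum $\sup_{\vt\in R\cap\B^n}\langle\vg,\vt\rangle$ equals $\max\{0,\langle\vg,\fxs/\|\fxs\|_2\rangle\}$, whose square has expectation $\tfrac12$. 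Combining the two displays gives $\delta(\fdc_2)-\delta(\fdc_1)\le\tfrac12$, which together with the first inequality proves \eqref{eq: statistical dimension are equal}.

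The only delicate point I anticipate is making sure the hypothesis of Lemma~\ref{lem: calc_statistical_dimension} is genuinely in force, so that $\fdc_1^{\circ}$ equals $N_0+R$ on the nose rather than its closure, and that $\delta$ is insensitive to the passage to closures implicit in the polarity identities; both follow from \cite[Proposition 3.1]{AMEL2014} and the interior computation above. Apart from this bookkeeping, the argument is a short polarity-plus-sub-additivity computation, with the hypothesis that $f_0$ is a norm entering only through the two observations $-\fxs\in\interior\big(D(f_0,\fxs)\big)$ and $-\fxs\in D(f_0,\fxs)$.
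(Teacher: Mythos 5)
Your argument is correct and follows essentially the same route as the paper: pass to polars via $\fdc_1^{\circ}=N(f_0,\fxs)+N(\|\cdot\|_2,\fxs)$ and $\fdc_2^{\circ}=N(f_0,\fxs)$, observe that the two normal cones have pairwise nonnegative inner products, apply Fact~\ref{fact: statistical dimension of sum}(3) with $\delta$ of the ray equal to $\tfrac12$, and finish with the complementarity identity $\delta(\K)+\delta(\K^{\circ})=n$. The only differences are cosmetic (you verify the relative-interior hypothesis of Lemma~\ref{lem: calc_statistical_dimension} explicitly and derive $\langle\va,\vb\rangle\ge0$ from $N_0=D(f_0,\fxs)^{\circ}$ rather than from the subdifferential characterization of a norm), so no further comment is needed.
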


\begin{proof}
  See Appendix \ref{subsec: proof of equivalence of statistical dimensions}.
\end{proof}

\begin{remark}
  Proposition \ref{prop: equivalence of statistical dimensions} implies that in the case when $f_0$ is a norm, the additional $\ell_2$ norm constraint $\|\fx\|_2 \le \|\fxs\|_2$ has little effect on the phase transition of linear inverse problem. Moreover, since $\inf_{\tau \in \R_+} J_2(\tau)$ is an accurate estimate of $\delta(\fdc_2)$, it follows that $\inf_{\vtau \in \R^2_+} J_1(\vtau)$ is an accurate estimate of $\delta(\fdc_1)$. 
\end{remark}


Using the above results, we can obtain an error bound for Recipe \ref{rec: calc_statistical_dimension_1} when applied to problem \eqref{eq: linear inverse bounded norm}. 
\begin{proposition} \label{prop: error bound for Recipe 1}
  Consider problem \eqref{eq: linear inverse bounded norm} to recover $\fxs$. Assume that $f_0$ is a norm and denote $\fdc_1$ the prior restricted cone of problem \eqref{eq: linear inverse bounded norm}. Then under the conditions of Corollary \ref{coro: calc_bounded_norm}, we have
  	\begin{equation} \label{eq: error bound for Recipe 1}
	  	0 \le \inf_{\vtau \in \R^2_+} J_1(\vtau) - \delta(\fdc_1) \le \frac{2 \sup \{ \|\vs\|_2: \vs \in \partial f_0(\fxs) \}}{f_0(\fxs/\|\fxs\|_2)} + \frac{1}{2}.
	\end{equation}
\end{proposition}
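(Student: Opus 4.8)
The plan is to assemble three facts that are already at hand. The lower bound $0 \le \inf_{\vtau \in \R^2_+} J_1(\vtau) - \delta(\fdc_1)$ is nothing but the upper-bound assertion of Corollary~\ref{coro: calc_bounded_norm} (equivalently Theorem~\ref{th: calc_statistical_dimension_1}), so all the work lies in the upper bound. First I would use Proposition~\ref{prop: minimizer of J in bounded case}: because $f_0$ is a norm, the minimizer of $J_1$ over $\R^2_+$ has a vanishing second coordinate, whence
$$\inf_{\vtau \in \R^2_+} J_1(\vtau) \;=\; \inf_{\tau \in \R_+} J_2(\tau), \qquad \text{where } J_2(\tau) = \E \dist^2\big(\vg,\, \tau \cdot \partial f_0(\fxs)\big).$$
This reduces the task to controlling the one-parameter recipe for $f_0$ alone.

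Next I would invoke the error estimate of Amelunxen \emph{et al.} for that recipe. Since $f_0$ is a norm and $\partial f_0(\fxs)$ is compact and bounded away from the origin, their analysis gives
$$\inf_{\tau \in \R_+} J_2(\tau) \;\le\; \delta\big(D(f_0,\fxs)\big) + \frac{2 \sup\{\|\vs\|_2 : \vs \in \partial f_0(\fxs)\}}{f_0(\fxs/\|\fxs\|_2)} \;=\; \delta(\fdc_2) + \frac{2 \sup\{\|\vs\|_2 : \vs \in \partial f_0(\fxs)\}}{f_0(\fxs/\|\fxs\|_2)},$$
using that the prior restricted cone of problem~\eqref{eq: problem no addi_prior} is exactly $D(f_0,\fxs)=\fdc_2$. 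Finally, Proposition~\ref{prop: equivalence of statistical dimensions} supplies $\delta(\fdc_2) \le \delta(\fdc_1) + \tfrac12$, and chaining the three relations yields
$$\inf_{\vtau \in \R^2_+} J_1(\vtau) \;\le\; \delta(\fdc_1) + \frac{2 \sup\{\|\vs\|_2 : \vs \in \partial f_0(\fxs)\}}{f_0(\fxs/\|\fxs\|_2)} + \frac12,$$
which rearranges to \eqref{eq: error bound for Recipe 1}.

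The only delicate point is the middle step: one must check that the bound in \cite{AMEL2014} is available precisely in the displayed form, with $\sup_{\vs \in \partial f_0(\fxs)}\|\vs\|_2$ in the numerator and $f_0(\fxs/\|\fxs\|_2)$ in the denominator. If the reference states it under a slightly different normalization, I would give a short self-contained argument instead: for a suitable scale $\tau_\star>0$ (of the order of the ratio of $\E\dist(\vg, N(f_0,\fxs))$ to a subgradient norm, exactly as in the Amelunxen \emph{et al.} proof), expand $\E\dist^2(\vg, \tau_\star \partial f_0(\fxs))$ around $\E\dist^2(\vg, N(f_0,\fxs)) = \delta(D(f_0,\fxs))$ and bound the discrepancy using that $\partial f_0(\fxs)$ is a bounded generating set for $N(f_0,\fxs)$ together with the identity $\langle \vs, \fxs/\|\fxs\|_2\rangle = f_0(\fxs/\|\fxs\|_2)$, valid for every $\vs \in \partial f_0(\fxs)$ when $f_0$ is a norm. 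I expect no genuine obstruction here; the proposition is essentially a bookkeeping consequence of Propositions~\ref{prop: minimizer of J in bounded case} and~\ref{prop: equivalence of statistical dimensions} and the known scalar error bound.
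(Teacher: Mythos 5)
Your proposal is correct and follows exactly the route the paper takes: the paper's own (omitted) proof cites precisely the three ingredients you chain together, namely the scalar error bound of Amelunxen \emph{et al.} (\cite[Theorem 4.3]{AMEL2014}), Proposition~\ref{prop: minimizer of J in bounded case} giving $\inf_{\vtau\in\R^2_+}J_1(\vtau)=\inf_{\tau\in\R_+}J_2(\tau)$, and Proposition~\ref{prop: equivalence of statistical dimensions} giving $\delta(\fdc_2)\le\delta(\fdc_1)+\tfrac12$. The cited bound is indeed stated in the displayed normalization, so no further argument is needed.
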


\begin{proof}
  	This is a direct consequence of \cite[Theorem 4.3]{AMEL2014}, Proposition \ref{prop: minimizer of J in bounded case}, and Proposition \ref{prop: equivalence of statistical dimensions}. We omit the proof.
\end{proof}

Proposition \ref{prop: error bound for Recipe 1} implies that our Recipe \ref{rec: calc_statistical_dimension_1} can provide an accurate estimate of the statistical dimension of the prior restricted cone, when applied to problem \eqref{eq: linear inverse bounded norm}. Next, as a more concrete example, let us study the phase transition of the compressed sensing problem with $\ell_2$ norm constraints:
\begin{equation} \label{eq: compressed sensing with bounded norm}
		\min \|\fx\|_1, \quad \st \  \vy = \mA \fx, \ \|\fx\|_2 \le \|\fxs\|_2.
\end{equation}
We have the following results:

\begin{corollary} \label{coro: calc_CS_bounded_norm}
	Let $\fdc_1$ be the prior restricted cone of problem \eqref{eq: compressed sensing with bounded norm}, and $\fxs \in \R^n$ be the true signal. Assume that $\fxs$ has exactly $s$ non-zero entries. Then the statistical dimension of the prior restricted cone of problem \eqref{eq: compressed sensing with bounded norm} satisfies
	$$
	\psi_1(s/n) - \frac{2}{\sqrt{sn}} - \frac{1}{2n} \le \frac{\delta(\fdc_1)}{n} \le \psi_1(s/n),
	$$
	where the function $\psi_1:[0,1] \rightarrow [0,1]$ is
	\begin{equation} \label{eq: function psi_1}
		\psi_1(\rho) = \inf_{\tau \in \R_+} \Big\{ \rho (1+\tau^2) + (1-\rho)\int_{\tau}^{\infty}(u - \tau)^2 \cdot \varphi(u) \mathrm{d}u \Big\},
  	\end{equation}
	where the function $\varphi(u) = \sqrt{\frac{2}{\pi}} e^{-u^2/2}$. The infimum is attained at the unique $\tau$, which solves the stationary equation
	$$
	\frac{\rho}{1-\rho} = \int_{\tau}^{\infty} \Big( \frac{u}{\tau} - 1 \Big) \cdot \varphi(u) \mathrm{d}u.
	$$
\end{corollary}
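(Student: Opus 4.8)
The plan is to specialize Corollary~\ref{coro: calc_bounded_norm}, Proposition~\ref{prop: minimizer of J in bounded case}, and Proposition~\ref{prop: equivalence of statistical dimensions} to $f_0 = \|\cdot\|_1$, which collapses the two-parameter minimization of $J_1$ to a one-dimensional minimization over $\tau \ge 0$ plus one explicit Gaussian integral. Write $S$ for the support of $\fxs$, so $|S| = s$, and set $\rho = s/n$. For $\fxs \neq \vO$ the $\ell_1$-subdifferential is the box $\partial\|\fxs\|_1 = \{\vz \in \R^n : \vz_i = \operatorname{sign}(\fxs_i)\ \text{for}\ i \in S,\ |\vz_i| \le 1\ \text{for}\ i \notin S\}$, which is non-empty, compact, and avoids the origin; the relative interiors of $D(\|\cdot\|_1,\fxs)$ and $D(\|\cdot\|_2,\fxs)$ both contain $-\fxs$; and the map $\vtau \mapsto \vtau_0\,\partial\|\fxs\|_1 + \vtau_1\,\fxs/\|\fxs\|_2$ is injective, since the off-support half-width of this set fixes $\vtau_0$ and its restriction to $S$ then fixes $\vtau_1$. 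Hence the hypotheses of Corollary~\ref{coro: calc_bounded_norm} and Proposition~\ref{prop: minimizer of J in bounded case} hold.

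By Proposition~\ref{prop: minimizer of J in bounded case}, $\inf_{\vtau \in \R^2_+} J_1(\vtau) = \inf_{\tau \ge 0} J_2(\tau)$ with $J_2(\tau) = \E\,\dist^2(\vg, \tau\,\partial\|\fxs\|_1)$, and Corollary~\ref{coro: calc_bounded_norm} gives $\delta(\fdc_1) \le \inf_{\tau \ge 0} J_2(\tau)$. The squared distance to the box $\tau\,\partial\|\fxs\|_1$ separates over coordinates: on each of the $s$ coordinates in $S$ it equals $(\vg_i - \tau\operatorname{sign}(\fxs_i))^2$, with expectation $1+\tau^2$; on each of the $n-s$ coordinates off $S$ it equals $\dist^2(\vg_i,[-\tau,\tau]) = (|\vg_i|-\tau)_+^2$, with expectation $\int_\tau^\infty (u-\tau)^2\varphi(u)\,\mathrm{d}u$, since the folded standard-normal density on $[0,\infty)$ is precisely $\varphi$. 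Summing, $J_2(\tau) = n\big[\rho(1+\tau^2) + (1-\rho)\int_\tau^\infty (u-\tau)^2\varphi(u)\,\mathrm{d}u\big]$, hence $\inf_{\tau\ge 0} J_2(\tau) = n\,\psi_1(\rho)$ for $\psi_1$ as in \eqref{eq: function psi_1}, which yields the upper bound $\delta(\fdc_1)/n \le \psi_1(s/n)$. Differentiating the bracketed function of $\tau$ in \eqref{eq: function psi_1}, the Leibniz boundary term vanishes because $(u-\tau)^2\varphi(u) = 0$ at $u = \tau$, leaving $2\rho\tau - 2(1-\rho)\int_\tau^\infty(u-\tau)\varphi(u)\,\mathrm{d}u$; setting this equal to $0$ and dividing by $2(1-\rho)\tau$ gives the stated stationary equation, and strict convexity of $J_2$ (Proposition~\ref{prop: minimizer of J in bounded case}) together with the coercivity from the $\rho\tau^2$ term make its solution unique.

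For the lower bound, let $\fdc_2 = D(\|\cdot\|_1,\fxs)$ be the prior restricted cone of problem~\eqref{eq: problem no addi_prior}. By Proposition~\ref{prop: equivalence of statistical dimensions}, $\delta(\fdc_2) - \tfrac{1}{2} \le \delta(\fdc_1) \le \delta(\fdc_2)$, so it suffices to prove $\delta(\fdc_2) \ge n\,\psi_1(\rho) - 2\sqrt{n/s}$. This is the sharp estimate for the statistical dimension of the $\ell_1$-descent cone from \cite{AMEL2014}: the Jensen gap between $\inf_{\tau\ge 0}\E\,\dist^2(\vg,\tau\,\partial\|\fxs\|_1)$ and $\E\,\dist^2(\vg,\cone(\partial\|\fxs\|_1)) = \delta(\fdc_2)$ is controlled by the $s$ support coordinates alone and is at most $2\sqrt{n/s}$. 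Combining, $\delta(\fdc_1) \ge n\,\psi_1(\rho) - 2\sqrt{n/s} - \tfrac{1}{2}$, and dividing by $n$ completes the proof.

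The step I expect to require the most care is the constant $2\sqrt{n/s}$ in the lower bound. The generic error bound in Proposition~\ref{prop: error bound for Recipe 1} is $2\sup\{\|\vs\|_2 : \vs \in \partial f_0(\fxs)\}/f_0(\fxs/\|\fxs\|_2) + \tfrac{1}{2}$; for $\ell_1$ the numerator is $2\sqrt n$ while the denominator $\|\fxs\|_1/\|\fxs\|_2$ is only guaranteed by Cauchy--Schwarz to lie in $[1,\sqrt s]$, with equality $\sqrt s$ exactly when the nonzero entries of $\fxs$ are equal in magnitude. Hence one cannot reach $2\sqrt{n/s}$ for an arbitrary $s$-sparse $\fxs$ by invoking Proposition~\ref{prop: error bound for Recipe 1} directly; one must either use that $D(\|\cdot\|_1,\fxs)$, and therefore $\delta(\fdc_2)$, depends on $\fxs$ only through its support and sign pattern --- allowing a reduction to the equal-magnitude case --- or appeal to the $\ell_1$-specific sharpening in \cite{AMEL2014}, in which the Jensen slack is localized to the $s$ support coordinates. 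I would take the latter route so that the constant agrees with the cited bound.
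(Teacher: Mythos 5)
Your proof is correct and follows the same route the paper intends (the paper omits the details): Proposition~\ref{prop: minimizer of J in bounded case} collapses $J_1$ to $J_2$, the coordinate-wise Gaussian computation gives $J_2(\tau) = n\big[\rho(1+\tau^2)+(1-\rho)\int_\tau^\infty (u-\tau)^2\varphi(u)\,\mathrm{d}u\big]$ and hence the upper bound $n\psi_1(s/n)$, and the lower bound is transferred from $\delta(\fdc_2)$ via Proposition~\ref{prop: equivalence of statistical dimensions} together with the $\ell_1$-specific estimate of \cite{AMEL2014}. Your closing observation is also well taken: invoking Proposition~\ref{prop: error bound for Recipe 1} directly yields the constant $2/\sqrt{sn}$ only when the nonzero entries of $\fxs$ have equal magnitude, so routing the lower bound through $\delta(\fdc_2)$ --- which depends on $\fxs$ only through its support and sign pattern --- is the right way to obtain the stated constant for an arbitrary $s$-sparse signal.
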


\begin{proof}
  	This is a direct consequence of Proposition \ref{prop: minimizer of J in bounded case}, Proposition \ref{prop: equivalence of statistical dimensions}, Proposition \ref{prop: error bound for Recipe 1} and \cite[Proposition 4.1]{AMEL2014}. We omit the proof.
\end{proof}

\subsection{Phase Transition of Linear Inverse Problem with Non-negativity Constraints} \label{subsec: linear inverse with non-negativity}

In this subsection, we make use of Recipe \ref{rec: calc_statistical_dimension_2} to study the phase transition of linear inverse problems with non-negativity constraints, i.e., problem \eqref{eq: linear inverse nonnegative}. We have confirmed that the subdifferential of $I_{\R^{n}_+}$ is unbounded and contains the origin. Therefore, to apply Recipe \ref{rec: calc_statistical_dimension_2}, we have to find the normal cone $N(I_{\R^n_+}, \fxs)$ directly. Indeed, notice that the descent cone of $I_{\R^{n}_+}$ at $\fxs$ is 
$$
D(I_{\R^{n}_+}, \fxs) = \bigcup_{\upsilon \ge 0} \big\{ \upsilon (\fx - \fxs): \fx \in \R^{n}_+ \big\} = \big\{\fx \in \R^n: \fx_i \in \R \ \textnormal{if} \ \fxs_i > 0, \ \fx_i \in [0,\infty) \ \textnormal{if} \ \fxs_i = 0 \big\}.
$$
Thus, the normal cone $N(I_{\R^n_+}, \fxs)$ is
\begin{equation} \label{eq: normal cone of nonegative}
	N \coloneqq N(I_{\R^{n}_+}, \fxs) = \big\{ \fx \in \R^n: \fx_i = 0 \ \textnormal{if} \ \fxs_i > 0, \ \fx_i \in (-\infty, 0] \ \textnormal{if} \ \fxs_i = 0\big\}.
\end{equation}
Applying Theorem \ref{th: calc_statistical_dimension_2}, we obtain the following results.
\begin{corollary} \label{coro: calc linear inverse nonnegative}
  	Consider the convex problem with non-negativity constraint \eqref{eq: linear inverse nonnegative}, and denote $\fdc_3$ its prior restricted cone. Suppose that $\partial f_0(\fxs)$ is non-empty, compact, and does not contain the origin. Assume that the descent cones satisfy
	$$
	\ri\big(D(f_0, \fxs)\big) \cap \ri\big(D(I_{\R^n_+}, \fxs)\big) \neq \emptyset.
	$$
	Assume that $N + \partial f(\fxs)$ does not contain the origin. Define the function $J_3: \R_+ \rightarrow \R$ to be
	$$
	J_3(\tau) = \E \dist^2 \big(\vg, N + \tau \cdot \partial f(\fxs) \big),
	$$
	where $\vg \sim N(\vO, \mI_n)$. Then the statistical dimension of the prior restricted cone of problem \eqref{eq: linear inverse nonnegative} has the following bound:
	\begin{equation} \label{eq: calc linear inverse nonnegative}
	  	\delta (\fdc_3) \le \inf_{\tau \ge 0} J_3(\tau).
	\end{equation}
	The function $J_3(\tau)$ is convex, continuous, and continuously differentiable in $\R_+$. It attains its minimum in a compact subset of $\R_+$. Moreover, suppose that for any $\tau \neq \tilde{\tau} \in \R_+$, the two sets $N+ \tau \cdot \partial f(\fxs)$ and $N+ \tilde{\tau}\cdot \partial f(\fxs)$ are not identical. Then the function $J_3(\tau)$ is strictly convex, continuously differentiable for $\tau \ge 0$, and attains its minimum at a unique point. For the derivative of $J_3$ at the origin, we interpret it as the right derivative.
\end{corollary}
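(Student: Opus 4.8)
The plan is to derive Corollary~\ref{coro: calc linear inverse nonnegative} as the specialization of Theorem~\ref{th: calc_statistical_dimension_2} to the case $k=1$, $q=0$, with $f_1 = I_{\R^n_+}$. First I would record the reformulation already used above: for $\fxs \in \R^n_+$, the constraint $\fx \ge \vO$ in problem~\eqref{eq: linear inverse nonnegative} is equivalent to $I_{\R^n_+}(\fx) \le I_{\R^n_+}(\fxs)$, so \eqref{eq: linear inverse nonnegative} is an instance of \eqref{eq: problem with multiple prior} with structure-inducing function $f_0$ and single prior function $f_1 = I_{\R^n_+}$; its prior restricted cone is $\fdc_3$. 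The normal cone $N = N(I_{\R^n_+}, \fxs)$ needed to apply Theorem~\ref{th: calc_statistical_dimension_2} has already been computed in \eqref{eq: normal cone of nonegative}.

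Next I would verify, one by one, the hypotheses of Theorem~\ref{th: calc_statistical_dimension_2} with $q=0$ and $k=1$. The index range $0 \le i \le q$ reduces to $i=0$, where $\partial f_0(\fxs)$ is non-empty, compact and does not contain the origin by assumption; the range $q < i \le k$ reduces to $i=1$, where $\partial f_1(\fxs) = \partial I_{\R^n_+}(\fxs) = N(\fxs; \R^n_+)$ by \eqref{eq: subdifferential of indicator function} is non-empty (it contains $\vO$). The relative-interior condition on the descent cones is assumed in the corollary. It then remains to translate the two abstract conditions of Theorem~\ref{th: calc_statistical_dimension_2}. The key observations are that $\S^{q} \cap \R^{q+1}_{+} = \S^{0} \cap \R_{+} = \{1\}$ when $q = 0$, and that the cone $N$ in \eqref{eq: normal cone of nonegative} is closed (being cut out by finitely many linear equalities and inequalities), so $\closure(N) = N$. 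Hence the hypothesis ``$\vO \notin \closure\big(\sum_{i=q+1}^{k} N(f_i,\fxs)\big) + \sum_{i=0}^{q} \vtau_i \cdot \partial f_i(\fxs)$ for all $\vtau \in \S^{q}\cap\R^{q+1}_{+}$'' becomes exactly ``$\vO \notin N + \partial f_0(\fxs)$'', and the uniqueness hypothesis becomes ``$N + \tau \cdot \partial f_0(\fxs)$ and $N + \tilde\tau\cdot \partial f_0(\fxs)$ are not identical for $\tau \neq \tilde\tau \in \R_{+}$'' --- precisely the assumptions stated in the corollary.

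With the hypotheses checked, I would invoke Theorem~\ref{th: calc_statistical_dimension_2} directly. Its function $J\colon\R^{q+1}_{+}\to\R$, with $q=0$, is $J(\vtau_0) = \E\big[\dist^2(\vg,\, \vtau_0 \cdot \partial f_0(\fxs) + N)\big]$, which upon writing $\tau = \vtau_0$ is identically $J_3(\tau)$. Therefore the bound $\delta(\fdc_3) \le \inf_{\tau \ge 0} J_3(\tau)$ and all the claimed analytic properties of $J_3$ (convexity, continuity, continuous differentiability on $\R_{+}$, attainment of the minimum on a compact subset of $\R_{+}$, and, under the extra assumption, strict convexity with a unique minimizer), together with the convention of reading the derivative at the origin as a right derivative, are inherited verbatim from the corresponding parts of Theorem~\ref{th: calc_statistical_dimension_2}. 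I do not expect a genuine obstacle here: the argument is almost entirely the bookkeeping of the second paragraph, and the only point requiring an actual (short) observation is that $N(I_{\R^n_+}, \fxs)$ is closed, which lets the closure operators in Theorem~\ref{th: calc_statistical_dimension_2} be dropped so that its abstract conditions collapse to the concrete ones stated in the corollary.
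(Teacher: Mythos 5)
Your proposal is correct and matches the paper's own argument, which simply states that the corollary ``follows from Theorem \ref{th: calc_statistical_dimension_2} directly''; you have filled in the routine bookkeeping (setting $k=1$, $q=0$, noting $\S^{0}\cap\R_{+}=\{1\}$, and observing that $N$ is closed so the closure operators drop out) that the paper leaves implicit.
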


\begin{proof}
  	Corollary \ref{coro: calc linear inverse nonnegative} follows from Theorem \ref{th: calc_statistical_dimension_2} directly.
\end{proof}

In the case of $f_0$ is some norm, we can obtain a reverse bound of \eqref{eq: calc linear inverse nonnegative}:

\begin{proposition} \label{prop: error bound for calc}
  	Let $f_0$ be some norm on $\R^n$, and $\fxs \in \R^n$. Then under the conditions of Corollary \ref{coro: calc linear inverse nonnegative}, if $N + \cone \big(\partial f_0(\fxs) \big)$ is closed, we have the error bound
	$$
	0 \le \inf_{\tau \ge 0} J_3(\tau) - \delta (\fdc_3) \le \frac{2 \sup \{\|\vs\|_2: \vs \in \partial f_0(\fxs)\}}{f_0(\fxs / \|\fxs\|_2)}.
	$$
\end{proposition}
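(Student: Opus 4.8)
The plan is to combine the upper bound $\delta(\fdc_3)\le\inf_{\tau\ge0}J_3(\tau)$ from Corollary~\ref{coro: calc linear inverse nonnegative} with a matching lower bound on $\delta(\fdc_3)$, so that the gap is controlled by the stated quantity. The natural tool for the lower bound is the polarity/additivity machinery already in place: by Lemma~\ref{lem: calc_statistical_dimension} (whose hypotheses are granted), $\fdc_3^\circ=N+N(f_0,\fxs)=N+\cone(\partial f_0(\fxs))$, and since we assume this sum is closed, $\delta(\fdc_3)=\E\dist^2\big(\vg,N+\cone(\partial f_0(\fxs))\big)$. So the whole problem reduces to comparing $\E\dist^2\big(\vg,N+\cone(\partial f_0(\fxs))\big)$ with $\inf_{\tau\ge0}\E\dist^2\big(\vg,N+\tau\cdot\partial f_0(\fxs)\big)$. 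The first is an expectation of an infimum over $\tau\ge0$ of the distance to $N+\tau\cdot\partial f_0(\fxs)$ (because $\cone(\partial f_0(\fxs))=\bigcup_{\tau\ge0}\tau\cdot\partial f_0(\fxs)$), while the second swaps the order; Jensen already gives one inequality (the upper bound), and we need a quantitative reverse inequality.

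First I would follow the strategy of \cite[Theorem 4.3]{AMEL2014} used in the proof of Proposition~\ref{prop: error bound for Recipe 1}: fix a convenient choice $\tau_0$ (for instance the value making $\tau_0\cdot\partial f_0(\fxs)$ "scale-matched" to the Gaussian, e.g.\ $\tau_0=\|\vg\|$-type normalization, or more precisely the deterministic optimizer of the one-parameter problem), and show that for every realization of $\vg$,
$$
\dist^2\big(\vg,N+\cone(\partial f_0(\fxs))\big)\ \ge\ \dist^2\big(\vg,N+\tau_0\cdot\partial f_0(\fxs)\big)\ -\ R,
$$
where $R$ is a deterministic quantity of order $\sup\{\|\vs\|_2:\vs\in\partial f_0(\fxs)\}/f_0(\fxs/\|\fxs\|_2)$. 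The key geometric fact enabling this is that $f_0$ is a norm, so $\langle \vs,\fxs\rangle=f_0(\fxs)$ for $\vs\in\partial f_0(\fxs)$ and $\|\vs\|_2\le \sup\{\|\vs\|_2:\vs\in\partial f_0(\fxs)\}$ is uniformly bounded; this lets one pass between a point in $\cone(\partial f_0(\fxs))$ at "radius $t$" and the corresponding point in $\tau_0\cdot\partial f_0(\fxs)$ at the cost of a bounded correction, after projecting out the $N$-component. Taking expectations then yields
$$
\E\dist^2\big(\vg,N+\cone(\partial f_0(\fxs))\big)\ \ge\ J_3(\tau_0)-R\ \ge\ \inf_{\tau\ge0}J_3(\tau)-R,
$$
which is exactly the claimed bound once $R$ is identified with $\dfrac{2\sup\{\|\vs\|_2:\vs\in\partial f_0(\fxs)\}}{f_0(\fxs/\|\fxs\|_2)}$. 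The left inequality $0\le\inf_\tau J_3(\tau)-\delta(\fdc_3)$ is just the Jensen bound already established in Corollary~\ref{coro: calc linear inverse nonnegative}.

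The main obstacle I expect is handling the extra summand $N$ cleanly in the per-realization estimate: in \cite{AMEL2014} the comparison is between $\dist^2(\vg,\cone(\partial f_0))$ and $\dist^2(\vg,\tau_0\partial f_0)$ with no additional cone present, whereas here every set is shifted by the subspace-like cone $N$. I would deal with this by writing the distance to $N+\tau\partial f_0(\fxs)$ as a distance, within the orthogonal complement structure induced by $N$, to a translate of $\tau\partial f_0(\fxs)$, i.e.\ reduce to the case already understood after an orthogonal decomposition relative to $\lin(N)$ and the face structure of $N$; the closedness hypothesis on $N+\cone(\partial f_0(\fxs))$ is what guarantees the infimum over $\tau$ is actually attained (so the optimal ray in the cone is realized) and that the decomposition does not lose mass in a limit. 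Once that reduction is in place, the remaining computation is the same bounded-correction argument as in \cite[Theorem 4.3]{AMEL2014}, and I would simply cite it rather than reprove it.
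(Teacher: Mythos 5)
Your reduction of the problem is right: by Lemma \ref{lem: calc_statistical_dimension} and closedness of $N+\cone(\partial f_0(\fxs))$ one has $\delta(\fdc_3)=\E\big[\inf_{\tau\ge 0}\dist^2(\vg,N+\tau\cdot\partial f_0(\fxs))\big]$, and the task is a quantitative reverse of Jensen. But the step you build everything on --- a \emph{per-realization} inequality $\dist^2(\vg,N+\cone(\partial f_0(\fxs)))\ge \dist^2(\vg,N+\tau_0\cdot\partial f_0(\fxs))-R$ with a deterministic $\tau_0$ and a deterministic remainder $R$ --- is false, and it also misdescribes what \cite[Theorem 4.3]{AMEL2014} actually does. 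Take $N=\{\vO\}$ and $f_0=\|\cdot\|_2$, so that $\partial f_0(\fxs)=\{\fxs/\|\fxs\|_2\}$ is a single point and $\cone(\partial f_0(\fxs))$ is a ray. For $\vg=t\,\fxs/\|\fxs\|_2$ with $t$ large, the distance to the ray is $0$ while the distance to the singleton $\tau_0\cdot\partial f_0(\fxs)$ is $(t-\tau_0)^2$, which is unbounded; no fixed $R$ works. (And if you let $\tau_0$ depend on $\vg$, then $J_3(\tau_0)$ is no longer a value of the deterministic function $J_3$ and the chain $\E[\cdots]\ge J_3(\tau_0)-R\ge\inf_\tau J_3(\tau)-R$ collapses.)

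The actual mechanism --- which is what the paper uses --- is probabilistic, not pointwise. Let $\tau_{\vg}$ be the (unique) minimizer of $\tau\mapsto J_{\vg}(\tau)=\dist^2(\vg,N+\tau\cdot\partial f_0(\fxs))$ and $\tau_{\star}$ the minimizer of $J_3$. The first-order convexity inequality $J_{\vg}(\tau_{\vg})\ge J_{\vg}(\tau_{\star})+(\tau_{\vg}-\tau_{\star})J'_{\vg}(\tau_{\star})$ is taken in expectation, the cross term is split into a covariance plus $\E(\tau_{\vg}-\tau_{\star})\cdot J_3'(\tau_{\star})$ (the latter is nonnegative by a short case analysis on $\tau_{\star}=0$ versus $\tau_{\star}>0$), and the covariance is bounded by Cauchy--Schwarz as $[\var(\tau_{\vg})\var(J'_{\vg}(\tau_{\star}))]^{1/2}$. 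Both variances are then controlled by Lipschitz constants in $\vg$ via the Gaussian Poincar\'e-type inequality \cite[Fact C.3]{AMEL2014}: closedness of $N+\cone(\partial f_0(\fxs))$ and the identities $\left<\vs,\fxs\right>=f_0(\fxs)$ for $\vs\in\partial f_0(\fxs)$, $\left<\vt,\fxs\right>=0$ for $\vt\in N$ give the explicit formula $\tau_{\vg}=\left<\Pi_{N+K}(\vg),\fxs\right>/f_0(\fxs)$, whence $\vg\mapsto\tau_{\vg}$ is $1/f_0(\fxs/\|\fxs\|_2)$-Lipschitz, while $\vg\mapsto J'_{\vg}(\tau_{\star})$ is $2\sup\{\|\vs\|_2:\vs\in\partial f_0(\fxs)\}$-Lipschitz by Lemma \ref{lem: distance to sum of unbounded sets}. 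Multiplying these constants gives exactly the stated error. Your proposal correctly identifies where the norm structure and the closedness hypothesis enter, but without the covariance/variance argument the quantitative bound does not follow.
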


\begin{proof}
  	See Appendix \ref{subsec: error bound for non-negativity}.
\end{proof}


\begin{remark}
  In \cite{AMEL2014}, Amelunxen \textit{et al.} proposed a recipe to compute the statistical dimension of a descent cone, and presented an error bound for their recipe. The error bound in Proposition \ref{prop: error bound for calc} generalizes their ideas from problem \eqref{eq: problem no addi_prior} to problem \eqref{eq: linear inverse nonnegative}.
\end{remark}

As a more concrete example, let us apply Recipe \ref{rec: calc_statistical_dimension_2} to study the phase transition of the $\ell_1$ minimization problem with non-negativity constraints:
\begin{equation} \label{eq: l1_nonnegative}
		\min  \|\fx\|_1, \quad \st \ \vy = \mA \fx, \ \fx \ge \vO.
\end{equation}
We have the following results:
\begin{corollary} \label{coro: bound_l1_nonnegative}
  	Consider problem \eqref{eq: l1_nonnegative}. Assume that $\fxs \in \R^n_+$ has exactly $s$ non-zero entries. Then the statistical dimension of the prior restricted cone $\fdc_3$ of problem \eqref{eq: l1_nonnegative} has the following bounds:
	\begin{equation*} \label{eq: bound_l1_nonnegative}
	  	\psi_2(s/n) - \frac{2}{\sqrt{sn}} \le \frac{\delta(\fdc_3)}{n} \le \psi_2 (s/n).
	\end{equation*}
	The function $\psi_2: [0,1] \rightarrow [0,1]$ is defined to be
	\begin{equation} \label{eq: function psi}
	  \psi_2(\rho) = \inf_{\tau \ge 0} \Big\{ \rho(1+\tau^2) + \frac{1}{2}(1-\rho) \int_{-\infty}^{\tau} (u - \tau)^2 \varphi(u) \mathrm{d}u \Big\},
  	\end{equation}
	where the function $\varphi(u) = \sqrt{\frac{2}{\pi}} e^{-u^2/2}$. Moreover, the infimum in \eqref{eq: function psi} is attained at the unique $\tau$ which solves the stationary equation
	\begin{equation} \label{eq: stationary equation}
	  	\frac{2\rho}{1-\rho} = \int_{\tau}^{\infty} \Big(\frac{u}{\tau}-1\Big)\varphi(u) \mathrm{d}u.
	\end{equation}
\end{corollary}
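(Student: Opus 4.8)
The plan is to combine the two one-sided estimates already available for the non-negativity-constrained problem and to carry out the resulting Gaussian integral explicitly when $f_0=\|\cdot\|_1$. For the upper bound I would apply Corollary~\ref{coro: calc linear inverse nonnegative} with $f_0=\|\cdot\|_1$, which gives $\delta(\fdc_3)\le\inf_{\tau\ge0}J_3(\tau)$ with $J_3(\tau)=\E\dist^2(\vg,N+\tau\cdot\partial\|\fxs\|_1)$ and $N$ as in \eqref{eq: normal cone of nonegative}. For the lower bound I would invoke Proposition~\ref{prop: error bound for calc} (legitimate since $\|\cdot\|_1$ is a norm), giving $0\le\inf_{\tau\ge0}J_3(\tau)-\delta(\fdc_3)\le 2\sup\{\|\vs\|_2:\vs\in\partial\|\fxs\|_1\}/f_0(\fxs/\|\fxs\|_2)$. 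Before either step one checks the standing hypotheses for $\fxs$ with $s\ge1$ non-zero entries: $\partial\|\fxs\|_1$ is a compact polytope missing the origin; $\ri D(\|\cdot\|_1,\fxs)\cap\ri D(I_{\R^n_+},\fxs)\ne\emptyset$ (take a direction that is strictly negative in one support coordinate and strictly positive in every zero coordinate); $N+\partial\|\fxs\|_1$ misses the origin (its support coordinates all equal $1$); the ``not identical'' condition holds (the support coordinates of $N+\tau\cdot\partial\|\fxs\|_1$ all equal $\tau$); and, for Proposition~\ref{prop: error bound for calc}, $N+\cone(\partial\|\fxs\|_1)$ is closed, because $\partial\|\fxs\|_1$ is a polytope so $\cone(\partial\|\fxs\|_1)$ is a polyhedral cone, and the sum of two polyhedral cones is again polyhedral.

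The heart of the argument is to evaluate $J_3(\tau)$. Let $S$ be the support of $\fxs$, $|S|=s$. Then $\partial\|\fxs\|_1=\{\vw:w_i=1\ (i\in S),\ w_i\in[-1,1]\ (i\notin S)\}$, so $N+\tau\cdot\partial\|\fxs\|_1$ splits coordinatewise: on $S$ it is the single point $\{\tau\}$, while on the complement it is $(-\infty,0]+[-\tau,\tau]=(-\infty,\tau]$. Hence $\dist^2(\vg,N+\tau\cdot\partial\|\fxs\|_1)=\sum_{i\in S}(g_i-\tau)^2+\sum_{i\notin S}\bigl((g_i-\tau)_+\bigr)^2$, and taking expectations yields $J_3(\tau)=s(1+\tau^2)+(n-s)\,\E\bigl((g-\tau)_+\bigr)^2$ with $g\sim N(0,1)$. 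Dividing by $n$ and writing $\E\bigl((g-\tau)_+\bigr)^2$ as the Gaussian integral appearing in \eqref{eq: function psi} with $\rho=s/n$, we get $\inf_{\tau\ge0}J_3(\tau)/n=\psi_2(s/n)$; combined with Corollary~\ref{coro: calc linear inverse nonnegative} this is the upper bound $\delta(\fdc_3)/n\le\psi_2(s/n)$.

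For the lower bound I would simplify the error term of Proposition~\ref{prop: error bound for calc}. Every $\vs\in\partial\|\fxs\|_1$ has $\|\vs\|_2\le\sqrt n$, while $f_0(\fxs/\|\fxs\|_2)=\|\fxs\|_1/\|\fxs\|_2$ is of order $\sqrt s$; for a signal with equal-magnitude non-zero entries this gives exactly the bound $2\sqrt{n/s}$, and for a general $s$-sparse signal one instead invokes the sharpened $\ell_1$-specific estimate of \cite[Proposition 4.1]{AMEL2014}, which bypasses the crude supremum and yields $\inf_{\tau\ge0}J_3(\tau)-\delta(\fdc_3)\le 2\sqrt{n/s}$, i.e. $\delta(\fdc_3)/n\ge\psi_2(s/n)-2/\sqrt{sn}$. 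Finally, the statements about $\psi_2$ follow by differentiating the integrand $g_\rho(\tau)$ in \eqref{eq: function psi}, which is strictly convex on $\R_+$ by Corollary~\ref{coro: calc linear inverse nonnegative}: since its right-derivative at $0$ is negative and $g_\rho(\tau)\to\infty$, the minimizer $\tau^{\star}$ lies in $(0,\infty)$ and solves $g_\rho'(\tau^{\star})=0$, which rearranges to $\tfrac{2\rho}{1-\rho}=\int_\tau^\infty(\tfrac u\tau-1)\varphi(u)\,\mathrm{d}u$ as in \eqref{eq: stationary equation}; the right-hand side is strictly decreasing from $+\infty$ to $0$ on $(0,\infty)$, so the stationary point is unique. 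The bound $g_\rho(0)=(1+\rho)/2\le1$ shows $\psi_2$ maps $[0,1]$ into $[0,1]$.

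\textbf{Main obstacle.} The two one-sided inequalities are mostly bookkeeping once the recipes of Section~\ref{sec: main results} are in hand, so the real work is twofold: (i) the coordinatewise description of $N+\tau\cdot\partial\|\fxs\|_1$ and the ensuing Gaussian integral, which is routine but must land exactly on the form of $\psi_2$ in \eqref{eq: function psi}; and (ii) extracting the sharp constant $2/\sqrt{sn}$ in the lower bound, since the generic estimate of Proposition~\ref{prop: error bound for calc} through $\sup\|\vs\|_2$ is only tight for signals with equal-magnitude non-zero entries, so the clean route for general sparse $\fxs$ is the $\ell_1$-specific refinement from \cite{AMEL2014}. Verifying closedness of $N+\cone(\partial\|\fxs\|_1)$ and the non-degeneracy hypotheses is needed but amounts to elementary polyhedral geometry.
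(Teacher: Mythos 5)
Your proposal follows essentially the same route as the paper's proof: the coordinatewise description of $N+\tau\cdot\partial\|\fxs\|_1$ (equal to $\{\tau\}$ on the support and $(-\infty,\tau]$ off it), the resulting formula $J_3(\tau)=s(1+\tau^2)+\tfrac12(n-s)\int_\tau^\infty(u-\tau)^2\varphi(u)\,\mathrm{d}u$, the upper bound via Corollary~\ref{coro: calc linear inverse nonnegative}, the lower bound via Proposition~\ref{prop: error bound for calc}, and the stationary equation from $J_3'(\tau)=0$ together with the negativity of the right derivative at the origin. (Incidentally, your computation lands on $\int_\tau^\infty$, matching the paper's appendix; the $\int_{-\infty}^{\tau}$ in the corollary statement appears to be a typo.)

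The one step where your route wobbles is the constant $2/\sqrt{sn}$ for a general $s$-sparse $\fxs$. You correctly observe that the crude bound $2\sup\{\|\vs\|_2\}/f_0(\fxs/\|\fxs\|_2)=2\sqrt n\,\|\fxs\|_2/\|\fxs\|_1$ only yields $2\sqrt{n/s}$ when the non-zero entries have equal magnitude, but your fallback — invoking the $\ell_1$-specific refinement of \cite[Proposition 4.1]{AMEL2014} — does not directly apply, since that result concerns the descent cone $D(\|\cdot\|_1,\fxs)$ alone, not the cone $\fdc_3$ cut down by the non-negativity constraint. The paper closes this gap with a much simpler observation: $\partial\|\fxs\|_1$ and $N(I_{\R^n_+},\fxs)$ depend only on the support of $\fxs$, so $\fdc_3$, $\delta(\fdc_3)$, and $J_3$ are identical for all non-negative vectors with the same support; one may therefore apply Proposition~\ref{prop: error bound for calc} to the representative with all non-zero entries equal to $1$, for which $\|\fxs\|_1/\|\fxs\|_2=\sqrt s$ exactly. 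With that substitution your argument is complete; the remaining hypothesis checks (non-empty relative-interior intersection, origin avoidance, closedness of $N+\cone(\partial\|\fxs\|_1)$ by polyhedrality) are handled correctly.
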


\begin{proof}
  It is easy to check that the conditions in Corollary \ref{coro: calc linear inverse nonnegative} are satisfied in problem \eqref{eq: l1_nonnegative}. Thus, Corollary \ref{coro: bound_l1_nonnegative} results from a direct application of Corollary \ref{coro: calc linear inverse nonnegative} and Proposition \ref{prop: error bound for calc}. For a detailed proof, please refer to Appendix \ref{app: calc_l1_nonnegative}.
\end{proof}

\begin{remark}
  	In \cite{AMEL2014}, Amelunxen \textit{et al.} demonstrated that the phase transition of the $\ell_1$ minimization problem:
	\begin{equation} \label{eq: l1 minimization}
		\min  \|\fx\|_1, \quad \st \ \vy = \mA \fx
	\end{equation}
	occurs at the statistical dimension of $D(\|\cdot\|_1, \fxs)$, and the statistical dimension has the bound
	\begin{equation*} \label{eq: bound_l1_minimization}
	  \psi_1(s/n) - \frac{2}{\sqrt{sn}} \le \frac{\delta\big(D(\|\cdot\|_1, \fxs)\big)}{n} \le \psi_1 (s/n).
	\end{equation*}
	The function $\psi_1: [0,1] \rightarrow [0,1]$ is defined in \eqref{eq: function psi_1}. It is easy to see that $\psi_2(\rho) \le \psi_1(\rho)$ for any $0 \le \rho \le  1$. This is consistent with the intuition that adding a non-negativity constraint means more prior information, so less measurements are needed. See Fig. \ref{fig: two curves} for a comparison of the curves of $\psi_1(\rho)$ and $\psi_2(\rho)$.
\end{remark}

\begin{figure}
  	\centering
  	\includegraphics[width = .8\textwidth]{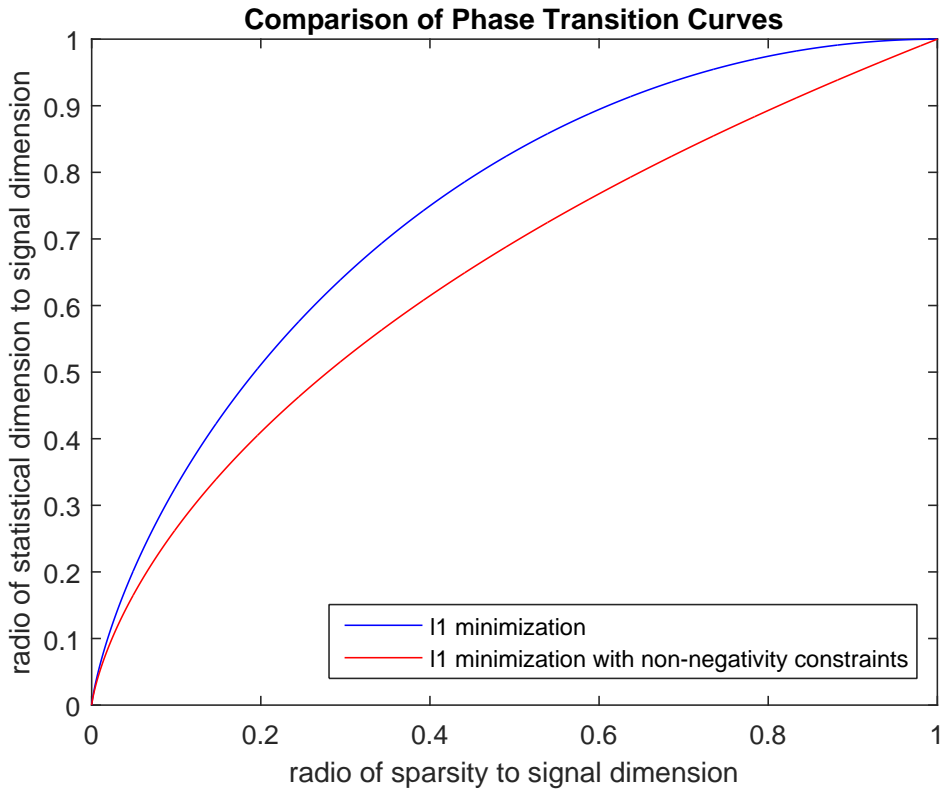}
	\caption{Illustration of the phase transition points for $\ell_1$ minimization problem with non-negativity constraints \eqref{eq: l1_nonnegative} and $\ell_1$ minimization problem \eqref{eq: l1 minimization}. The blue curve is the curve of $\psi_1(\theta)$, which is the phase transition point of problem \eqref{eq: l1 minimization}, and the red curve is the curve of $\psi_2(\theta)$, which is the phase transition point of problem \eqref{eq: l1_nonnegative}.}
	\label{fig: two curves}
\end{figure}

\begin{remark}
  	In \cite{DONOHO2005, DONOHO2009}, Donoho and Tanner studied the $\ell_1$ minimization problem with non-negativity constraints \eqref{eq: l1_nonnegative}. They proved the existence of \textit{weak threshold} and \textit{strong threshold} and showed that at the weak threshold, the probability that problem \eqref{eq: l1_nonnegative} succeeds jumps from $1$ to $1-\epsilon$, where $\epsilon > 0$ is some number. Compared with their results, our results are more precise, i.e., we demonstrate that sharp phase transition exists, and provide an accurate estimate for the phase transition point.
\end{remark}

\section{Simulation results} \label{sec: simulation results}

In this section, we employ several numerical experiments to verify our theoretical results and our computation recipes. In the experiments, we use CVX Matlab package \cite{CVX1} \cite{CVX2} to solve convex programs.

\subsection{Simulation Results for $\ell_1$ Minimization with $\ell_2$ Norm Constraints}

We first design an experiment to verify our results about Recipe \ref{rec: calc_statistical_dimension_1}. More precisely, we design the signal to be sparse and assume that its $\ell_2$ norm is know beforehand, and solve problem \eqref{eq: compressed sensing with bounded norm} to recover the signal. The experiment settings are as follows: We set the ambient dimension $n$ to be $128$. The measurement number $m$ increases from $1$ to $128$ with step $1$, and the sparsity level $s$ of the signal increases from $1$ to $128$ with step $1$ as well. For each pair of selections of $m$ and $s$, we generate the true signal $\fxs$ with $s$ independent standard normal entries and $n-s$ zeros, sample the sensing matrix $\mA$ from the standard normal distribution on $\R^{m \times n}$, and obtain the observation $\vy = \mA \fxs$. Then we run and solve problem \eqref{eq: compressed sensing with bounded norm} $20$ times. We declare success if the solution $\hat{\fx}$ satisfies $\|\hat{\fx} - \fxs\|_2 \le 10^{-4}$. After all these are done, we calculate the empirical probability of successful recovery. At last, we plot the theoretical curve predicted by Corollary \ref{coro: calc_CS_bounded_norm}.

Moreover, Proposition \ref{prop: minimizer of J in bounded case} and Proposition \ref{prop: equivalence of statistical dimensions} imply that the phase transition point of problem \eqref{eq: compressed sensing with bounded norm} and that of \eqref{eq: l1 minimization} are nearly the same. Therefore, as a comparison, we design an experiment to obtain the empirical probability of successful recovery of problem \eqref{eq: l1 minimization}. The experiment settings are absolutely the same as the experiment for problem \eqref{eq: compressed sensing with bounded norm}, except that we solve problem \eqref{eq: l1 minimization} for recovery this time.

The simulation results of problems \eqref{eq: compressed sensing with bounded norm} and \eqref{eq: l1 minimization} are presented in Fig. \ref{fig: simulation results}. Fig. \ref{fig: simulation results}(a) shows that the theoretical threshold, predicted by our Corollary \ref{coro: calc_CS_bounded_norm}, matches the empirical phase transition of problem \eqref{eq: compressed sensing with bounded norm} perfectly. Moreover, comparing Fig. \ref{fig: simulation results}(a) and Fig. \ref{fig: simulation results}(b), we can see that the phase transition points of problem \eqref{eq: compressed sensing with bounded norm} and \eqref{eq: l1 minimization} are almost the same, which verifies our Proposition \ref{prop: minimizer of J in bounded case} and Proposition \ref{prop: equivalence of statistical dimensions}. These results imply that our Recipe \ref{rec: calc_statistical_dimension_1} can provide an accurate estimation of the statistical dimension of the prior restricted cone, when applied to problem \eqref{eq: compressed sensing with bounded norm}.

\begin{figure}
  	\centering
	\subfigure[Phase Transition of Problem \eqref{eq: compressed sensing with bounded norm}]{
	  	\includegraphics[width= .48\textwidth]{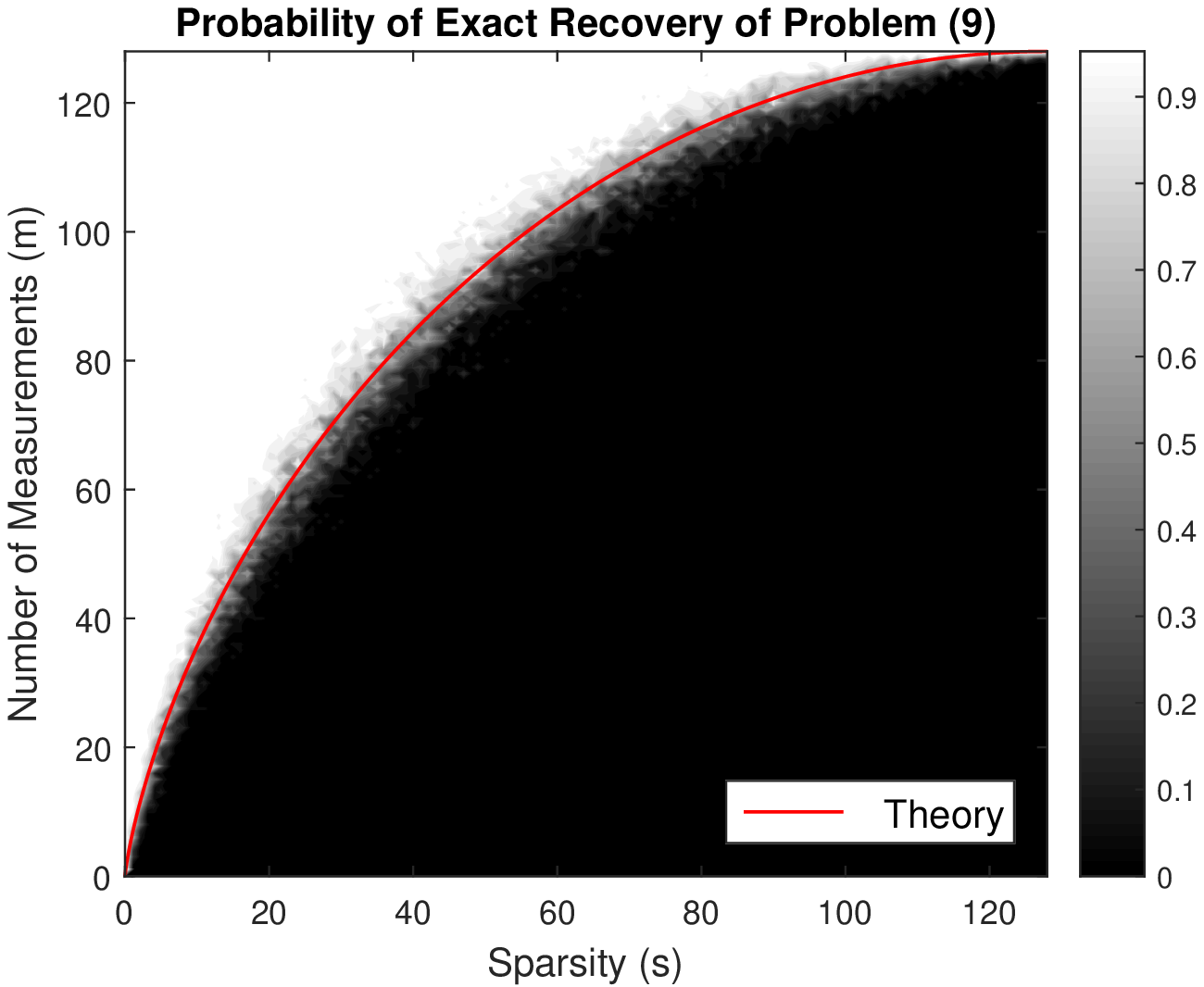}
	}
	\subfigure[Phase Transition of Problem \eqref{eq: l1 minimization}]{
	  	\includegraphics[width= .48\textwidth]{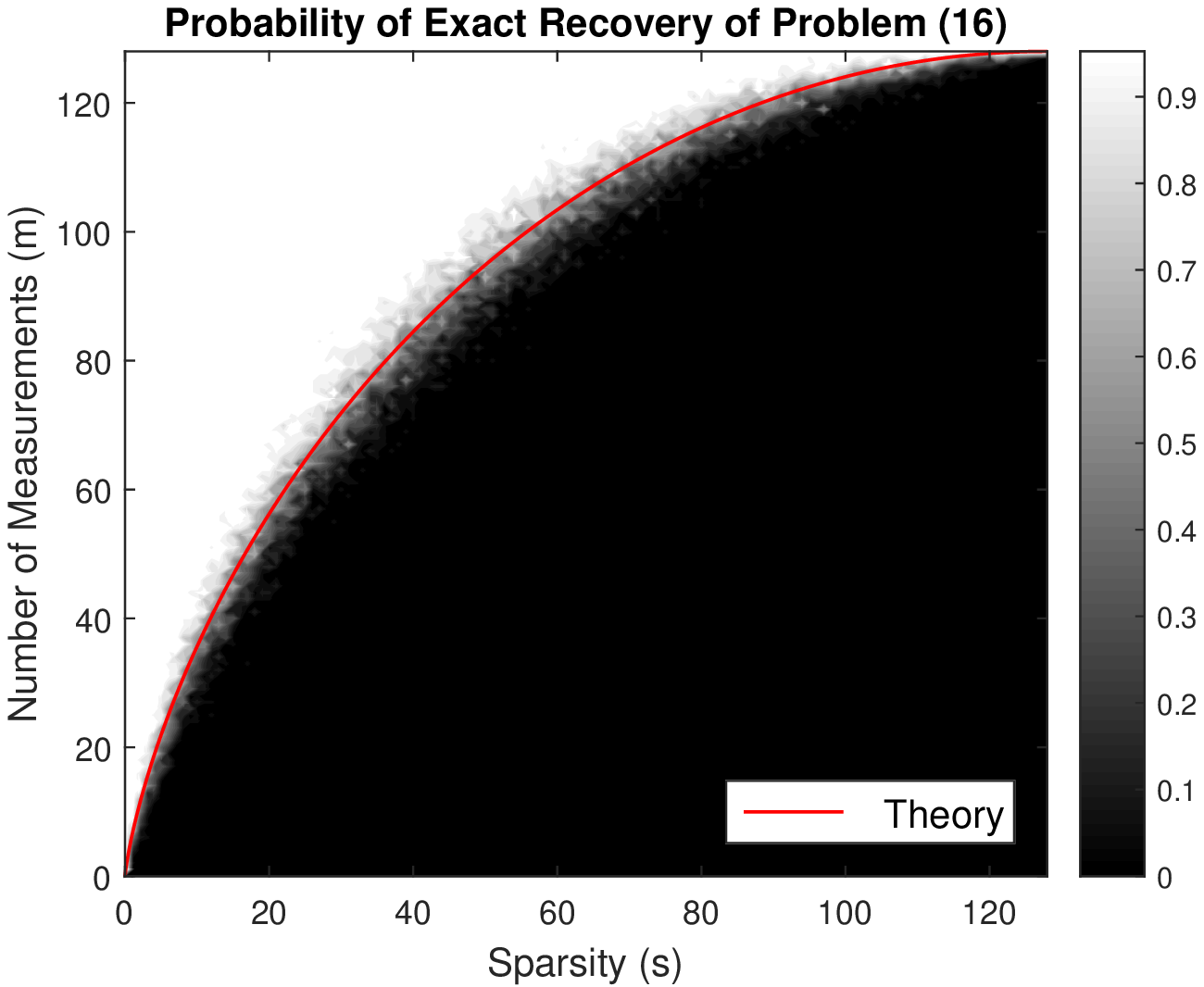}
	}
	\caption{Simulation results. In figures (a), (b), we present the simulation results for the phase transition of problems \eqref{eq: compressed sensing with bounded norm} and \eqref{eq: l1 minimization}, respectively. In both figures, the gray level represents the empirical probability of successful recovery: Black means failure, and white means success. The red curves plot the phase transition points predicted by our theoretical results in Corollary \ref{coro: calc_CS_bounded_norm}, i.e., either of the red curves denotes the curve of $n \cdot \psi_1(s/n)$.}.
	\label{fig: simulation results}
\end{figure}

\subsection{Simulation results for $\ell_1$ minimization with non-negativity constraints}

The second experiment is designed to verify our results about Recipe \ref{rec: calc_statistical_dimension_2}. More precisely, we design the signal to be non-negative and sparse, and solve problem \eqref{eq: l1_nonnegative} to recover the signal. The experiment settings are similar as the previous experiment: The ambient dimension $n$ is setted to be $128$, the measurement number $m$ increases from $1$ to $128$ with step $1$, and the sparsity level $s$ of the signal increases from $1$ to $128$ with step $1$. For each pair of selections of $m$ and $s$, we repeat the following process $20$ times. We generate a sparse vector $\tilde{\fx}$ with $s$ independent standard normal entries and $n-s$ zeros, make the true signal $\fxs_i = |\tilde{\fx}_i|$ for all $1 \le i \le n$, sample the sensing matrix $\mA$ from the standard normal distribution on $\R^{m \times n}$, and obtain the observation $\vy = \mA \fxs$. Then we run and solve problem \eqref{eq: l1_nonnegative}. We declare success if the solution $\hat{\fx}$ to problem \eqref{eq: l1_nonnegative} satisfies $\|\hat{\fx} - \fxs\|_2 \le 10^{-4}$. After all these are done, we calculate the empirical probability of successful recovery. At last, we plot the theoretical curve predicted by Corollary \ref{coro: bound_l1_nonnegative}.


Fig. \ref{fig: simulation results_2} reports our simulation results. It reflects that our theoretical phase transition curve, given by Corollary \ref{coro: bound_l1_nonnegative}, can predict the empirical phase transition of problem \eqref{eq: l1_nonnegative} accurately. This implies that our Recipe \ref{rec: calc_statistical_dimension_2} can provide a reliable estimate of the statistical dimension of the prior restricted cone, when applied to problem \eqref{eq: l1_nonnegative}.

\begin{figure}
  	\centering
	  	\includegraphics[width= .8\textwidth]{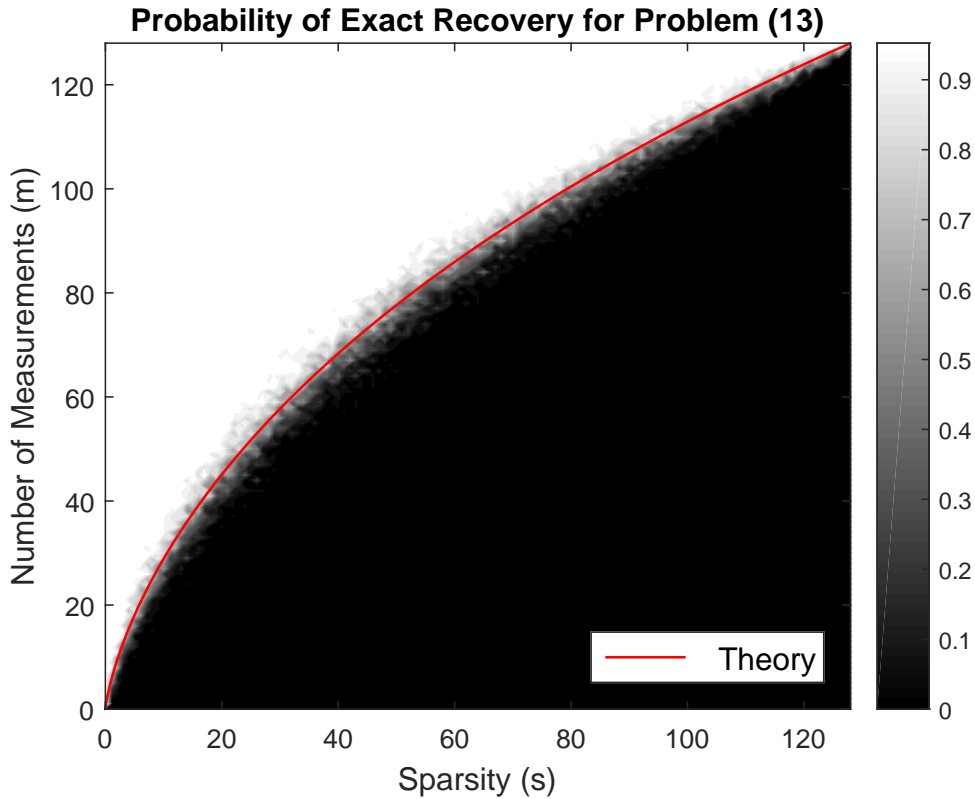}
	\caption{Simulation results for the phase transition of problem \eqref{eq: l1_nonnegative}. In this figure, the gray level represents the empirical probability of successful recovery: Black means failure, and white means success. The red curve plots the phase transition threshold predicted by Corollary \ref{coro: bound_l1_nonnegative}, i.e., $n\cdot \psi_2(s/n)$.}
	\label{fig: simulation results_2}
\end{figure}

\section{Conclusion} \label{sec: conclusion}
This paper studied the phase transition of convex programs with multiple prior constraints, to solve the linear inverse problem. Given such a convex program, we defined its prior restricted set and prior restricted cone, and proved that the phase transition occurs at the statistical dimension of the prior restricted cone. To apply our theoretical results, we presented two recipes, which works under different conditions, to compute the statistical dimension of the prior restricted cone, and a precise analysis of these two recipes were given. Moreover, to illustrate our results, we applied our theoretical results and the estimation recipes to several specific problems, and obtained computable formulas for the statistical dimension and related error bounds. Simulations were provided to demonstrate our results.

\appendices

\section{Proof of Lemma \ref{lem: optimality condition}} \label{app: proof of main results}

\textit{Sufficiency.} We argue by contradiction. Suppose that $\fdc \cap \nullspace(\mA) = \{\vO\}$, but problem \eqref{eq: problem with multiple prior} fails. Then, the solution to problem \eqref{eq: problem with multiple prior}, $\hat{\fx}$, satisfies $\hat{\fx} \neq \fxs$. Since $\hat{\fx}$ is the solution, $\hat{\fx}$ must have smaller or equal cost than $\fxs$, and satisfy all the constraints, i.e.,
\begin{equation} \label{eq: solution satisfy prior constraints}
	f_i(\hat{\fx}) \le f_i(\fxs), \quad \textnormal{for all } 0 \le i \le k,
\end{equation}
and
\begin{equation} \label{eq: solution satisfy measurements}
	\vy = \mA \hat{\fx}.
\end{equation}
The identity \eqref{eq: solution satisfy prior constraints} implies that $\hat{\fx} - \fxs \in \fdc$ and the identity \eqref{eq: solution satisfy measurements} implies that $\hat{\fx} - \fxs \in \nullspace(\mA)$. Thus, $\hat{\fx} - \fxs \in \fdc \cap \nullspace(\mA)$. As $\hat{\fx} \neq \fxs$, we know that $\fdc \cap \nullspace(\mA) \neq \{\vO\}$. A contradiction.

\textit{Necessaty.} Again we argue by contradiction. Suppose that problem \eqref{eq: problem with multiple prior} succeeds, but we have $\fdc \cap \nullspace(\mA) \neq \{\vO\}$. Take any $\vd \in \fdc \cap \nullspace(\mA)$ and $\vd \neq \vO$. Since $\fdc = \cone (\fds)$, where $\fds$ is the prior restricted set of problem \eqref{eq: problem with multiple prior}, there exists a $t > 0$ such that $t \vd \in \fds$. The definition of $\fds$ implies that 
$$
f_i(\fxs + t \vd) \le f_i(\fxs), \quad \textnormal{for all } 0\le i \le k.
$$
Moreover, $\vd \in \nullspace(\mA)$ implies that
$$
\vy = \mA (\fxs + t \vd).
$$
In other words, we have shown that $\fxs + t \vd$ has smaller or equal cost than $\fxs$, and satisfies all the constraints. Thus, $\fxs$ must not be the unique solution to problem \eqref{eq: problem with multiple prior}. A contradiction.

\section{Proof of Theorem \ref{th: calc_statistical_dimension_1}} \label{app: proof for recipe 1}

In this section, we prove our Theorem \ref{th: calc_statistical_dimension_1} and related results. In subsection \ref{subsec: statistical dimension in terms of normal cones}, we prove Lemma \ref{lem: calc_statistical_dimension}. In subsections \ref{subsec: distance to sum of subdifferentials} and \ref{subsec: expected distance to sum of sets}, we give a detailed a proof of the properties of the function $J$, defined in Theorem \ref{th: calc_statistical_dimension_1}. The proof idea is inspired by \cite[Appendix C]{AMEL2014}, but our proof relies on some different proof techniques. In subsection \ref{subsec: completion of proof of recipe 1}, we complete the proof for Theorem \ref{th: calc_statistical_dimension_1}.

\subsection{Proof of Lemma \ref{lem: calc_statistical_dimension}} \label{subsec: statistical dimension in terms of normal cones}

To begin, note that the prior restricted cone $\fdc$ is determined by several descent cones of convex functions. Actually, by definition, the prior restricted set $\PP$ can be expressed as:
$$
\PP = D_s(f_0,\fxs) \cap D_s(f_1, \fxs) \cap \dots \cap D_s(f_k, \fxs),
$$
where
$$
D_s(f_i,\fxs) = \big\{ \vd: f_i(\fxs + \vd) \le f_i(\fxs) \big\} \quad \textnormal{for } i = 0,1,\dots,k.
$$
Now we argue that
\begin{equation} \label{eq: P_c is interaction of tangent cones}
	\fdc = D(f_0,\fxs) \cap D(f_1, \fxs) \cap \dots \cap D(f_k, \fxs),
\end{equation}
where $D(f_i,\fxs)$ denotes the descent cones of $f_i$ at $\fxs$, $i = 0,1,\dots,k$, i.e.,
$$
D(f_i,\fxs) = \cone\big(D_s(f_i, \fxs)\big) = \big\{\vd: \exists \, t > 0, f_i(\fxs + t \vd) \le f_i(\fxs) \big\} \quad \textnormal{for } i = 0,1,\dots,k.
$$
To see this, first note that it is clear that $\fdc \subseteq D(f_0,\fxs) \cap D(f_1, \fxs) \cap \dots \cap D(f_k, \fxs)$, so it remains to show the reverse relation holds. Take any $\vd \in D(f_0,\fxs) \cap D(f_1, \fxs) \cap \dots \cap D(f_k, \fxs)$, then there exists some number $t_i > 0$ such that $t_i \vd \in D_s(f_i, \fxs)$ for any $0 \le i \le k$. Denote $t \coloneqq \min_{0 \le i \le k}t_i > 0$. The convexity of $f_i$ implies that
$$
f_i(\fxs + t \vd) = f_i\big( (1-\lambda_i) \fxs + \lambda_i (\fxs + t_i \vd)\big) \le (1-\lambda_i) f_i(\fxs) + \lambda_i f_i(\fxs + t_i \vd) \le f_i(\fxs),
$$
where $\lambda_i = t / t_i \in (0, 1]$. Since the above inequality holds for any $0 \le i \le k$, we obtain that $t \vd \in \fds$. Thus, $\vd \in \fdc$. The identity \eqref{eq: P_c is interaction of tangent cones} follows immediately. 

Next, taking polar on both sides of \eqref{eq: P_c is interaction of tangent cones} yields
\begin{equation*} 
  	\fdc^{\circ} = \big[ D(f_0,\fxs) \cap D(f_1, \fxs) \cap \dots \cap D(f_k, \fxs) \big]^{\circ}.
\end{equation*}
Since we have assumed that $\ri\big(D(f_0,\fxs)\big) \cap \ri\big(D(f_1,\fxs)\big) \cap \dots \cap \ri\big(D(f_k,\fxs)\big) \neq \emptyset$, by \cite[Corollary 23.8.1]{ROCK1970}, the normal cone to the intersection of sets is the Minkowski sum of the normal cones to the individual sets:
\begin{equation} \label{eq: polar of P_c}
	\fdc^{\circ} = \big[ D(f_0,\fxs) \cap D(f_1, \fxs) \cap \dots \cap D(f_k, \fxs) \big]^{\circ} = \sum_{i=0}^{k} N(f_i,\fxs).
\end{equation}
Recall that the statistical dimension of a convex cone can be expressed via its polar \cite[Proposition 3.1 (4)]{AMEL2014}, so we obtain from \eqref{eq: polar of P_c} that
$$
\delta(\fdc) = \E \big[ \dist^2(\vg, \fdc^{\circ}) \big] = \E \big[ \dist^2 (\vg, \sum_{i=0}^{k} N(f_i,\fxs))\big].
$$

\subsection{Distance to the Sum of Compact Sets} \label{subsec: distance to sum of subdifferentials}

In this subsection, we study some analytic properties of the function $J_{\vu}(\vtau)$, which is related to $J(\vtau)$, but more simpler. We begin by studying some properties of the Minkowski sum of compact sets.

\begin{lemma} [Sum of compact sets] \label{lem: sum of sets}
  	For any $0 \le i \le k$, let $S_i$ be a non-empty, compact, convex subset of $\R^n$ that does not contain the origin, and $\vtau \in \S^k \cap \R^{k+1}_{+}$. Suppose that $\|\vs_i\|_2 \le B_i$ for some $B_i > 0$ and for any $\vs_i \in S_i$, $0 \le i \le k$. Then there exists a number $B > 0$ such that
	\begin{equation} \label{eq: bound for sum of compact sets}
		\Big\|\sum_{i = 0}^{k} \vtau_i \vs_i \Big\|_2 \le B, \ \textnormal{for any } \vtau \in \S^{k} \cap \R^{k+1}_{+}\ \textnormal{and} \ \vs_i \in S_i, \ 0 \le i \le k.
  	\end{equation}
	Furthermore, suppose that
	\begin{equation*} \label{eq: origin not in sum}
	  	\vO \notin \sum_{i = 0}^{k} \vtau_i S_i, \ \textnormal{for any } \vtau \in \S^{k} \cap \R^{k+1}_{+}.
  	\end{equation*}
	Then there exists a number $b > 0$ such that
	\begin{equation} \label{eq: bound for sum of sets}
		\Big\|\sum_{i = 0}^{k} \vtau_i \vs_i \Big\|_2 \ge b, \ \textnormal{for any } \vtau \in \S^{k} \cap \R^{k+1}_{+}, \ \textnormal{and} \ \vs_i \in S_i, \ 0 \le i \le k.
  	\end{equation}
\end{lemma}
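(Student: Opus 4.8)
The plan is to handle the two bounds separately, both via compactness arguments on the product of the simplex $\S^{k} \cap \R^{k+1}_{+}$ with the individual sets $S_i$. For the upper bound \eqref{eq: bound for sum of compact sets}, I would simply use the triangle inequality together with the fact that $\vtau \in \S^k$ forces each $\vtau_i \in [0,1]$: writing $\bigl\| \sum_{i=0}^k \vtau_i \vs_i \bigr\|_2 \le \sum_{i=0}^k \vtau_i \|\vs_i\|_2 \le \sum_{i=0}^k B_i$, so $B \coloneqq \sum_{i=0}^k B_i$ works. This is immediate and requires no compactness beyond the given uniform bounds $B_i$ on each $S_i$.

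For the lower bound \eqref{eq: bound for sum of sets}, the natural approach is a proof by contradiction using sequential compactness. Suppose no such $b > 0$ exists; then there are sequences $\vtau^{(j)} \in \S^k \cap \R^{k+1}_{+}$ and $\vs_i^{(j)} \in S_i$ with $\bigl\| \sum_{i=0}^k \vtau_i^{(j)} \vs_i^{(j)} \bigr\|_2 \to 0$. Since $\S^k \cap \R^{k+1}_{+}$ is compact and each $S_i$ is compact, by passing to a subsequence I may assume $\vtau^{(j)} \to \vtau^{\infty} \in \S^k \cap \R^{k+1}_{+}$ and $\vs_i^{(j)} \to \vs_i^{\infty} \in S_i$ for each $i$. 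Continuity of the map $(\vtau, \vs_0, \dots, \vs_k) \mapsto \sum_{i=0}^k \vtau_i \vs_i$ then gives $\sum_{i=0}^k \vtau_i^{\infty} \vs_i^{\infty} = \vO$, so $\vO \in \sum_{i=0}^k \vtau_i^{\infty} S_i$ with $\vtau^{\infty} \in \S^k \cap \R^{k+1}_{+}$, contradicting the hypothesis. Hence such a $b > 0$ must exist.

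The only slightly delicate point, rather than a genuine obstacle, is confirming that $\S^k \cap \R^{k+1}_{+}$ is indeed compact (it is the closed standard simplex, being a closed and bounded subset of $\R^{k+1}$) and that extraction of a single subsequence works simultaneously for $\vtau^{(j)}$ and all the finitely many sequences $\vs_i^{(j)}$ — which is fine since there are only $k+2$ sequences and we extract successively. I expect no real difficulty here; the argument is a textbook compactness-and-continuity contradiction, and the main work of the subsection lies in the subsequent use of this lemma to establish continuity and differentiability of $J_{\vu}(\vtau)$ and $J(\vtau)$.
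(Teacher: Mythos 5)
Your proof is correct. The upper bound is handled essentially as in the paper (the paper uses Cauchy--Schwarz to get $B=\sqrt{\sum_i B_i^2}$ rather than your $\sum_i B_i$, but any finite $B$ suffices). For the lower bound, however, you take a genuinely different and more elementary route. The paper first defines $r(\vtau) \coloneqq \inf_{\vs_i \in S_i}\bigl\|\sum_{i=0}^k \vtau_i \vs_i\bigr\|_2$, proves that $r$ is Lipschitz on $\R^{k+1}_+$ with constant $B$ (using that the Minkowski sum of compact sets is compact, so the inner infimum is attained), then invokes attainment of the infimum of a continuous function on the compact set $\S^k\cap\R^{k+1}_+$ and the closedness of $\sum_i \vtau_i S_i$ to produce a $\vtau$ with $\vO\in\sum_i \vtau_i S_i$. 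Your argument extracts convergent subsequences of $\vtau^{(j)}$ and of each $\vs_i^{(j)}$ directly (finitely many successive extractions) and passes to the limit using continuity of $(\vtau,\vs_0,\dots,\vs_k)\mapsto\sum_i\vtau_i\vs_i$; this bypasses both the Lipschitz estimate and the compactness/closedness of the Minkowski sum. What the paper's route buys is a quantitative Lipschitz modulus for $r(\vtau)$, in the same spirit as the later continuity estimates for $J_{\vu}$; what your route buys is brevity and fewer auxiliary facts. Both are complete proofs of the stated lemma.
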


\begin{proof}
  	\textit{Upper bound.} The upper bound in \eqref{eq: bound for sum of compact sets} is easy to obtain. Actually, by the triangle inequality and the Cauchy-Schwarz inequality, for any $\vtau \in \S^{q} \cap \R^{q+1}_{+}$, we have
	$$
	\Big\|\sum_{i = 0}^{k} \vtau_i \vs_i \Big\|_2 \le \sum_{i = 0}^{k} \vtau_i \| \vs_i \|_2 \le \|\vtau\|_2 \cdot \sqrt{\sum_{i = 0}^{k} \|\vs_i\|_2^2} \le \|\vtau\|_2 \cdot \sqrt{\sum_{i = 0}^{k} B_i^2} = \sqrt{\sum_{i = 0}^{k} B_i^2} \coloneqq B.
	$$

	\textit{Lower bound.} We prove the lower bound by contradiction. Suppose that there does not exist $b > 0$ satisfying \eqref{eq: bound for sum of sets}, which implies that
	\begin{equation} \label{eq: infimum is 0}
	  \inf_{\vtau \in \S^{k} \cap \R^{k+1}_{+}} \inf_{\vs_i \in S_i, 0 \le i \le k} \Big\|\sum_{i=0}^{k} \vtau_i \vs_i\Big\|_2 = 0.
  	\end{equation}
	Let's consider the function $r(\vtau) \coloneqq \inf_{\vs_i \in S_i, 0 \le i \le k} \big\| \sum_{i=0}^{k} \vtau_i \vs_i\big\|_2$, where $\vtau \in \R^{k+1}_{+}$, and prove that it is continuous. To this end, let $\vtau, \tilde{\vtau} \in \R^{k+1}_{+}$. Note that the sum of compact sets is compact \cite[Excercise 3(d), page 38]{RUDIN1991}. As a result, both $\sum_{i = 0}^{k} \vtau_i S_i$ and $\sum_{i = 0}^{k} \tilde{\vtau}_i S_i$ are compact. It follows that there exists $\vs_i^{\star} \in S_i$, $0 \le i \le k$, such that 
	$$
	\Big\|\sum_{i=0}^{q} \tilde{\vtau}_i \vs_i^{\star}\Big\|_2 = r(\tilde{\vtau}).
	$$
	Therefore, by the triangle inequality, we have
	\begin{align} \label{eq: one side bound}
	  r(\vtau) - r(\tilde{\vtau}) 
	  &\le \Big\|\sum_{i=0}^{k} \vtau_i \vs_i^{\star}\Big\|_2 - \Big\|\sum_{i=0}^{k} \tilde{\vtau}_i \vs_i^{\star}\Big\|_2 \le \Big\|\Big( \sum_{i=0}^{k} \vtau_i \vs_i^{\star} \Big) - \Big( \sum_{i=0}^{k} \tilde{\vtau}_i \vs_i^{\star} \Big)\Big\|_2 \notag \\
	  &= \Big\|\sum_{i=0}^{k} (\vtau_i-\tilde{\vtau}_i) \vs_i^{\star}\Big\|_2 \le \|\vtau - \tilde{\vtau}\|_2 \cdot B. 
  	\end{align}
	In the last inequality, we have used the upper bound \eqref{eq: bound for sum of compact sets}. By interchanging the roles of $\vtau$ and $\tilde{\vtau}$ in \eqref{eq: one side bound}, we obtain that
	\begin{equation} \label{eq: two side bound}
	  \big|r(\vtau) - r(\tilde{\vtau})\big| \le \|\vtau - \tilde{\vtau}\|_2 \cdot B, 
  	\end{equation}
	which implies that $r(\vtau)$ is Lipschitz function. The continuity of $r(\vtau)$ follows immediately. Now recall that a continuous function in a compact set must attain its infimum \cite[Theorem 4.16]{RUDIN1976}, therefore, \eqref{eq: infimum is 0} indicates that there exists a $\vtau \in \S^{k} \cap \R^{k+1}_{+}$ such that
	\begin{equation} \label{eq: origin in sum}
		\inf_{\vs_i \in S_i, 0 \le i \le k} \Big\|\sum_{i=0}^{k} \vtau_i \vs_i\Big\|_2 = 0.
  	\end{equation}
	Since $\sum_{i = 0}^{k} \vtau_i S_i$ is closed, \eqref{eq: origin in sum} implies that $\vO \in \sum_{i=0}^{k} \vtau_i S_i$. A contradiction. Therefore, there must exist some $b > 0$ satisfying \eqref{eq: bound for sum of sets}.
\end{proof}

Lemma \ref{lem: sum of sets} gives upper and lower bounds for the length of elements of sum of compact sets. We remind that when we write $B$ and $b$ hereafter, we always mean the numbers in \eqref{eq: bound for sum of compact sets} and \eqref{eq: bound for sum of sets}, respectively. Using Lemma \ref{lem: sum of sets}, we can study the properties of function $J_{\vu}$, which is the distance of a point to sum of compact sets.

\begin{lemma} [Distance to the sum of compact sets] \label{lem: distance to sum of sets}
  	Let $S_i$, $0 \le i \le k$, be a non-empty, compact, convex subset of $\R^n$ that does not contain the origin. Suppose that $\|\vs_i\|_2 \le B_i$ for some $B_i > 0$ and for any $\vs_i \in S_i$, $0 \le i \le k$. Moreover, suppose that
	\begin{equation} \label{eq: origin not in sum 2}
	  	\vO \notin \sum_{i = 0}^{k} \vtau_i S_i, \ \ \textnormal{for any } \vtau \in \S^{k} \cap \R^{k+1}_{+}.
  	\end{equation}
	Fix a point $\vu \in \R^n$, and define the function $J_{\vu} : \R_+^{k + 1} \rightarrow \R$ by
	$$
	J_{\vu}(\vtau) \coloneqq \dist^2(\vu, \sum_{i = 0}^{k} \vtau_i S_i),
	$$
	where $\vtau = (\vtau_0, \vtau_1,\dots, \vtau_{k}) \in \R_{+}^{k + 1}$.
	Then $J_{\vu}(\vtau)$ has the following properties:
	\begin{enumerate}
	  \item
		The function $J_{\vu}$ is convex and continuous.
	  \item
		The function $J_{\vu}$ has the lower bound
		\begin{equation} \label{eq: lower bound for J_u}
		  J_{\vu}(\vtau) \ge (\|\vtau\|_2 b - \|\vu\|_2)^2,\ \textnormal{when } \|\vtau\|_2 \ge \frac{\|\vu\|_2}{b}.
	  	\end{equation}
		In particular, $J_{\vu}$ attains its minimum in the compact subset $\B(\vO, 2\|\vu\|_2/b) \cap \R^{k+1}_{+}$.
	  \item
		The function $J_{\vu}$ is continuously differentiable, and its partial derivative is
		\begin{equation} \label{eq: partial derivative of J_u}
		  	\frac{\partial J_{\vu}}{\partial \vtau_i}(\vtau) = -2\left< \vu- \sum_{i=0}^{k} \vtau_i \bar{\vs}_i, \bar{\vs}_i\right> \quad \textnormal{for any} \ \vtau \in \R^{k+1}_+,
		\end{equation}
		where $\bar{\vs}_i \in S_i, \, 0 \le i \le k$, satisfies $\dist^2(\vu, \sum_{i = 0}^{k} \vtau_i \bar{\vs}_i) = J_{\vu}(\vtau)$. For $\vtau$ in the boundary of $\R^{k+1}_+$, we interpret the partial derivative $\frac{\partial J_{\vu}}{\partial \vtau_i}$ similarly as the right derivative if $\vtau_i = 0$, i.e., 
		$$
		\frac{\partial J_{\vu}}{\partial \vtau_i} (\vtau) = \lim_{\epsilon \downarrow 0} \frac{J_{\vu}(\vtau_0, \dots, \vtau_{i-1}, \epsilon, \dots, \vtau_k) - J_{\vu}(\vtau_0, \dots, \vtau_{i-1}, 0, \dots, \vtau_k)}{\epsilon}.
		$$
	  \item
		The partial derivative of $J_{\vu}$ satisfies the following bound:
		\begin{equation} \label{eq: bound for partial derivative}
			\Big| \frac{\partial J_{\vu}}{\partial \vtau_i} (\vtau) \Big| \le 2B_i \big( \|\vu\|_2 + \|\vtau\|_2 B \big).
	  	\end{equation}
	  \item
		For any fixed $\vtau \in \R^{k+1}_+$ and any $0 \le i \le k$, the map $\vu \mapsto \frac{\partial J_{\vu}}{\partial \vtau_i}(\vtau)$ is Lipschitz:
		\begin{equation} \label{eq: partial derivative is Lipschitz}
		  	\Big| \frac{\partial J_{\vu}}{\partial \vtau_i}(\vtau) - \frac{\partial J_{\vu'}}{\partial \vtau_i}(\vtau) \Big| \le 2B_i \cdot \|\vu - \vu'\|_2.
		\end{equation}
	\end{enumerate}
\end{lemma}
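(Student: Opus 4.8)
The plan is to treat the five properties in turn, leaning on Lemma \ref{lem: sum of sets} for the quantitative bounds $B$ and $b$ and on standard facts about the Euclidean projection onto a compact convex set. First, for convexity of $J_{\vu}$: I would note that the set-valued map $\vtau \mapsto \sum_{i=0}^{k}\vtau_i S_i$ is ``convex'' in the sense that for $\vtau,\tilde{\vtau}\in\R^{k+1}_+$ and $\theta\in[0,1]$ one has $\sum_i (\theta\vtau_i+(1-\theta)\tilde{\vtau}_i)S_i \supseteq \theta\sum_i\vtau_i S_i + (1-\theta)\sum_i\tilde{\vtau}_i S_i$ (actually equality here, since each $S_i$ is convex, so $(\theta\vtau_i+(1-\theta)\tilde{\vtau}_i)S_i = \theta\vtau_i S_i + (1-\theta)\tilde{\vtau}_i S_i$). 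Then $\dist^2(\vu,\cdot)$ is nonincreasing under set inclusion and $\dist^2(\vu, A+B)$ behaves jointly convexly: picking near-optimal points $\vs_i\in S_i$ and $\tilde{\vs}_i\in S_i$ realizing the two distances, the convex combination $\theta\sum_i\vtau_i\vs_i+(1-\theta)\sum_i\tilde{\vtau}_i\tilde{\vs}_i$ lies in the combined set, and convexity of $t\mapsto\|\vu-t\|_2^2$ finishes it. Continuity then follows from convexity on the relatively open part of $\R^{k+1}_+$, with boundary continuity handled separately (or, more cleanly, deduced from the explicit derivative bound in part 4, which gives local Lipschitz continuity directly).

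For part 2, I would use the lower bound \eqref{eq: bound for sum of sets} from Lemma \ref{lem: sum of sets}: every element of $\sum_i\vtau_i S_i$ with $\vtau\neq\vO$ can be written as $\|\vtau\|_2\cdot\big(\sum_i(\vtau_i/\|\vtau\|_2)\vs_i\big)$ with the inner vector having norm $\ge b$, so the set lies outside the ball of radius $\|\vtau\|_2 b$; hence $\dist(\vu,\sum_i\vtau_i S_i)\ge \|\vtau\|_2 b-\|\vu\|_2$ whenever $\|\vtau\|_2 b\ge\|\vu\|_2$, giving \eqref{eq: lower bound for J_u}. Since $J_{\vu}(\vO)=\dist^2(\vu,\{\vO\})=\|\vu\|_2^2$ and $J_{\vu}(\vtau)>\|\vu\|_2^2$ once $\|\vtau\|_2>2\|\vu\|_2/b$, the minimum is attained in $\B(\vO,2\|\vu\|_2/b)\cap\R^{k+1}_+$, which is compact, and $J_{\vu}$ (being convex, hence continuous on the interior, and lower semicontinuous up to the boundary) attains it.

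For part 3, the key point is that under hypothesis \eqref{eq: origin not in sum 2} the projection point is never the origin; I would write $\vp(\vtau):=\Pi_{\sum_i\vtau_i S_i}(\vu)$, decompose it as $\vp(\vtau)=\sum_i\vtau_i\bar{\vs}_i(\vtau)$ with $\bar{\vs}_i(\vtau)\in S_i$ (possible since the sum of the $\vtau_i S_i$ realizes the projection componentwise — here I would invoke that a point in a Minkowski sum of compact convex sets minimizing distance to $\vu$ admits such a decomposition), and then differentiate. The cleanest route is Danskin-type reasoning: $J_{\vu}(\vtau)=\min_{\vs_i\in S_i}\|\vu-\sum_i\vtau_i\vs_i\|_2^2$ is a minimum of smooth functions over compact sets, so it is directionally differentiable with $\frac{\partial J_{\vu}}{\partial\vtau_i} = -2\langle\vu-\vp(\vtau),\bar{\vs}_i(\vtau)\rangle$; uniqueness of the minimizing decomposition in the relevant inner products (the projection $\vp(\vtau)$ is unique even if the $\bar{\vs}_i$ are not, but $\langle\vu-\vp(\vtau),\bar{\vs}_i\rangle$ is determined because $\vu-\vp(\vtau)\in N(\vp(\vtau);\sum_i\vtau_i S_i)$ forces $\langle\vu-\vp(\vtau),\vs_i-\bar{\vs}_i\rangle\le 0$ for all $\vs_i\in S_i$, and summing over $i$ with the optimality of the full sum pins it down) gives that the directional derivative is linear, hence $J_{\vu}$ is differentiable; continuity of the derivative then follows from continuity of $\vtau\mapsto\vp(\vtau)$ (nonexpansiveness of projection plus continuity of the set map in Hausdorff distance). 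Part 4 is then immediate from Cauchy--Schwarz: $|\frac{\partial J_{\vu}}{\partial\vtau_i}(\vtau)|\le 2\|\bar{\vs}_i\|_2\,\|\vu-\vp(\vtau)\|_2\le 2B_i(\|\vu\|_2+\|\vp(\vtau)\|_2)\le 2B_i(\|\vu\|_2+\|\vtau\|_2 B)$, using \eqref{eq: bound for sum of compact sets}. For part 5, I would fix $\vtau$ and compare $\vp=\vp_{\vu}(\vtau)$ and $\vp'=\vp_{\vu'}(\vtau)$; using that both $\vu-\vp$ and $\vu'-\vp'$ are normal vectors to the same convex set $\sum_i\vtau_i S_i$ and that $\bar{\vs}_i,\bar{\vs}_i'\in S_i$, a short manipulation (adding and subtracting, and exploiting the normal-cone inequalities to control the cross terms) yields $|\langle\vu-\vp,\bar{\vs}_i\rangle-\langle\vu'-\vp',\bar{\vs}_i'\rangle|\le B_i\|\vu-\vu'\|_2$, whence \eqref{eq: partial derivative is Lipschitz}.

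The main obstacle I anticipate is part 3: rigorously justifying the existence of a valid decomposition $\vp(\vtau)=\sum_i\vtau_i\bar{\vs}_i$, establishing that the Danskin-type derivative is genuinely linear in the perturbation (i.e. that $J_{\vu}$ is differentiable and not merely directionally so) despite possible non-uniqueness of the minimizing tuple $(\bar{\vs}_i)$, and getting continuity of the derivative. The resolution hinges on the observation that what enters the formula, $\langle\vu-\vp(\vtau),\bar{\vs}_i(\vtau)\rangle$, is invariant across minimizers because $\vu-\vp(\vtau)$ is normal to the sum set and $\sum_i\vtau_i\bar{\vs}_i=\vp(\vtau)$ is fixed; I would isolate this as the crux and prove it carefully before the rest falls into place.
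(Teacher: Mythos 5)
Your proposal is correct in substance, and for parts 1, 2, 4 and 5 it follows essentially the paper's own route: convexity via the identity $(\theta\vtau_i+(1-\theta)\tilde{\vtau}_i)S_i=\theta\vtau_i S_i+(1-\theta)\tilde{\vtau}_i S_i$, the lower bound \eqref{eq: lower bound for J_u} from Lemma \ref{lem: sum of sets}, Cauchy--Schwarz for \eqref{eq: bound for partial derivative}, and the variational inequality plus non-expansiveness of $\mI-\Pi_E$ for \eqref{eq: partial derivative is Lipschitz}. Where you genuinely diverge is part 3. The paper freezes all but one coordinate, writes $\tilde J_{\vu}(\vtau_i)=\inf_{\vt\in T}\dist^2(\vu-\vt,\vtau_i S_i)$ as a marginal function, and works through subdifferential calculus for inf-projections (showing the subdifferential is a singleton, hence the convex function is differentiable, then patching the coordinates together with a mean-value argument and invoking the one-sided limit formula at the boundary). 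You instead apply a Danskin-type theorem to $J_{\vu}(\vtau)=\min_{\vs\in S_0\times\cdots\times S_k}\|\vu-\sum_i\vtau_i\vs_i\|_2^2$ and observe that the gradient is constant over the solution set because $\langle\vu-\vp(\vtau),\bar\vs_i\rangle=\max_{\vs_i\in S_i}\langle\vu-\vp(\vtau),\vs_i\rangle$, i.e.\ the support function of $S_i$ evaluated at the residual; this is the correct crux, it makes continuity of the derivative transparent (continuity of $\vp(\vtau)$ composed with a support function), and it even yields the right one-sided derivative at $\vtau_i=0$ uniformly, which the paper has to treat as a separate limiting case. Two points need tightening if you write this up: (i) do not derive continuity of $J_{\vu}$ from the derivative bound of part 4 (that is circular in your ordering, and convexity alone only gives continuity on the relative interior); the clean fix is to show directly that $\vtau\mapsto\dist(\vu,\sum_i\vtau_i S_i)$ is $B$-Lipschitz via the triangle inequality, as the paper does. (ii) At a boundary point with $\vtau_i=0$ the constraint $\dist^2(\vu,\sum_j\vtau_j\bar\vs_j)=J_{\vu}(\vtau)$ does not pin down $\bar\vs_i$, so you should state explicitly that the right partial derivative equals $-2\max_{\vs_i\in S_i}\langle\vu-\vp(\vtau),\vs_i\rangle$ rather than $-2\langle\vu-\vp(\vtau),\bar\vs_i\rangle$ for an arbitrary $\bar\vs_i$; your support-function observation already contains this, it just has to be said.
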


\begin{proof}
  	Lemma \ref{lem: distance to sum of sets} is a generalization of \cite[Lemma C.1]{AMEL2014} from a dilated set to the sum of several sets.

	\textit{Convexity.} Note that to prove the convexity of $J_{\vu}$, it is sufficient to prove that the function
	$$
	J^{\frac{1}{2}}_{\vu}(\vtau) \coloneqq \sqrt{J_{\vu}(\vtau)} = \dist (\vu, \sum_{i = 0}^{k} \vtau_i S_i)
	$$
	is convex. To this end, fix any $\vtau, \tilde{\vtau} \in \R^{k + 1}_{+}$ and $\lambda_1, \lambda_2 \in \R_{+}$ satisfying $\lambda_1 + \lambda_2 = 1$. Since $S_i$ is a convex set for $0 \le i \le k$, it follows from \cite[Theorem 3.2]{ROCK1970} that
	\begin{equation} \label{eq: convex decomposition of set}
	  (\lambda_1\vtau_i + \lambda_2 \tilde{\vtau}_i) S_i = \lambda_1\vtau_i S_i + \lambda_2 \tilde{\vtau}_i S_i \ \ \textnormal{for any} \ \ 0 \le i \le k.
  	\end{equation}
	Then by the definition of $J_{\vu}^{\frac{1}{2}}$ and the triangle inequality, we have
	\begin{align*} \label{eq: convexity of J^1/2}
	  	J^{\frac{1}{2}}_{\vu}(\lambda_1\vtau + \lambda_2 \tilde{\vtau}) 
		&= \dist \big(\vu, \sum_{i = 0}^{k} (\lambda_1\vtau_i + \lambda_2 \tilde{\vtau}_i) S_i\big)
		= \dist \big(\vu, \sum_{i = 0}^{k} (\lambda_1\vtau_iS_i + \lambda_2 \tilde{\vtau}_i S_i)\big) \\
		&= \dist \big(\lambda_1\vu + \lambda_2\vu, \lambda_1 (\sum_{i = 0}^{k} \vtau_iS_i) + \lambda_2 (\sum_{i = 0}^{k} \tilde{\vtau}_i S_i)\big) \\
		&= \inf_{\vs_i \in S_i, \tilde{\vs}_i \in S_i, 0 \le i \le k} \Big\|\lambda_1\vu + \lambda_2\vu - \big[\lambda_1 \cdot \big( \sum_{i = 0}^{k} \vtau_i\vs_i \big) + \lambda_2 \cdot \big( \sum_{i = 0}^{k} \tilde{\vtau}_i \tilde{\vs}_i \big)\big] \Big\|_2 \\
		&\le \inf_{\vs_i \in S_i, \tilde{\vs}_i \in S_i, 0 \le i \le k} \lambda_1 \Big\|\vu - \sum_{i = 0}^{k} \vtau_i\vs_i \Big\|_2 + \lambda_2 \Big\| \vu - \sum_{i = 0}^{k} \tilde{\vtau}_i \tilde{\vs}_i \Big\|_2 \\
		&= \lambda_1 \cdot \dist\big(\vu, \sum_{i = 0}^{k} \vtau_i S_i\big) + \lambda_2 \cdot \dist\big(\vu, \sum_{i = 0}^{k} \tilde{\vtau}_i S_i\big) \\
		&=\lambda_1 J^{\frac{1}{2}}_{\vu}(\vtau) + \lambda_2 J^{\frac{1}{2}}_{\vu}(\tilde{\vtau}),
	\end{align*}
	which implies that $J^{\frac{1}{2}}_{\vu}$ is convex. The convexity of $J_{\vu}$ follows immediately.

	\textit{Continiuty.} We first consider the case when $\vtau \in \R^{k+1}_{++}$ and take any $\vepsilon \in \R^{k+1}$. To check the continuity, note that
	\begin{equation} \label{eq: dist to e+t}
	  J^{\frac{1}{2}}_{\vu}(\vtau + \vepsilon) = \dist\Big(\vu, \sum_{i=0}^{k} (\vtau_i + \vepsilon_i) S_i \Big) = \inf_{\vs_i \in S_i, 0 \le i \le k} \Big\| \vu - \big(\sum_{i=0}^{k} \vepsilon_i \vs_i + \sum_{i=0}^{k} \vtau_i \vs_i\big) \Big\|_2.
	\end{equation}
	The triangle inequality gives us that
	\begin{equation} \label{eq: triangle to dist to e+t}
		\Big\| \vu - \sum_{i=0}^{k} \vtau_i \vs_i \Big\|_2 - \Big\| \sum_{i=0}^{k} \vepsilon_i \vs_i \Big\|_2 \le \Big\| \vu - \big(\sum_{i=0}^{k} \vepsilon_i \vs_i + \sum_{i=0}^{k} \vtau_i \vs_i\big) \Big\|_2 \le \Big\| \vu - \sum_{i=0}^{k} \vtau_i \vs_i \Big\|_2 + \Big\| \sum_{i=0}^{k} \vepsilon_i \vs_i \Big\|_2.
  	\end{equation}
	Putting \eqref{eq: dist to e+t} and \eqref{eq: triangle to dist to e+t} together, we obtain that
	\begin{equation} \label{eq: triangle to dist}
		\dist\Big(\vu, \sum_{i=0}^{k} \vtau_i S_i\Big) - \sup_{\vs_i \in S_i, \atop 0 \le i \le k} \Big\| \sum_{i=0}^{k} \vepsilon_i \vs_i \Big\|_2 \le \dist\Big(\vu, \sum_{i=0}^{k} (\vtau_i + \vepsilon_i) S_i\Big)
		\le \dist\Big(\vu, \sum_{i=0}^{k} \vtau_i S_i\Big) + \sup_{\vs_i \in S_i, \atop 0 \le i \le k} \Big\| \sum_{i=0}^{k} \vepsilon_i \vs_i \Big\|_2.
	  \end{equation}
	Now recalling the upper bound in \eqref{eq: bound for sum of compact sets}, we obtain from \eqref{eq: triangle to dist} that
	$$
	\dist\Big(\vu, \sum_{i=0}^{k} \vtau_i S_i\Big) - \|\vepsilon\|_2 B \le \dist\Big(\vu, \sum_{i=0}^{k} (\vtau_i + \vepsilon_i) S_i \Big) \le \dist\Big(\vu, \sum_{i=0}^{k} \vtau_i S_i\Big) + \|\vepsilon\|_2 B.
	$$
	In other words,
	\begin{equation} \label{eq: bound for dist}
		J^{\frac{1}{2}}_{\vu}(\vtau) - \|\vepsilon\|_2 B \le J^{\frac{1}{2}}_{\vu}(\vtau + \vepsilon) \le J^{\frac{1}{2}}_{\vu}(\vtau) + \|\vepsilon\|_2 B.
  	\end{equation}
	Squaring both sides, we obtain that
	$$
	\|\vepsilon\|^2_2 B^2 - 2 \|\vepsilon\|_2 B \cdot J^{\frac{1}{2}}_{\vu} (\vtau) \le J_{\vu}(\vtau + \vepsilon) - J_{\vu}(\vepsilon) \le \|\vepsilon\|_2^2 B^2 + 2 \|\vepsilon\|_2 B \cdot J_{\vu}^{\frac{1}{2}} (\vtau).
	$$
	Moreover, select any $\vs_i \in S_i$, $0 \le i \le k$, and we have
	$$
  J_{\vu}^{\frac{1}{2}}(\vtau) = \dist(\vu, \sum_{i=0}^k \vtau_i S_i) \le \|\vu - \sum_{i=0}^k \vtau_i \vs_i\|_2 \le \|\vu\|_2 + \big\|\sum_{i=0}^k \vtau_i \vs_i\big\|_2 \le \|\vu\|_2 + \|\vtau\|_2 B,
	$$
	where we have used the triangle inequality and the upper bound \eqref{eq: bound for sum of compact sets}. It follows that
	\begin{align} \label{eq: difference of J}
	  	\big| J_{\vu}(\vtau + \vepsilon) - J_{\vu}(\vepsilon) \big|
	  	&\le \|\vepsilon\|_2^2 B^2 + 2 \|\vepsilon\|_2 B \cdot J_{\vu}^{\frac{1}{2}} (\vtau) \notag 
		= \|\vepsilon\|_2^2 B^2 + 2 \|\vepsilon\|_2 B \cdot \dist(\vu, \sum_{i=0}^k \vtau_i S_i) \notag \\
		&\le \|\vepsilon\|_2^2 B^2 + 2 \|\vepsilon\|_2 B \cdot \big( \|\vu\|_2 + \|\vtau\|_2 B \big).
	\end{align}
	Now it is easy to see that if $\epsilon \rightarrow \vO$, we have $\big| J_{\vu}(\vtau + \vepsilon) - J_{\vu}(\vepsilon) \big| \rightarrow 0$. Similar argument holds as well when $\vtau$ is on the boundary of $\R_{+}^{k+1}$. Therefore, we conclude that the function $J_{\vu}$ is continuous in $\R_+^{k+1}$.

	\textit{Attainment of minimum.} Note that by Lemma \ref{lem: sum of sets}, we know that there exists a number $b > 0$ such that
	$$\Big\|\sum_{i = 0}^{k} \vtau_i \vs_i \Big\|_2 \ge b, \ \textnormal{for any } \vtau \in \S^{k} \cap \R^{k+1}_{+}, \ \vs_i \in S_i, \ 0 \le i \le k.
	$$
	Therefore, for any $\vtau \neq \vO$,
	\begin{equation} \label{eq: expression of J}
	  	J^{\frac{1}{2}}_{\vu}(\vtau) 
		= \dist(\vu, \sum_{i=0}^k \vtau_i S_i) = \inf_{\vs_i \in S_i, 0 \le i \le k} \big\|\vu - \sum_{i=0}^{k} \vtau_i \vs_i \big\|_2 \ge \inf_{\vs_i \in S_i, 0 \le i \le k} \big\|\sum_{i=0}^{k} \vtau_i \vs_i \big\|_2 - \|\vu\|_2 \ge \|\vtau\|_2 \cdot b - \|\vu\|_2.
  	\end{equation}
	Thus, when $\|\vtau\|_2 \ge \|\vu\|_2/b$, by squaring both sides of \eqref{eq: expression of J}, we obtain the lower bound
	$$
	J_{\vu}(\vtau) \ge \big( \|\vtau\|_2 \cdot b - \|\vu\|_2 \big)^2.
	$$
	Moreover, if $\|\vtau\|_2 \ge 2\|\vu\|_2/b$, we have $J_{\vu}(\vtau) \ge \|\vu\|_2^2 = J_{\vu}(\vO)$. Then, it follows from the convexity and continuity of $J_{\vu}$ that the function $J_{\vu}$ must attain its minimum in the compact set $\B(\vO, 2\|\vu\|_2/b) \cap \R^{k+1}_{+}$.

	\textit{Continuous differentiability in $\R^{k+1}_{++}$.} To prove that $J_{\vu}$ is continuously differential in $\R^{k+1}_{++}$, we need to show that the partial derivative $\partial J_{\vu} / \partial \vtau_i$ exists and is continuous, for any $0 \le i \le k$. For this purpose, fix any $0 \le i \le k$, and define the function $\tilde{J}_{\vu}(\vtau_i)$ to be
	\begin{equation} \label{eq: rewritten of J}
	  	\tilde{J}_{\vu}(\vtau_i) \coloneqq J_{\vu}(\vtau) =  \dist^2(\vu, \sum_{i=0}^{k} \vtau_i S_i) = \dist^2(\vu, T + \vtau_i S_i) = \inf_{\vt \in T} \dist^2(\vu - \vt, \vtau_i S_i),
  \end{equation}
	where $T = \sum_{0 \le j \le k, j \neq i}\vtau_i S_i$. Now define another function $g(\vtau_i, \vt) = \dist^2(\vu - \vt, \vtau_i S_i)$, $(\vtau_i, \vt) \in \R_{++} \times \R^n$. The function $g(\vtau_i, \vt)$ is continuously differentiable. To see this, first note that the function $\partial g / \partial \vtau_i$ exists, and takes the form
	$$
	\frac{\partial g}{\partial \vtau_i} = -\frac{2}{\vtau_i}\left< \vu- \vt - \Pi_{\vtau_i S_i}(\vu - \vt), \Pi_{\vtau_i S_i}(\vu- \vt)\right>.
	$$
	Moreover, $\partial g / \partial \vtau_i$ is continuous \cite[Lemma C.1, (3)]{AMEL2014}. Next, the function $\tilde{g}(\vt) = \dist^2(\vu- \vt, \vtau_i S_i)$ is differentiable, and the differential is
	$$
	\nabla \tilde{g}(\vt) = -2\big(\vu - \vt - \Pi_{\vtau_i S_i}(\vu - \vt)\big).
	$$
	This point results from \cite[Theorem 2.26]{ROCK1998}. Furthermore, the projection onto a convex set is continuous \cite[Theorem 2.26]{ROCK1998}, hence, $\nabla \tilde{g}$ is a continuous function. It follows that $\partial g / \partial \vt_j$ is continuous for any $1 \le j \le n$. Therefore, we obtain that the function $g(\vtau_i, \vt)$ is continuously differentiable in $\R_{++} \times \R^n$. As a result of \cite[Theorem 2.8]{SPIVAK1965}, $g(\vtau_i, \vt)$ is differentiable in $\R_{++} \times \R^n$, and the differential is
	$$
	\nabla g(\vtau_i, \vt) = \Big[-\frac{2}{\vtau_i}\left< \vu- \vt - \Pi_{\vtau_i S_i}(\vu - \vt), \Pi_{\vtau_i S_i}(\vu- \vt)\right>, -2\big(\vu - \vt - \Pi_{\vtau_i S_i}(\vu - \vt)\big)^T \Big]^T.
	$$
	The subdifferential of a differentiable function contains only the differential of the function \cite[Theorem 25.1]{ROCK1970}. Thus, the subdifferential of $g$ at $(\vtau_i, \vt)$ is
	\begin{equation} \label{eq: subdifferential of g}
		\partial g(\vtau_i, \vt) = \Big\{ \Big[-\frac{2}{\vtau_i}\left< \vu- \vt - \Pi_{\vtau_i S_i}(\vu - \vt), \Pi_{\vtau_i S_i}(\vu- \vt)\right>, -2\big(\vu - \vt - \Pi_{\vtau_i S_i}(\vu - \vt)\big)^T \Big]^T \Big\}.
	\end{equation}
	Since $T$ is compact, we can take a $\bar{\vt} \in T$ such that $g(\vtau_i, \bar{\vt}) = \tilde{J}_{\vu}(\vtau_i)$. Then let us confirm that $-\nabla \tilde{g}(\bar{\vt}) = 2\big(\vu - \bar{\vt} - \Pi_{\vtau_i S_i}(\vu - \bar{\vt})\big) \in N(\bar{\vt}; T)$, where $N(\bar{\vt}; T) \coloneqq \{ \vw \in \R^n: \left< \vw, \vt - \bar{\vt}\right> \le 0,\ \forall \, \vt \in T\}$, denotes the normal cone to $T$ at $\bar{\vt}$. To this end, let $\vs_i \in S_i$ such that $\vtau_i \vs_i = \Pi_{\vtau_i S_i}(\vu - \bar{\vt})$. From another point of view, it is not difficult to see that $\bar{\vt} = \Pi_{T}(\vu - \vtau_i \vs_i)$. Thus, 
	$$
	-\nabla \tilde{g}(\bar{\vt}) = 2\big(\vu - \bar{\vt} - \Pi_{\vtau_i S_i}(\vu - \bar{\vt})\big) = 2\big( \vu - \vtau_i \vs_i - \Pi_{T}(\vu - \vtau_i \vs_i) \big).
	$$
	By \cite[Theorem III.3.1.1]{HIRIART1993}, we know that
	$$
	\left< \vu - \vtau_i \vs_i - \Pi_{T}(\vu - \vtau_i \vs_i), \vt - \Pi_{T}(\vu - \vtau_i \vs_i)\right> \le 0, \ \textnormal{for any} \ \vt \in T.
	$$
	Therefore, we obtain that 
	\begin{equation} \label{eq: subdifferential in normal cone}
	  -\nabla \tilde{g}(\bar{\vt}) = 2\big(\vu - \bar{\vt} - \Pi_{\vtau_i S_i}(\vu - \bar{\vt})\big) \in N(\bar{\vt}; T).
  	\end{equation}
	Now we can give a conclusion about the subdifferential of $\tilde{J}_{\vu}$:
	$$
	\partial \tilde{J}_{\vu}(\vtau_i) = \Big\{ -\frac{2}{\vtau_i}\left< \vu- \bar{\vt} - \Pi_{\vtau_i S_i}(\vu - \bar{\vt}), \Pi_{\vtau_i S_i}(\vu- \bar{\vt})\right> \Big\}.
	$$
	This is a direct consequence of \cite[Example 2.59 and Theorem 2.61]{MORD2014}, \eqref{eq: subdifferential of g}, \eqref{eq: subdifferential in normal cone}, and the fact that $g(\vtau_i, \vt)$ is continuous. That the subdifferential of $\tilde{J}_{\vu}$ is a singleton implies $\tilde{J}_{\vu}$ is differentiable \cite[Theorem 25.1]{ROCK1970}, and the differential is
	$$
	\tilde{J}'_{\vu}(\vtau_i) = -\frac{2}{\vtau_i}\left< \vu- \bar{\vt} - \Pi_{\vtau_i S_i}(\vu - \bar{\vt}), \Pi_{\vtau_i S_i}(\vu- \bar{\vt'})\right>.
	$$
	The above formula is equivalent to that the partial derivative $\partial J_{\vu} / \partial \vtau_i$ exists, and takes the form
	$$
	\frac{\partial J_{\vu}}{\partial \vtau_i}(\vtau) = -\frac{2}{\vtau_i}\left< \vu- \bar{\vt} - \vtau_i \bar{\vs}_i, \vtau_i \bar{\vs}_i\right>,
	$$
	for any $\bar{\vt} \in T, \bar{\vs}_i \in S_i$ such that $\dist^2(\vu, \bar{\vt} + \vtau_i \bar{\vs}_i) = \tilde{J}_{\vu}(\vtau_i) = J_{\vu}(\vtau)$. Since $T = \sum_{0 \le j \le k, j \neq i} \vtau_i S_i$ is compact, hence, 
	$$
  	\bar{\vt} = \sum_{0 \le j \le k, j \neq i}\vtau_j \bar{\vs}_j, \ \textnormal{for some} \ \bar{\vs}_j \in S_j, \ 0 \le j \le k, \ j \neq i.
	$$
	Therefore, the partial derivative $\partial J_{\vu} / \partial \vtau_i$ can be rewritten as
	$$
		\frac{\partial J_{\vu}}{\partial \vtau_i}(\vtau) = -\frac{2}{\vtau_i}\left< \vu- \sum_{i=0}^{k} \vtau_i \bar{\vs}_i, \vtau_i \bar{\vs}_i\right> = -2\left< \vu- \sum_{i=0}^{k} \vtau_i \bar{\vs}_i, \bar{\vs}_i\right>
		$$
	for any $\bar{\vs}_i \in S_i, 0 \le i \le k$, such that $\|\vu - \sum_{i = 0}^{k} \vtau_i \bar{\vs}_i\|_2^2 = J_{\vu}(\vtau)$. It remains to prove that $\partial J_{\vu} / \partial \vtau_i$ is continuous in $\vtau_i$. Indeed, $\tilde{J}_{\vu}$ is a proper convex function, and is differential in $\R_{++}$. It follows from \cite[Theorem 25.5]{ROCK1970} that the gradient mapping $\tilde{J}'_{\vu}$ is continuous in $\R_{++}$, which means that $\partial J_{\vu} / \partial \vtau_i$ is continuous in $\R_{++}$. Since for any $0 \le i \le k$, $\partial J_{\vu} / \partial \vtau_i$ exists and is continuous in $\R_{++}$, we obtain that $J_{\vu}$ is continuously differentiable in $\R^{k+1}_{++}$.

	\textit{Differential at the boundary of $\R^{k+1}_{+}$ and its continuity.} The function $\tilde{J}_{\vu}$ is a closed proper convex function. It is continuous in $[0, +\infty]$ and continuously differentiable in $(0, +\infty)$. Hence, as a consequence of \cite[Theorem 24.1]{ROCK1970}, the right derivative at the origin exists and the limit formula holds. In other words, for any $\vtau \in \R^{k+1}_+$ with $\vtau_i = 0$, we have
	$$
	\frac{\partial J_{\vu}}{\partial \vtau_i} (\vtau) \coloneqq \lim_{\epsilon \downarrow 0} \frac{J_{\vu}(\vtau_0, \dots, \vtau_{i-1},\epsilon, \dots, \vtau_k) - J_{\vu}(\vtau_0, \dots, \vtau_{i-1}, 0, \dots, \vtau_k)}{\epsilon} = \lim_{\vtau_i \downarrow 0}\frac{\partial J_{\vu}}{\partial \vtau_i}(\vtau).
	$$
	To study the continuity of the differential of $J_{\vu}$ at the boundary of $\R^{k+1}_{+}$, without loss of generality, we assume that $\vtau = (\vtau_0, \vtau_1, \dots, \vtau_l, \vtau_{l+1}, \dots, \vtau_k)$, where $\vtau_i > 0$ for $0 \le i \le l$ and $\vtau_i = 0$ for $l < i \le k$. Let $\vh = (\vh_0, \vh_1, \dots, \vh_l, \vh_{l+1}, \dots, \vh_{k})$, where $\vh_i \ge 0$ for $l < i \le k$. Similar as the proof for \cite[Theorem 2.8]{SPIVAK1965}, we have
	\begin{align*}
	  J_{\vu}(\vtau + \vh) - J_{\vu}(\vtau) = & J_{\vu}(\vtau_0 + \vh_0, \vtau_1, \dots, \vtau_q) - J_{\vu}(\vtau_0, \vtau_1, \dots, \vtau_k) \\
	                                          & +J_{\vu}(\vtau_0 + \vh_0, \vtau_1 + \vh_1, \vtau_2, \dots, \vtau_q) - J_{\vu}(\vtau_0 + \vh_0, \vtau_1, \vtau_2, \dots, \vtau_k) \\
											  & + \dots \\
											  & +J_{\vu}(\vtau_0 + \vh_0, \vtau_1 + \vh_1, \dots, \vtau_{k-1} + \vh_{k-1}, \vtau_k + \vh_k) - J_{\vu}(\vtau_0 + \vh_0, \vtau_1 + \vh_1, \dots, \vtau_{k-1} + \vh_{k-1}, \vtau_k).
	\end{align*}
	Let us look at the first term $J_{\vu}(\vtau_0 + \vh_0, \vtau_1, \dots, \vtau_k) - J_{\vu}(\vtau_0, \vtau_1, \dots, \vtau_k)$. By the mean-value theorem, we know that there exist some $\vb_0$ between $\vtau_0$ and $\vtau_0 + \vh_0$ such that
	$$
	J_{\vu}(\vtau_0 + \vh_0, \vtau_1, \dots, \vtau_k) - J_{\vu}(\vtau_0, \vtau_1, \dots, \vtau_k) = \frac{\partial J_{\vu}}{\partial \vtau_0}(\vb_0, \vtau_1, \dots, \vtau_k) \cdot \vh_0.
	$$
	Similarly, for the $i$-th term, there exists some $\vb_{i-1}$ between $\vtau_{i-1}$ and $\vtau_{i-1} + \vh_{i-1}$ such that
	\begin{multline*}
	  	J_{\vu}(\vtau_0 + \vh_0, \dots, \vtau_{i-2} + \vh_{i-2}, \vtau_{i-1} + \vh_{i-1}, \vtau_i, \dots, \vtau_k) - J_{\vu}(\vtau_0 + \vh_0, \dots, \vtau_{i-2} + \vh_{i-2}, \vtau_{i-1}, \vtau_i, \dots, \vtau_k) \\
		= \frac{\partial J_{\vu}}{\partial \vtau_{i-1}}(\vtau_0 + \vh_0, \dots, \vtau_{i-2} + \vh_{i-2}, \vb_{i-1}, \vtau_i, \dots, \vtau_k) \cdot \vh_{i-1}.
  	\end{multline*}
	Then,
	\begin{align*}
	  &\lim_{\vh_i \rightarrow 0, 0 \le i \le l, \atop \vh_i \downarrow 0, l < i \le k} \frac{|J_{\vu}(\vtau + \vh) - J_{\vu}(\vtau) - \sum_{i=0}^k \frac{\partial J_{\vu}}{\partial \vtau_i} \cdot \vh_i|}{\|\vh\|_2} \\
	  &\hspace*{120pt}= \lim_{\vh_i \rightarrow 0, 0 \le i \le l, \atop \vh_i \downarrow 0, l < i \le k} \frac{ \Big|\sum_{i=0}^k \big[\frac{\partial J_{\vu}}{ \partial \vtau_i} (\vtau_0, \dots, \vb_i, \dots, \vtau_k) - \frac{\partial J_{\vu}}{\partial \vtau_i}(\vtau_0, \dots, \vtau_i, \dots\vtau_k) \big] \cdot \vh_i \Big|}{\|\vh\|_2} \\
	  &\hspace*{120pt}\le \lim_{\vh_i \rightarrow 0, 0 \le i \le l, \atop \vh_i \downarrow 0, l < i \le k}  \sum_{i=0}^k  \Big|\frac{\partial J_{\vu}}{ \partial \vtau_i} (\vtau_0, \dots, \vb_i, \dots, \vtau_k) - \frac{\partial J_{\vu}}{\partial \vtau_i}(\vtau_0, \dots, \vtau_i, \dots\vtau_k) \Big| \cdot \frac{|\vh_i|}{\|\vh\|_2} \\
	  &\hspace*{120pt}\le \lim_{\vh_i \rightarrow 0, 0 \le i \le l, \atop \vh_i \downarrow 0, l < i \le k}  \sum_{i=0}^k  \Big|\frac{\partial J_{\vu}}{ \partial \vtau_i} (\vtau_0, \dots, \vb_i, \dots, \vtau_k) - \frac{\partial J_{\vu}}{\partial \vtau_i}(\vtau_0, \dots, \vtau_i, \dots\vtau_k) \Big| \\
	  &\hspace*{120pt}= \lim_{\vb_i \rightarrow \vtau_i, 0 \le i \le l, \atop \vb_i \downarrow \vtau_i, l < i \le k}  \sum_{i=0}^k  \Big|\frac{\partial J_{\vu}}{ \partial \vtau_i} (\vtau_0, \dots, \vb_i, \dots, \vtau_k) - \frac{\partial J_{\vu}}{\partial \vtau_i}(\vtau_0, \dots, \vtau_i, \dots\vtau_k) \Big| \\
	  &\hspace*{120pt} = 0.
	\end{align*}
	The last identity holds because the partial derivative $\frac{\partial J_{\vu}}{ \partial \vtau_i}$ is continuous in $[0, +\infty)$.

	\textit{Bound for the partial derivative.} Using the Cauchy-Schwarz inequality to \eqref{eq: partial derivative of J_u}, we obtain that
	\begin{equation} \label{eq: bound for derivative}
		\Big| \frac{\partial J_{\vu}}{\partial \vtau_i}(\vtau) \Big| \le 2 \big\| \vu- \sum_{i=0}^{k} \vtau_i \bar{\vs}_i\big\|_2 \cdot \|\bar{\vs}_i\|_2.
  	\end{equation}
	The triangle inequality gives
	$$
	\big\| \vu - \sum_{i = 0}^{k} \vtau_i \bar{\vs}_i \big\|_2 \le \|\vu\|_2 + \| \sum_{i = 0}^{k} \vtau_i \bar{\vs}_i \|_2 \le \|\vu\|_2 + \|\vtau\|_2 B.
	$$
	The last inequality comes from \eqref{eq: bound for sum of compact sets}. Substituting it into \eqref{eq: bound for derivative} yields the desired result
	$$
	\Big| \frac{\partial J_{\vu}}{\partial \vtau_i}(\vtau) \Big| \le 2 B_i\big( \|\vu\|_2 + \|\vtau\|_2 B).
	$$

	\textit{Lipschitz property.} Fix any $i$, $0 \le i \le k$, and $\vtau \in \R^{k+1}_+$ satisfying $\vtau_i > 0$. We first make use of \cite[Theorem III.3.1.1]{HIRIART1993} to obtain that
	$$
	\left<\vu - \sum_{j=0}^k \vtau_j \bar{\vs}_j, \sum_{j=0}^k \vtau_j \bar{\vs}_j\right> \ge \left<\vu - \sum_{j=0}^k \vtau_j \bar{\vs}_j, \vtau_i \vs_i + \sum_{0 \le j \le k, \atop j \neq i} \vtau_j \bar{\vs}_j\right> \quad \textnormal{for any} \ \vs_i \in S_i,
	$$
	where $\bar{\vs}_i \in S_i, 0 \le i \le k$, satisfying $\|\vu - \sum_{i = 0}^{k} \vtau_i \bar{\vs}_i\|_2 = \dist(\vu, \sum_{i = 0}^{k} \vtau_i S_i)$. Simplifying the above inequality yields
	$$
	\left<\vu - \sum_{j=0}^k \vtau_j \bar{\vs}_j, \bar{\vs}_i\right> \ge \left<\vu - \sum_{j=0}^k \vtau_j \bar{\vs}_j, \vs_i\right> \quad \textnormal{for any} \ \vs_i \in S_i.
	$$
	Therefore, for any $\vu, \vu' \in \R^n$,
	\begin{align} \label{eq: lipschitz and non-expansive}
	  \left<\vu - \sum_{j=0}^k \vtau_j \bar{\vs}_j, \bar{\vs}_i\right> - \left<\vu' - \sum_{j=0}^k \vtau_j \bar{\vs}_j', \bar{\vs}_i'\right> 
	  &\le \left< \Big[\vu - \sum_{j=0}^k \vtau_j \bar{\vs}_j\Big] - \Big[\vu' - \sum_{j=0}^k \vtau_j \bar{\vs}_j'\Big], \bar{\vs}_i\right> \notag \\
	  &\le \big\| (\mI - \Pi_E )(\vu) - (\mI - \Pi_E )(\vu') \big\|_2 \cdot \|\bar{\vs}_i\|_2 \notag \\
	  &\le \| \vu - \vu' \|_2 \cdot B_i,
	\end{align}
	where $\bar{\vs}'_j \in S_j$ satisfying $\|\vu' - \sum_{j=0}^k \vtau_j \bar{\vs}'_j\|_2 = \dist(\vu', \sum_{j=0}^k \vtau_j S_j)$, and $\Pi_E(\vu)$ denotes the projection of $\vu$ onto the set $E \coloneqq \sum_{i=0}^k \vtau_i S_i$. In the second inequality, we have used the Cauchy-Schwarz inequality, and the last inequality comes from the fact that the map $\mI - \Pi_E$ is non-expansive with respect to the Euclidean norm \cite[pp. 275]{AMEL2014}. Interchanging the roles of $\vu$ and $\vu'$ in \eqref{eq: lipschitz and non-expansive}, we obtain that
	$$
	\Big| \left<\vu - \sum_{j=0}^k \vtau_j \bar{\vs}_j, \bar{\vs}_i\right> - \left<\vu' - \sum_{j=0}^k \vtau_j \bar{\vs}_j', \bar{\vs}_i'\right> \Big| \le \| \vu - \vu' \|_2 \cdot B_i.
	$$
	Now recall the expression \eqref{eq: partial derivative of J_u} for the partial derivative of $J$. The above inequality implies that
	$$
	\Big| \frac{\partial J_{\vu}}{\partial \vtau_i}(\vtau) - \frac{\partial J_{\vu'}}{\partial \vtau_i}(\vtau) \Big| \le 2B_i \cdot \|\vu - \vu'\|_2.
	$$
	For the case when $\vtau_i = 0$, the above formula holds because the limit formula holds. Therefore, the map $\vu \mapsto J_{\vu}$ is Lipschitz.
\end{proof}

\subsection{The Expected Distance to the Sum of Compact Sets} \label{subsec: expected distance to sum of sets}

Using the results in Lemma \ref{lem: distance to sum of sets}, we can study the expected distance to the sum of multiple sets.

\begin{lemma} \label{lem: expected distance to the sum of multiple sets}
  	Let $S_i$, $0 \le i \le k$, be some non-empty, compact, convex subsets of $\R^n$ that do not contain the origin. Suppose that $\|\vs_i\|_2 \le B_i$ for some $B_i > 0$ and for any $\vs_i \in S_i$, $0 \le i \le k$. Suppose that
	\begin{equation*}
	  	\vO \notin \sum_{i = 0}^{k} \vtau_i S_i, \ \ \textnormal{for any } \vtau \in \S^{k} \cap \R^{k+1}_{+}.
  	\end{equation*}
	Define the function $J : \R_+^{k + 1} \rightarrow \R$ by
	$$
	J(\vtau) \coloneqq \E \dist^2(\vg, \sum_{i = 0}^{k} \vtau_i S_i) = \E [J_{\vg}(\vtau)], \ \textnormal{for } \vtau = (\vtau_0, \vtau_1,\dots, \vtau_{k}) \in \R_+^{k + 1},
	$$
	where $\vg \sim N(\vO, \mI_n)$. The function $J$ is convex, continuous, and continuously differentiable in $\R_{+}^{k+1}$. It attains its minimum in a compact subset of $\R_+^{k+1}$. The differential of $J$ is
	\begin{equation} \label{eq: differential of J}
		\nabla J(\vtau) = \E [ \nabla J_{\vg}(\vtau)] \ \textnormal{for all } \vtau \in \R_{+}^{k+1}.
  	\end{equation}
	For $\vtau$ on the boundary of $\R_{+}^{k+1}$, we interpret the partial derivative $\frac{\partial J}{\partial \vtau_i}(\vtau)$ as the right partial derivative if $\vtau_i = 0$, i.e., 
	$$
	\frac{\partial J}{\partial \vtau_i} (\vtau) = \lim_{\epsilon \downarrow 0} \frac{{J}(\vtau_0, \dots, \vtau_{i-1},\epsilon, \dots, \vtau_k) - J(\vtau_0, \dots, \vtau_{i-1}, 0, \dots, \vtau_k)}{\epsilon}.
	$$
	Moreover, suppose that 
	\begin{equation} \label{eq: sets not equal}
	  \sum_{i = 0}^{k} \vtau_i S_i \neq \sum_{i = 0}^{k} \tilde{\vtau}_i S_i \quad \textnormal{for any }\vtau \neq \tilde{\vtau} \in \R^{k+1}_+.
  	\end{equation}
	Then the function $J(\vtau)$ is strictly convex, and attains its minimum at a unique point. 
\end{lemma}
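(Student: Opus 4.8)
The plan is to obtain every assertion about $J(\vtau)=\E[J_{\vg}(\vtau)]$ by integrating the corresponding pointwise statement for $J_{\vu}$ from Lemma~\ref{lem: distance to sum of sets} against the Gaussian measure, the transfer being justified by elementary integrability bounds and dominated convergence. First I would observe that convexity of $J$ is immediate once it is finite, since each $J_{\vg}$ is convex by Lemma~\ref{lem: distance to sum of sets}(1) and an average of convex functions is convex; finiteness follows from $J_{\vg}(\vtau)\le(\|\vg\|_2+\|\vtau\|_2 B)^2$ (pick any selection $\vs_i\in S_i$), whose expectation equals $n+2\|\vtau\|_2 B\,\E\|\vg\|_2+\|\vtau\|_2^2B^2<\infty$. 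For continuity on all of $\R^{k+1}_+$ I would take expectations in the estimate \eqref{eq: difference of J}, which holds for every $\vtau,\vtau+\vepsilon\in\R^{k+1}_+$, to get $|J(\vtau+\vepsilon)-J(\vtau)|\le\|\vepsilon\|_2^2B^2+2\|\vepsilon\|_2B(\E\|\vg\|_2+\|\vtau\|_2 B)\to0$; this covers the boundary automatically.

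For attainment of the minimum, the lower bound in \eqref{eq: expression of J} gives $\dist(\vg,\sum_i\vtau_iS_i)\ge\|\vtau\|_2b-\|\vg\|_2$, hence $J_{\vg}(\vtau)\ge(\|\vtau\|_2b-\|\vg\|_2)_+^2$, and Jensen's inequality (for $t\mapsto t_+$ and then $t\mapsto t^2$) yields $J(\vtau)\ge(\|\vtau\|_2 b-\E\|\vg\|_2)^2$ once $\|\vtau\|_2\ge\E\|\vg\|_2/b$. Choosing $R$ so large that $(Rb-\E\|\vg\|_2)^2>n=J(\vO)$, the infimum of $J$ over $\R^{k+1}_+$ equals its infimum over the compact set $\B(\vO,R)\cap\R^{k+1}_+$, which is attained by continuity.

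Continuous differentiability and the identity $\nabla J=\E[\nabla J_{\vg}]$ I would prove by differentiating under the integral sign. Fix $i$ and $\vtau\in\R^{k+1}_+$; on a bounded neighbourhood of $\vtau$ in $\R^{k+1}_+$ the bound \eqref{eq: bound for partial derivative} gives $|\partial J_{\vg}/\partial\vtau_i|\le 2B_i(\|\vg\|_2+CB)=:h(\vg)$, which is integrable. Since $s\mapsto J_{\vg}(\vtau+s\,\ve_i)$ is convex and, by Lemma~\ref{lem: distance to sum of sets}(3), differentiable on the open part of its domain with a derivative extending continuously to $0$ from the right, its difference quotients at $\vtau$ are bounded by $h(\vg)$ and converge to $\partial J_{\vg}/\partial\vtau_i(\vtau)$ — a two-sided derivative when $\vtau_i>0$, the right derivative when $\vtau_i=0$. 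Dominated convergence then gives $\partial J/\partial\vtau_i(\vtau)=\E[\partial J_{\vg}/\partial\vtau_i(\vtau)]$, which is \eqref{eq: differential of J} together with the stated boundary convention. Continuity of $\vtau\mapsto\E[\partial J_{\vg}/\partial\vtau_i(\vtau)]$ on $\R^{k+1}_+$ follows from continuity of $\vtau\mapsto\partial J_{\vg}/\partial\vtau_i(\vtau)$ (Lemma~\ref{lem: distance to sum of sets}(3)) and a further application of dominated convergence with the same dominating function over compacta; hence all partials of $J$ exist and are continuous, so $J\in C^1(\R^{k+1}_+)$.

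Finally, for strict convexity under \eqref{eq: sets not equal}, I would argue by contradiction: suppose there are $\vtau\neq\tilde\vtau$ in $\R^{k+1}_+$ and $\lambda\in(0,1)$ with $J(\lambda\vtau+(1-\lambda)\tilde\vtau)=\lambda J(\vtau)+(1-\lambda)J(\tilde\vtau)$. Put $E=\sum_i\vtau_iS_i$ and $\tilde E=\sum_i\tilde\vtau_iS_i$; by \eqref{eq: convex decomposition of set}, $\sum_i(\lambda\vtau_i+(1-\lambda)\tilde\vtau_i)S_i=\lambda E+(1-\lambda)\tilde E$, and convexity of $J_{\vg}^{1/2}$ followed by convexity of $t\mapsto t^2$ gives, for every $\vg$,
\[
\dist^2\!\big(\vg,\lambda E+(1-\lambda)\tilde E\big)\le\big(\lambda\dist(\vg,E)+(1-\lambda)\dist(\vg,\tilde E)\big)^2\le\lambda\dist^2(\vg,E)+(1-\lambda)\dist^2(\vg,\tilde E),
\]
with equality in the second step only if $\dist(\vg,E)=\dist(\vg,\tilde E)$. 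The integrand $\lambda J_{\vg}(\vtau)+(1-\lambda)J_{\vg}(\tilde\vtau)-J_{\vg}(\lambda\vtau+(1-\lambda)\tilde\vtau)$ is nonnegative and has zero expectation, so it vanishes for $\vg$ in a set of full Gaussian measure; on that (dense) set $\dist(\vg,E)=\dist(\vg,\tilde E)$, and since both maps are continuous, $\dist(\cdot,E)\equiv\dist(\cdot,\tilde E)$ on $\R^n$. Taking zero level sets gives $E=\tilde E$, contradicting \eqref{eq: sets not equal}; hence $J$ is strictly convex, and since its minimum is attained, the minimizer is unique. I expect this last step to be the only genuinely non-routine point — the rest is dominated convergence together with the careful handling of one-sided derivatives at the boundary of $\R^{k+1}_+$ carried out above.
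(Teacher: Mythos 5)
Your proposal is correct and structurally identical to the paper's proof: both reduce every claim about $J=\E[J_{\vg}]$ to the pointwise statements of Lemma~\ref{lem: distance to sum of sets} and transfer them by integrability bounds and dominated convergence. Two sub-steps differ in mechanism, and both of your variants are valid. For coercivity (attainment of the minimum on a compact set), the paper conditions on the event $\{\|\vg\|_2\le\sqrt{n}\}$ and uses the fact that the median of $\|\vg\|_2$ is at most $\sqrt{n}$ to get $J(\vtau)\ge\tfrac12(\|\vtau\|_2 b-\sqrt n)^2$, whereas you apply Jensen's inequality to the convex map $t\mapsto (t)_+^2$ to get $J(\vtau)\ge(\|\vtau\|_2 b-\E\|\vg\|_2)^2$; yours is slightly cleaner and loses no constant that matters. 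For strict convexity, the paper exhibits a point $\va\in E_1\setminus E_2$ with $\Pi_{E_1}(\va)\neq\Pi_{E_2}(\va)$ and shows the Jensen gap is strictly positive on an open ball around $\va$, which has positive Gaussian measure; you instead deduce from almost-everywhere equality that $\dist(\cdot,E)\equiv\dist(\cdot,\tilde E)$ by continuity and density of full-measure sets, and then recover $E=\tilde E$ from the zero level sets of the distance functions. Your route avoids the explicit projection comparison and is marginally more economical, but the two arguments prove the same dichotomy; there is no gap in either.
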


\begin{proof}
  	There properties follow from the results in Lemma \ref{lem: distance to sum of sets}.

	\textit{Continuity.} We first consider the case when $\vtau \in \R_{++}^{k+1}$ and let $\vepsilon \in \R^{k+1}$. Note that by Jensen's inequality, we have
	$$
	\big| J(\vtau + \vepsilon) - J(\vepsilon) \big| = \Big| \E [ J_{\vg}(\vtau + \vepsilon) - J_{\vg}(\vepsilon)] \Big| \le \E \Big| [ J_{\vg}(\vtau + \vepsilon) - J_{\vg}(\vepsilon)] \Big|.
	$$
	Combining the bound for $\big| [ J_{\vg}(\vtau + \vepsilon) - J_{\vg}(\vepsilon)] \big|$ in \eqref{eq: difference of J}, we obtain
	\begin{align*}
	  	\big| J(\vtau + \vepsilon) - J(\vepsilon) \big|
		&\le \|\vepsilon\|_2^2 B^2 + 2 \|\vepsilon\|_2 B \cdot \big( \E \|\vg\|_2 + \|\vtau\|_2 B \big) \rightarrow 0 \ \textnormal{when }\vepsilon \rightarrow \vO.
	\end{align*}
	Similar argument holds as well when $\vtau$ is on the boundary of $\R_{+}^{k+1}$. Therefore, the function $J$ is continuous in $\R_+^{k+1}$.

	\textit{Convexity.} The convexity of the function $J$ comes from the convexity of the function $J_{\vg}$. In fact, take $\vtau, \tilde{\vtau} \in \R^{k+1}_+$ and let $\lambda_1, \lambda_2 \in \R_+$ and $\lambda_1 + \lambda_2 = 1$. The convexity of $J_{\vg}$ implies that
	$$
	J(\lambda_1 \vtau + \lambda_2 \tilde{\vtau}) = \E J_{\vg} (\lambda_1 \vtau + \lambda_2 \tilde{\vtau}) \le \E \big[ \lambda_1 J_{\vg}(\vtau) + \lambda_2 J_{\vg}(\tilde{\vtau}) \big] = \lambda_1 J(\vtau) + \lambda_2 J(\tilde{\vtau}).
	$$
	Thus, the function $J$ is convex in $\R^{k+1}_+$.

	\textit{Continuous differentiability.} The differentiability of $J$ is a direct consequence of the Dominated Convergence Theorem \cite[Corollary 5.9]{BARTLE1995}. To apply this theorem, note that for any $\vtau \in \R^{k+1}_+$, the function $J_{\vg}(\vtau)$ is integrable with respect to the Gaussian measure, since
	$$
	\E \big| J_{\vg}(\vtau) \big| = \E \inf_{\vs_i \in S_i, 0 \le i \le k} \big\|\vg - \sum_{i=0}^k \vtau_i \vs_i \big\|_2^2 \le \E \big( \|\vg\|_2 + \big\| \sum_{i=0}^k \vtau_i \vs_i \big\|_2 \big)^2 \le \big(\sqrt{n} + \|\vtau\|_2B\big)^2 < \infty,
	$$
	where in the first inequality, we have used the triangle inequality, and in the second inequality, we have used the bound in \eqref{eq: bound for sum of compact sets}. Moreover, the function $J_{\vg}$ is continuously differentiable, and the partial derivative $\frac{\partial J_{\vg}}{\partial \vtau_i} (\vtau)$ has the upper bound in \eqref{eq: bound for partial derivative}. Therefore, we can use the Dominated Convergence Theorem \cite[Corollary 5.9]{BARTLE1995}, which implies that the function $J$ is continuously differentiable, and the partial derivative is
	$$
	\frac{\partial J}{\partial \vtau_i}(\vtau) = \E \Big[ \frac{\partial J_{\vg}}{\partial \vtau_i}(\vtau) \Big] \ \textnormal{for all } \vtau \in \R_{+}^{k+1}.
	$$
	The differential formula \eqref{eq: differential of J} follows immediately.

	\textit{Attainment of minimum in a compact subset.} When $\|\vtau\|_2 b \ge \sqrt{n}$, we have
	\begin{align*}
	  J(\vtau) = \E [ J_{\vg}(\vtau) ] \ge \E \big[ J_{\vg}(\vtau) | \|\vg\|_2 \le \sqrt{n} \big] \cdot \P \big\{ \|\vg\|_2 \le \sqrt{n} \big\} \ge \frac{1}{2} \E \big[ (\|\vtau\|_2 b - \|\vg\|_2)^2 | \|\vg\|_2 \le \sqrt{n} \big] \ge \frac{1}{2} (\|\vtau\|_2 b - \sqrt{n})^2,
	\end{align*}
	where in the first inequality we have used the law of total expectation, and the second comes from \eqref{eq: lower bound for J_u} and the fact that the median of random variable $\|\vg\|_2$ does not exceed $\sqrt{n}$. Therefore, when $\|\vtau\|_2 \ge (1+\sqrt{2})\sqrt{n} / b$, we have
	$$
	J(\vtau) \ge \frac{1}{2}\big( (1+\sqrt{2})\sqrt{n} - \sqrt{n} \big)^2 = n = J(\vO).
	$$
	Since $J$ is convex and continuous, the minimum of $J$ must be attained in the compact set $\B\big(\vO, (1+\sqrt{2})\sqrt{n} / b\big) \cap \R^{k+1}_+$.

	\textit{Strict convexity.} We prove this point by contradiction. Suppose the condition \eqref{eq: sets not equal} holds, but $J$ is not strictly convex. Then by the definition of strict convexity, there exist $\vtau, \tilde{\vtau} \in \R^{k+1}_+$, $\vtau \neq \tilde{\vtau}$, and $\eta \in (0,1)$ such that
	\begin{equation} \label{eq: not strictly convex}
	  \E \big[ J_{\vg} \big(\eta \vtau + (1-\eta)\tilde{\vtau} \big) \big] = \eta \E J_{\vg}(\vtau) + (1-\eta) \E J_{\vg}(\tilde{\vtau}).
	\end{equation}
	In Lemma \ref{lem: distance to sum of sets}, we have shown that $J_{\vg}$ is convex, which means
	\begin{equation} \label{eq: relation resulted from convexity of J_g}
	  J_{\vg} \big(\eta \vtau + (1-\eta)\tilde{\vtau} \big) \le \eta J_{\vg}(\vtau) + (1-\eta) J_{\vg}(\tilde{\vtau}).
  	\end{equation}
	Therefore, the identity \eqref{eq: not strictly convex} holds if and only if the two sides of \eqref{eq: relation resulted from convexity of J_g} is equal almost surely with respect to the Gaussian measure. However, since $\vtau \neq \tilde{\vtau}$, by \eqref{eq: sets not equal}, the two sets $E_1 \coloneqq \sum_{i = 0}^{k} \vtau_i S_i$ and $ E_2 \coloneqq \sum_{i = 0}^{k} \tilde{\vtau}_i S_i$ are not identical. Thus, without loss of generality, we can find a point $\va \in E_1$ but $\va \notin E_2$. It follows that $\Pi_{E_1}(\va) = \va$. But since $E_2$ is compact, we have $\Pi_{E_2}(\va) \neq \va$, so we obtain $\Pi_{E_1}(\va) \neq \Pi_{E_2}(\va)$. Now, let $\vg = \va$, we have
	\begin{align} \label{eq: strict inequality}
	  \eta J_{\vg}(\vtau) + (1-\eta) J_{\vg}(\tilde{\vtau})
		&= \eta \|\vg - \Pi_{E_1}(\vg)\|_2^2 + (1-\eta) \|\vg - \Pi_{E_2}(\vg)\|_2^2
		> \big\| \eta \big(\vg - \Pi_{E_1}(\vg)\big) + (1-\eta)\big(\vg - \Pi_{E_2}(\vg)\big)\big\|_2^2 \notag \\
		&= \big\| \vg - \big(\eta \Pi_{E_1}(\vg) + (1-\eta)\Pi_{E_2}(\vg)\big)\big\|_2^2.
	\end{align}
	The strict inequality comes from the strict convexity of square function, the fact that $0 < \eta < 1$ and the fact that $\Pi_{E_1}(\vg) \neq \Pi_{E_2}(\vg)$. In addition, note that
	\begin{equation} \label{eq: projection in sum}
		\eta \Pi_{E_1}(\vg) + (1-\eta)\Pi_{E_2}(\vg) \in \eta E_1 + (1-\eta) E_2,
  	\end{equation}
	and that
	\begin{equation} \label{eq: put sets together}
	  \eta E_1 + (1-\eta) E_2 = \eta \sum_{i = 0}^{k} \vtau_i S_i + (1-\eta)\sum_{i = 0}^{k} \tilde{\vtau}_i S_i = \sum_{i=0}^k \big(\eta \vtau_i + (1-\eta) \tilde{\vtau}_i \big) S_i,
  	\end{equation}
	where the last identity results from \cite[Theorem 3.2]{ROCK1970}. Putting \eqref{eq: projection in sum} and \eqref{eq: put sets together} together, we obtain
	\begin{equation} \label{eq: projection in set}
	  \eta \Pi_{E_1}(\vg) + (1-\eta)\Pi_{E_2}(\vg) \in \sum_{i=0}^k \big(\eta \vtau_i + (1-\eta) \tilde{\vtau}_i \big) S_i.
	\end{equation}
	Substituting \eqref{eq: projection in set} into \eqref{eq: strict inequality}, we obtain 
	\begin{align*}
	  \eta J_{\vg}(\vtau) + (1-\eta) J_{\vg}(\tilde{\vtau}) 
		&> \big\| \vg - \big(\eta \Pi_{E_1}(\vg) + (1-\eta)\Pi_{E_2}(\vg)\big)\big\|_2^2 \ge \inf_{\vs_i \in S_i, 0 \le i \le k} \big\| \vg - \sum_{i = 0}^k \big(\eta \vtau_i + (1-\eta) \tilde{\vtau}_2 \big)\vs_i \big\|_2^2 \\
		&= J_{\vg}\big(\eta \vtau + (1-\eta) \tilde{\vtau}\big).
  	\end{align*}
	Moreover, it is easy to see that the map $\vg \mapsto J_{\vg}$ is continuous. Therefore, there exists some $\epsilon > 0$ such that when $\vg \in \B(\va, \epsilon)$, we have
	$$
	\eta J_{\vg}(\vtau) + (1-\eta) J_{\vg}(\tilde{\vtau}) > J_{\vg}\big(\eta \vtau + (1-\eta) \tilde{\vtau}\big).
	$$
	This contravenes \eqref{eq: not strictly convex}. 

	\textit{Attainment of minimum at a unique point.} We have shown that $J$ attains its minimum in the compact set $\B\big(\vO, (1+\sqrt{2})\sqrt{n} / b\big) \cap \R^{k+1}_+$. Now, since $J$ is strictly convex and continuous, it must attain its minimum at a unique point in $\B\big(\vO, (1+\sqrt{2})\sqrt{n} / b\big) \cap \R^{k+1}_+$.
\end{proof}

\subsection{Proof of Theorem \ref{th: calc_statistical_dimension_1}} \label{subsec: completion of proof of recipe 1}

Actually, Lemma \ref{lem: sum of sets}, Lemma \ref{lem: distance to sum of sets} and Lemma \ref{lem: expected distance to the sum of multiple sets} together almost prove our Theorem \ref{th: calc_statistical_dimension_1}, except that we do not show the conditions in Lemma \ref{lem: sum of sets} are satisfied. Thus, to prove Theorem \ref{th: calc_statistical_dimension_1}, it remains to show that $\vO \notin \sum_{i = 1}^k \vtau_i \cdot \partial f(\fxs)$ for any $\vtau \in \S^k \cap \R^{k+1}_+$. The following lemma confirms this point.

\begin{lemma} \label{lem: sum of subdifferentials not contain origin}
  	Suppose that for any $0 \le i \le k$, the function $f_i: \R^n \rightarrow \overline{\R}$ is a proper convex function, that
	\begin{equation} \label{eq: intersection of interiors not empty}
		\ri\big(D(f_0, \fxs) \big) \cap \ri\big(D(f_1, \fxs)\big) \cap \dots \cap \ri\big(D(f_k, \fxs)\big) \neq \emptyset,
  	\end{equation}
	and that the subdifferential $\partial f(\fxs)$ is non-empty and does not contain the origin. Then for any $\vtau \in \S^{k} \cap \R^{k+1}_+$, we have
	$$
	\vO \notin \sum_{i=0}^k \vtau_i \cdot \partial f(\fxs).
	$$
\end{lemma}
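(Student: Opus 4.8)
The plan is to argue by contradiction. Suppose that for some $\vtau\in\S^{k}\cap\R^{k+1}_{+}$ we have $\vO\in\sum_{i=0}^{k}\vtau_i\cdot\partial f_i(\fxs)$, so that there exist $\vs_i\in\partial f_i(\fxs)$, $0\le i\le k$, with $\sum_{i=0}^{k}\vtau_i\vs_i=\vO$. Since $\|\vtau\|_2=1$, the vector $\vtau$ is nonzero, so there is an index $j$ with $\vtau_j>0$. By the hypothesis \eqref{eq: intersection of interiors not empty} we may fix a common descent direction $\vd^{\star}\in\bigcap_{i=0}^{k}\ri\big(D(f_i,\fxs)\big)$. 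The whole argument then reduces to showing that the one-sided directional derivative $f_j'(\fxs;\vd^{\star})=\lim_{t\downarrow 0}\big(f_j(\fxs+t\vd^{\star})-f_j(\fxs)\big)/t$ is \emph{strictly} negative, and likewise that $f_i'(\fxs;\vd^{\star})\le 0$ for the remaining $i$.

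The key ingredient is the following claim: \textbf{if} $h:\R^n\rightarrow\overline{\R}$ is proper convex with $\partial h(\vz)\neq\emptyset$ and $\vO\notin\partial h(\vz)$, \textbf{then} $h'(\vz;\vd)<0$ for every $\vd\in\ri\big(D(h,\vz)\big)$. To prove it, note first that $\vd\in D(h,\vz)$ forces $h'(\vz;\vd)\le 0$ by monotonicity of the difference quotients of the convex function $t\mapsto h(\vz+t\vd)$, and that this quantity is $>-\infty$ because $\partial h(\vz)\neq\emptyset$. Suppose $h'(\vz;\vd)=0$. For any $\vd'\in D(h,\vz)$, the prolongation property of the relative interior \cite{ROCK1970} yields a $\nu>0$ with $\vw\coloneqq(1+\nu)\vd-\nu\vd'\in D(h,\vz)$, hence $h'(\vz;\vw)\le 0$; writing $\vd=\tfrac{1}{1+\nu}\vw+\tfrac{\nu}{1+\nu}\vd'$ and using sublinearity (in particular convexity) of $\vu\mapsto h'(\vz;\vu)$ gives $0=h'(\vz;\vd)\le\tfrac{1}{1+\nu}h'(\vz;\vw)+\tfrac{\nu}{1+\nu}h'(\vz;\vd')\le\tfrac{\nu}{1+\nu}h'(\vz;\vd')$, so $h'(\vz;\vd')\ge 0$ and therefore $h'(\vz;\vd')=0$ for all $\vd'\in D(h,\vz)$. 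On the other hand, if $\vd'\notin D(h,\vz)$ then $h(\vz+t\vd')>h(\vz)$ for every $t>0$, whence $h'(\vz;\vd')\ge 0$. Thus $h'(\vz;\cdot)\ge 0$ on all of $\R^n$, and since $h(\vz+\vt)-h(\vz)\ge h'(\vz;\vt)$ for convex $h$, the point $\vz$ minimizes $h$ globally, i.e. $\vO\in\partial h(\vz)$ — a contradiction. This proves the claim.

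With the claim in hand the conclusion is immediate. Applying it to each $h=f_i$ at $\vz=\fxs$ (each $\partial f_i(\fxs)$ being non-empty and not containing the origin) along the common direction $\vd^{\star}$ gives $f_i'(\fxs;\vd^{\star})<0$ for all $i$, in particular $f_j'(\fxs;\vd^{\star})<0$. Since any subgradient obeys $\langle\vs_i,\vd^{\star}\rangle\le f_i'(\fxs;\vd^{\star})$, we get $\langle\vs_j,\vd^{\star}\rangle<0$ and $\langle\vs_i,\vd^{\star}\rangle\le 0$ for $i\neq j$. Hence
$$
0=\langle\vO,\vd^{\star}\rangle=\sum_{i=0}^{k}\vtau_i\langle\vs_i,\vd^{\star}\rangle\le\vtau_j\langle\vs_j,\vd^{\star}\rangle<0,
$$
a contradiction. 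Therefore $\vO\notin\sum_{i=0}^{k}\vtau_i\cdot\partial f_i(\fxs)$ for every $\vtau\in\S^{k}\cap\R^{k+1}_{+}$.

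I expect the only genuinely nontrivial step to be the claim, and inside it the passage that combines the prolongation property of the relative interior with the sublinearity of $h'(\vz;\cdot)$; everything else — monotonicity of difference quotients, the behavior of $h'(\vz;\cdot)$ outside the descent cone, and the subgradient inequality $\langle\vs_i,\vd^{\star}\rangle\le f_i'(\fxs;\vd^{\star})$ — is routine convex analysis. One should also verify the finiteness of the directional derivatives that enter the convex combination, which is automatic since $\partial f_i(\fxs)\neq\emptyset$ keeps $f_i'(\fxs;\cdot)>-\infty$ while membership in the descent cone keeps it $\le 0$; note that no compactness of the subdifferentials is needed for this lemma.
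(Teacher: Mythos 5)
Your proof is correct. It shares the paper's skeleton --- argue by contradiction from $\sum_{i=0}^{k}\vtau_i\vs_i=\vO$, pick a common direction $\vd^{\star}$ in $\bigcap_{i}\ri\big(D(f_i,\fxs)\big)$, and play a strict inequality off against a non-strict one --- but the mechanism for extracting strictness is genuinely different. The paper sums the subgradient inequalities to get $\sum_i\vtau_i\big(f_i(\fx)-f_i(\fxs)\big)\ge 0$ for all $\fx$, and then uses Rockafellar's relative-interior calculus (Corollary 6.8.1, Theorem 6.1 and, crucially, Theorem 7.6 on the coincidence of the relative interiors of $\{f_i\le f_i(\fxs)\}$ and $\{f_i<f_i(\fxs)\}$) to manufacture a single point $\fxs+t\vd$ at which every $f_i$ strictly decreases. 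You instead prove a self-contained claim at the level of directional derivatives --- $h'(\vz;\cdot)<0$ on $\ri\big(D(h,\vz)\big)$ whenever $\partial h(\vz)\neq\emptyset$ and $\vO\notin\partial h(\vz)$ --- via the prolongation property of the relative interior and sublinearity of $h'(\vz;\cdot)$, and then contradict $\sum_i\vtau_i\langle\vs_i,\vd^{\star}\rangle=0$ directly. Your claim is essentially the infinitesimal version of what the paper extracts from Theorem 7.6; it yields a reusable, first-principles lemma and, as a byproduct, shows $\vO\notin\ri\big(D(f_i,\fxs)\big)$, which quietly justifies the paper's ``take any non-zero point'' step (a point neither write-up makes explicit: as stated your claim would fail at $\vd=\vO$, but running your argument with $\vd=\vO$ again forces $\vO\in\partial h(\vz)$, so the hypothesis excludes that case). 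The cost is a somewhat longer argument than simply citing Rockafellar.
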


\begin{proof}
  	We prove by contradiction. Suppose condition \eqref{eq: intersection of interiors not empty} holds, but there exist a $\vtau \in \S^k \cap \R^{k+1}_+$ such that
  	$$
	\vO \in \sum_{i=0}^k \vtau_i \cdot \partial f(\fxs).
	$$
	So we can find $\vomega_i \in \partial f_i(\fxs)$, $0 \le i \le k$, satisfying $\sum_{i = 0}^k \vtau_i \vomega_i = \vO$. Now fix any $\fx \in \R^n$. By the definition of subdifferential, we know that
	$$
	f_i(\fx) - f_i(\fxs) \ge \left< \vomega_i, \fx - \fxs \right> \ \textnormal{for} \ 0 \le i \le k.
	$$
	Multiplying both sides by $\tau_i$ and taking the sum over $0 \le i \le k$, we obtain that
	\begin{equation} \label{eq: sum of functions be positive}
		\sum_{i = 0}^k \vtau_i \big( f_i(\fx) - f_i(\fxs) \big) \ge \left< \sum_{i=0}^k \vtau_i \vomega_i, \fx - \fxs \right> = 0.
  	\end{equation}
  	On the other hand, take any non-zero point $\vd \in \ri\big(D(f_0, \fxs) \big) \cap \ri\big(D(f_1, \fxs)\big) \cap \dots \cap \ri\big(D(f_k, \fxs)\big)$. Since $D(f_i,\fxs) = \cone\big(D_s(f_i,\fxs)\big)$, where $D_s(f_i, \fxs)$ is defined as
	$$
	D_s(f_i, \fxs) = \{ \vd: f_i(\fxs + \vd) \le f_i(\fxs) \},
	$$
	it follows from \cite[Corollary 6.8.1]{ROCK1970} that there exist numbers $t_0, \dots, t_k > 0$ such that
	$$
	t_i \vd \in \ri\big(D_s(f_i, \fxs)\big) \quad \textnormal{for any}\ 0 \le i \le k.
	$$
	Let $t = \min_{0\le i \le k} t_i$. The convexity of $f_i$ implies that the set $D_s(f_i,\fxs)$ is a convex set. Thus, we have
	$$
	t \vd = (1-\lambda_i) \cdot \vO + \lambda_i \cdot t_i \vd \in \ri \big(D_s(f_i,\fxs)\big) \quad \textnormal{for any} \ 0 \le i \le k,
	$$
  	where $\lambda_i = t / t_i \in (0,1]$. This point follows from \cite[Theorem 6.1]{ROCK1970}. In other words, 
	$$
	\fxs + t \vd \in \ri \{ \fx: f_i(\fx) \le f_i(\fxs)\} \quad \textnormal{for any} \ 0 \le i \le k.
	$$
	Now \cite[Theorem 7.6]{ROCK1970} tells us that the two set $\{ \fx: f_i(\fx) \le f_i(\fxs)\}$ and $\{ \fx: f_i(\fx) < f_i(\fxs)\}$ have the same closure and the same relative interior, so
	$$
	\fxs + t \vd \in \ri \{ \fx: f_i(\fx) < f_i(\fxs)\} \quad \textnormal{for any} \ 0 \le i \le k.
	$$
	As a result, we must have
	\begin{equation} \label{eq: strict inequality for every f}
		f_i(\fxs + t \vd) < f_i(\fxs) \quad \textnormal{for any} \ 0 \le i \le k.
  	\end{equation}
	Since $\vtau \in \S^{k} \cap \R^{k+1}_+$, some of the coordinates of $\vtau$ are positive, so \eqref{eq: strict inequality for every f} indicates that
	$$
	\sum_{i = 0}^k \vtau_i \big( f_i(\fxs + t \vd) - f_i(\fxs) \big) < 0.
	$$
	This contravenes \eqref{eq: sum of functions be positive}.
\end{proof}

\section{Proof of Theorem \ref{th: calc_statistical_dimension_2}} \label{app: proof for recipe 2}

In this section, we prove our Theorem \ref{th: calc_statistical_dimension_2}. The proof idea is essentially the same with that of Theorem \ref{th: calc_statistical_dimension_1}, either of which is inspired by \cite{AMEL2014}. Nevertheless, some of the details are different. For the sake of completeness, we include the detailed proof for Theorem \ref{th: calc_statistical_dimension_2}.

\subsection{Distance to the Sum of Sets} \label{subsec: dist to sum of unbouded sets}

Similar as in the proof for Theorem \ref{th: calc_statistical_dimension_1}, in this subsection, we study a simpler function, which describes the distance of a point to the sum of sets.

\begin{lemma} [Sum of sets] \label{lem: sum of unbounded sets}
  Let $S_i$, $0 \le i \le q$, be some non-empty, compact, convex subsets of $\R^n$ that do not contain the origin, and $K$ be a non-empty and convex subset of $\R^n$ that contains the origin. Suppose that $\|\vs_i\|_2 \le \tilde{B}_i$ for some $\tilde{B}_i > 0$ and for any $\vs_i \in S_i$, $0 \le i \le q$. Then there exists a number $\tilde{B} > 0$ such that
	\begin{equation} \label{eq: upper bound for sum of unbounded sets}
	  \Big\|\sum_{i = 0}^{q} \vtau_i \vs_i \Big\|_2 \le \tilde{B}, \ \textnormal{for any } \vtau \in \S^{q} \cap \R^{q+1}_{+}\ \textnormal{and} \ \vs_i \in S_i, \ 0 \le i \le q.
  	\end{equation}
	Moreover, suppose that
 	\begin{equation*} \label{eq: origin not in sum_2}
 	  	\vO \notin \overline{K} + \sum_{i = 0}^{q} \vtau_i S_i, \ \textnormal{for any } \vtau \in \S^{q} \cap \R^{q+1}_{+}.
   	\end{equation*}
	Then there exists a number $\tilde{b} > 0$ such that
 	\begin{equation} \label{eq: bound for sum of unbounded sets}
	  \Big\|\vkappa + \sum_{i = 0}^{q} \vtau_i \vs_i \Big\|_2 \ge \tilde{b}, \ \textnormal{for any } \vtau \in \S^{q} \cap \R^{q+1}_{+}, \vkappa \in K, \ \textnormal{and} \ \vs_i \in S_i, \ 0 \le i \le q.
   	\end{equation}
\end{lemma}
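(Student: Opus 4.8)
The plan is to follow the same two-part template as the proof of Lemma~\ref{lem: sum of sets}: the upper bound is immediate and does not involve $K$ at all, while the lower bound is proved by contradiction. The new feature relative to Lemma~\ref{lem: sum of sets}---the possibly unbounded, origin-containing set $K$---will turn out to be harmless, because it cancels out of the Lipschitz estimate that drives the argument. For the upper bound \eqref{eq: upper bound for sum of unbounded sets}, I would argue exactly as in Lemma~\ref{lem: sum of sets}: for $\vtau \in \S^{q}\cap\R^{q+1}_{+}$ the triangle inequality and Cauchy--Schwarz give
$$
\Big\|\sum_{i=0}^{q}\vtau_i\vs_i\Big\|_2 \ \le\ \sum_{i=0}^{q}\vtau_i\|\vs_i\|_2 \ \le\ \|\vtau\|_2\sqrt{\sum_{i=0}^{q}\tilde{B}_i^{2}} \ =\ \sqrt{\sum_{i=0}^{q}\tilde{B}_i^{2}} \ \coloneqq\ \tilde{B},
$$
using $\|\vtau\|_2 = 1$; note that this $\tilde B$ does not depend on $K$.

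For the lower bound \eqref{eq: bound for sum of unbounded sets}, I would mimic the contradiction argument of Lemma~\ref{lem: sum of sets}. Set
$$
r(\vtau) \ \coloneqq\ \dist\Big(\vO,\ K+\sum_{i=0}^{q}\vtau_iS_i\Big) \ =\ \inf_{\vkappa\in K,\ \vs_i\in S_i}\Big\|\vkappa+\sum_{i=0}^{q}\vtau_i\vs_i\Big\|_2, \qquad \vtau\in\R^{q+1}_{+},
$$
so that \eqref{eq: bound for sum of unbounded sets} is equivalent to $\inf_{\vtau\in\S^{q}\cap\R^{q+1}_{+}}r(\vtau)>0$. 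The key step is that $r$ is Lipschitz with constant $\tilde B$, \emph{independent of $K$}: given $\vtau,\tilde{\vtau}\in\R^{q+1}_{+}$ and $\epsilon>0$, pick $\vkappa^{\star}\in K$ and $\vs_i^{\star}\in S_i$ with $\big\|\vkappa^{\star}+\sum_{i=0}^{q}\tilde{\vtau}_i\vs_i^{\star}\big\|_2\le r(\tilde{\vtau})+\epsilon$; then
$$
r(\vtau)\ \le\ \Big\|\vkappa^{\star}+\sum_{i=0}^{q}\vtau_i\vs_i^{\star}\Big\|_2\ \le\ \Big\|\vkappa^{\star}+\sum_{i=0}^{q}\tilde{\vtau}_i\vs_i^{\star}\Big\|_2+\Big\|\sum_{i=0}^{q}(\vtau_i-\tilde{\vtau}_i)\vs_i^{\star}\Big\|_2\ \le\ r(\tilde{\vtau})+\epsilon+\|\vtau-\tilde{\vtau}\|_2\,\tilde{B},
$$
where the $\vkappa^{\star}$-term cancels and the last inequality is Cauchy--Schwarz. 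Letting $\epsilon\downarrow 0$ and interchanging the roles of $\vtau$ and $\tilde{\vtau}$ yields $|r(\vtau)-r(\tilde{\vtau})|\le\tilde{B}\|\vtau-\tilde{\vtau}\|_2$, so $r$ is continuous on $\R^{q+1}_{+}$.

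With continuity established, since $\S^{q}\cap\R^{q+1}_{+}$ is compact, $r$ attains its infimum there at some $\vtau^{\star}$ \cite[Theorem 4.16]{RUDIN1976}. If that infimum were $0$, there would exist sequences $\vkappa_m\in K$ and $\vs_{i,m}\in S_i$ ($0\le i\le q$) with $\vkappa_m+\sum_{i=0}^{q}\vtau^{\star}_i\vs_{i,m}\to\vO$; passing to a subsequence along which each $\vs_{i,m}\to\vs_i^{\star}\in S_i$ (compactness of $S_i$) forces $\vkappa_m\to-\sum_{i=0}^{q}\vtau^{\star}_i\vs_i^{\star}$, so this limit lies in $\overline{K}$ and $\vO\in\overline{K}+\sum_{i=0}^{q}\vtau^{\star}_iS_i$, contradicting the hypothesis $\vO\notin\overline{K}+\sum_{i=0}^{q}\vtau_iS_i$. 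Hence $\tilde b\coloneqq\inf_{\vtau\in\S^{q}\cap\R^{q+1}_{+}}r(\vtau)>0$, and since $K\subseteq\overline{K}$ this $\tilde b$ also satisfies \eqref{eq: bound for sum of unbounded sets}.

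I do not anticipate a real obstacle: the lemma is essentially a bookkeeping extension of Lemma~\ref{lem: sum of sets}. The only point deserving care is the bridge between the distance function $r$ (defined with $K$) and the hypothesis (stated with $\overline{K}$); the sequential-compactness argument above handles this cleanly by passing the limit of $\vkappa_m$ into $\overline{K}$, so that no separate appeal to ``closed plus compact is closed'' is needed. One must also verify that the Lipschitz estimate uses only the bounds $\tilde B_i$ and no property of $K$---which it does, precisely because the $\vkappa^{\star}$-terms cancel.
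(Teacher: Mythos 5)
Your proof is correct and follows essentially the same route as the paper's: the upper bound via Cauchy--Schwarz, and the lower bound by showing the distance function $r(\vtau)$ is $\tilde{B}$-Lipschitz on $\R^{q+1}_{+}$ and then deriving a contradiction with the hypothesis on $\overline{K}$ at a minimizer over the compact set $\S^{q}\cap\R^{q+1}_{+}$. The only (harmless) technical variation is that you use $\epsilon$-approximate minimizers and a sequential-compactness limit argument, where the paper instead invokes that $\overline{K}+\sum_i\vtau_iS_i$ is closed (compact plus closed) to work with exact minimizers.
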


\begin{proof}
  	The proof of the upper bound \eqref{eq: upper bound for sum of unbounded sets} is the same with that of Lemma \ref{lem: sum of sets}, hence, we omit it. For the lower bound, we prove it by contradiction. Suppose that there does not exist $\tilde{b} > 0$ satisfying \eqref{eq: bound for sum of unbounded sets}, which implies that
 	\begin{equation} \label{eq: infimum is 0_2}
 		\inf_{\vtau \in \S^{q} \cap \R^{q+1}_{+}} \inf_{\vkappa \in K, \atop \vs_i \in S_i, 0 \le i \le q} \Big\|\vkappa + \sum_{i=0}^{q} \vtau_i \vs_i\Big\|_2 = 0.
   	\end{equation}
	Let's consider the function $r(\vtau) \coloneqq \inf_{\vkappa \in K, \vs_i \in S_i, 0 \le i \le q} \big\| \vkappa + \sum_{i=0}^{q} \vtau_i \vs_i\big\|_2$, where $\vtau \in \R^{q+1}_{+}$, and prove that it is continuous. To this end, let $\vtau, \tilde{\vtau} \in \R^{q+1}_{+}$. Since the sum of compact sets is compact \cite[Excercise 3(d), page 38]{RUDIN1991} and the sum of a compact set and a closed set is closed \cite[Exercise 3(e), page 38]{RUDIN1991}, as a result, both $\overline{K} + \sum_{i = 0}^{q} \vtau_i S_i$ and $\overline{K}+\sum_{i = 0}^{q} \tilde{\vtau}_i S_i$ are closed. It follows that there exist $\vkappa^{\star} \in \overline{K}$ and $\vs_i^{\star} \in S_i, 0 \le i \le q$ such that 
 	$$
	\Big\|\vkappa^{\star} + \sum_{i=0}^{q} \tilde{\vtau}_i \vs_i^{\star}\Big\|_2 = r(\tilde{\vtau}).
 	$$
 	Therefore, by the triangle inequality, we have
 	\begin{align} \label{eq: one side bound_2}
	  r(\vtau) - r(\tilde{\vtau}) 
	  &\le \Big\|\vkappa^{\star} + \sum_{i=0}^{q} \vtau_i \vs_i^{\star}\Big\|_2 - \Big\|\vkappa^{\star} + \sum_{i=0}^{q} \tilde{\vtau}_i \vs_i^{\star}\Big\|_2 \le \Big\|\Big( \vkappa^{\star} + \sum_{i=0}^{q} \vtau_i \vs_i^{\star} \Big) - \Big( \vkappa^{\star} + \sum_{i=0}^{q} \tilde{\vtau}_i \vs_i^{\star} \Big)\Big\|_2 \notag \\
	  &= \Big\|\sum_{i=0}^{q} (\vtau_i-\tilde{\vtau}_i) \vs_i^{\star}\Big\|_2 \le \|\vtau - \tilde{\vtau}\|_2 \cdot \tilde{B}. 
   	\end{align}
	The last inequality comes from the inequality \eqref{eq: upper bound for sum of unbounded sets}. By interchanging the roles of $\vtau$ and $\tilde{\vtau}$ in \eqref{eq: one side bound_2}, we obtain that
 	\begin{equation} \label{eq: two side bound_2}
	  \big|r(\vtau) - r(\tilde{\vtau})\big| \le \|\vtau - \tilde{\vtau}\|_2 \cdot \tilde{B},
   	\end{equation}
 	which implies that $r(\vtau)$ is Lipschitz function. Therefore, $r(\vtau)$ is continuous. Recall that a continuous function in a compact set must attain its infimum \cite[Theorem 4.16]{RUDIN1976}, therefore, \eqref{eq: infimum is 0_2} indicates that there exists a $\vtau \in \S^{q} \cap \R^{q+1}_{+}$ such that
 	\begin{equation} \label{eq: origin in sum_2}
 		\inf_{\vkappa \in K, \atop \vs_i \in S_i, 0 \le i \le q} \Big\|\vkappa + \sum_{i=0}^{q} \vtau_i \vs_i\Big\|_2 = 0.
   	\end{equation}
	Since $\overline{K} + \sum_{i = 0}^{q} \vtau_i S_i$ is closed, \eqref{eq: origin in sum_2} implies that $\vO \in \overline{K} + \sum_{i=0}^{q} \vtau_i S_i$. A contradiction. Therefore, there must exist some $\tilde{b} > 0$ satisfying \eqref{eq: bound for sum of unbounded sets}.
\end{proof}

Similar as before, we remind that when we write $\tilde{B}$ and $\tilde{b}$ hereafter, we always mean the numbers in \eqref{eq: upper bound for sum of unbounded sets} and \eqref{eq: bound for sum of unbounded sets}, respectively. Using Lemma \ref{lem: sum of unbounded sets}, we can study the properties of the function $J_{\vu}$, which is closely related with $J$.
 
\begin{lemma} [Distance to the sum of sets] \label{lem: distance to sum of unbounded sets}
   	Let $S_i$, $0 \le i \le q$, be some non-empty, compact, convex subsets of $\R^n$ that do not contain the origin, and $K$ be a non-empty and convex cone of $\R^n$ that contains the origin. Suppose that $\|\vs_i\|_2 \le \tilde{B}_i$ for some $\tilde{B}_i > 0$ and for any $\vs_i \in S_i$, $0 \le i \le q$. Moreover, suppose that
 	\begin{equation} \label{eq: origin not in sum 2_2}
 	  	\vO \notin \overline{K} + \sum_{i = 0}^{q} \vtau_i S_i, \ \ \textnormal{for any } \vtau \in \S^{q} \cap \R^{q+1}_{+}.
   	\end{equation}
 	Fix a point $\vu \in \R^n$, and define the function $J_{\vu} : \R_+^{q + 1} \rightarrow \R$ by
 	$$
 	J_{\vu}(\vtau) \coloneqq \dist^2(\vu, K + \sum_{i = 0}^{q} \vtau_i S_i),
 	$$
 	where $\vtau = (\vtau_0, \vtau_1,\dots, \vtau_{q}) \in \R_{+}^{q + 1}$.
 	Then $J_{\vu}(\vtau)$ has the following properties:
 	\begin{enumerate}
 	  \item
 		The function $J_{\vu}$ is convex and continuous.
 	  \item
 		The function $J_{\vu}$ satisfies the lower bound
 		\begin{equation} \label{eq: lower bound for J_u_2}
		  J_{\vu}(\vtau) \ge (\|\vtau\|_2 \tilde{b} - \|\vu\|_2)^2,\ \textnormal{when } \|\vtau\|_2 \ge \|\vu\|_2 / \tilde{b}.
 	  	\end{equation}
		In particular, $J_{\vu}$ attains its minimum in the compact subset $\B(0, 2\|\vu\|_2/\tilde{b}) \cap \R^{q+1}_{+}$.
 	  \item
 		The function $J_{\vu}$ is continuously differential, and the partial derivative is
		\begin{equation} \label{eq: partial derivative of J_2}
 			\frac{\partial J_{\vu}}{\partial \vtau_i}(\vtau) = -\frac{2}{\vtau_i}\left< \vu- \big(\bar{\vkappa} + \sum_{i=0}^{q} \vtau_i \bar{\vs}_i \big), \vtau_i \bar{\vs}_i\right> = -2\left< \vu- \big(\bar{\vkappa} + \sum_{i=0}^{q} \vtau_i \bar{\vs}_i \big), \bar{\vs}_i\right>
 		\end{equation}
 	for any $\bar{\vkappa} \in \overline{K}, \bar{\vs}_i \in S_i, 0 \le i \le q$ such that $\|\vu - ( \bar{\vkappa} + \sum_{i = 0}^{q} \vtau_i \bar{\vs}_i)\|_2^2 = J_{\vu}(\vtau)$. For $\vtau$ on the boundary of $\R^{q+1}_+$, we interpret the partial derivative $\frac{\partial J_{\vu}}{\partial \vtau_i}$ similarly as the right derivative if $\vtau_i = 0$, i.e., 
		$$
		\frac{\partial J_{\vu}}{\partial \vtau_i} (\vtau) = \lim_{\epsilon \downarrow 0} \frac{J_{\vu}(\vtau_0, \dots, \vtau_{i-1}, \epsilon, \dots, \vtau_q) - J_{\vu}(\vtau_0, \dots, \vtau_{i-1}, 0, \dots, \vtau_q)}{\epsilon}.
		$$
 	  \item
 		The partial derivative of $J_{\vu}$ has the following bound:
 		\begin{equation} \label{eq: bound for partial derivative_2}
		  \Big| \frac{\partial J_{\vu}}{\partial \vtau_i} (\vtau) \Big| \le 2\tilde{B}_i \big( \|\vu\|_2 + \|\vtau\|_2 \tilde{B} \big).
 	  	\end{equation}
 	  \item
		For any $\vtau \in \R^{q+1}_+$ and any $0 \le i \le q$, the map $\vu \mapsto \frac{\partial J_{\vu}}{\partial \vtau_i}(\vtau)$ is Lipschitz:
 		\begin{equation} \label{eq: partial derivative is Lipschitz_2}
		  \Big| \frac{\partial J_{\vu}}{\partial \vtau_i}(\vtau) - \frac{\partial J_{\vu'}}{\partial \vtau_i}(\vtau) \Big| \le 2\tilde{B}_i \cdot \|\vu - \vu'\|_2.
 		\end{equation}
 	\end{enumerate}
\end{lemma}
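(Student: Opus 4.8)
The plan is to follow the proof of Lemma~\ref{lem: distance to sum of sets} almost verbatim, viewing the convex cone $K$ as one more summand in the Minkowski sum, and to compensate for the loss of compactness of that summand by a single structural fact: the sum of the closed set $\overline{K}$ and the compact set $\sum_{i=0}^{q}\vtau_i S_i$ is again closed \cite[Exercises 3(d) and 3(e), page 38]{RUDIN1991}. Since the distance to a set coincides with the distance to its closure, we may and do work with $\dist^2\big(\vu,\overline{K}+\sum_{i=0}^{q}\vtau_i S_i\big)$, so that all projections below are well defined. The only place where one must use that $K$ is a \emph{cone} (rather than merely convex) is in the scaling identities $\lambda K=K$ for $\lambda>0$ and $K+K=K$; these play, for the unbounded piece, the role that $(\lambda\vtau_i)S_i=\lambda(\vtau_iS_i)$ plays for the bounded pieces \cite[Theorem 3.2]{ROCK1970}.

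For convexity I would again prove that $J_{\vu}^{1/2}$ is convex: combining $\lambda_1K+\lambda_2K=K$ (for $\lambda_1+\lambda_2=1$, $\lambda_j\ge0$) with \cite[Theorem 3.2]{ROCK1970} gives $K+\sum_i(\lambda_1\vtau_i+\lambda_2\tilde{\vtau}_i)S_i=\lambda_1\big(K+\sum_i\vtau_iS_i\big)+\lambda_2\big(K+\sum_i\tilde{\vtau}_iS_i\big)$, after which the triangle-inequality computation of Lemma~\ref{lem: distance to sum of sets} applies unchanged. Continuity is the same: a shift $\vtau\mapsto\vtau+\vepsilon$ perturbs only the bounded summands, so $\big|J_{\vu}^{1/2}(\vtau+\vepsilon)-J_{\vu}^{1/2}(\vtau)\big|\le\|\vepsilon\|_2\tilde{B}$ by \eqref{eq: upper bound for sum of unbounded sets}. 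For the lower bound, write $\vtau=\|\vtau\|_2\,\vomega$ with $\vomega\in\S^q\cap\R^{q+1}_{+}$ and use the cone property of $K$ to factor $\vkappa+\sum_i\vtau_i\vs_i=\|\vtau\|_2\big(\tfrac{1}{\|\vtau\|_2}\vkappa+\sum_i\vomega_i\vs_i\big)$ with $\tfrac{1}{\|\vtau\|_2}\vkappa+\sum_i\vomega_i\vs_i\in K+\sum_i\vomega_iS_i$; by \eqref{eq: bound for sum of unbounded sets} its norm is at least $\tilde{b}$, whence $J_{\vu}^{1/2}(\vtau)\ge\|\vtau\|_2\tilde{b}-\|\vu\|_2$ and squaring gives \eqref{eq: lower bound for J_u_2}. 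Since $\vO\in K$, $J_{\vu}(\vO)=\dist^2(\vu,K)\le\|\vu\|_2^2$, so $J_{\vu}(\vtau)\ge J_{\vu}(\vO)$ once $\|\vtau\|_2\ge2\|\vu\|_2/\tilde{b}$; convexity and continuity then force the minimizer into $\B(\vO,2\|\vu\|_2/\tilde{b})\cap\R^{q+1}_{+}$.

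Continuous differentiability is the step where the adaptation has real content. Fix $i$, freeze the other coordinates, and set $T\coloneqq\overline{K}+\sum_{0\le j\le q,\,j\neq i}\vtau_j S_j$, a closed convex set. Then $\tilde{J}_{\vu}(\vtau_i)\coloneqq J_{\vu}(\vtau)=\inf_{\vt\in T}\dist^2(\vu-\vt,\vtau_iS_i)=\inf_{\vt\in T}g(\vtau_i,\vt)$ with $g(\vtau_i,\vt)=\dist^2(\vu-\vt,\vtau_iS_i)$ continuously differentiable on $\R_{++}\times\R^n$ by \cite[Lemma C.1]{AMEL2014} and \cite[Theorem 2.26]{ROCK1998}. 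In Lemma~\ref{lem: distance to sum of sets} the infimum over $\vt\in T$ was attained because $T$ was compact; here $T$ is only closed, but $T+\vtau_iS_i=\overline{K}+\sum_{j=0}^q\vtau_jS_j$ is closed (closed $+$ compact), so $\Pi_{T+\vtau_iS_i}(\vu)$ exists; writing it as $\bar{\vkappa}+\sum_{j=0}^q\vtau_j\bar{\vs}_j$ with $\bar{\vkappa}\in\overline{K}$, $\bar{\vs}_j\in S_j$, the point $\bar{\vt}\coloneqq\bar{\vkappa}+\sum_{j\neq i}\vtau_j\bar{\vs}_j$ attains the infimum and equals $\Pi_T(\vu-\vtau_i\bar{\vs}_i)$. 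From this point the argument of Lemma~\ref{lem: distance to sum of sets} carries over word for word: $-\nabla_{\vt}g(\vtau_i,\bar{\vt})=2\big(\vu-\bar{\vt}-\Pi_{\vtau_iS_i}(\vu-\bar{\vt})\big)\in N(\bar{\vt};T)$ by \cite[Theorem III.3.1.1]{HIRIART1993}; hence \cite[Example 2.59 and Theorem 2.61]{MORD2014} make $\partial\tilde{J}_{\vu}(\vtau_i)$ a singleton, so $\tilde{J}_{\vu}$ is differentiable with the formula \eqref{eq: partial derivative of J_2}, and \cite[Theorem 25.5]{ROCK1970} gives continuity of the derivative on $\R_{++}$; the boundary case $\vtau_i=0$ is handled by \cite[Theorem 24.1]{ROCK1970} together with the telescoping mean-value argument (in the style of \cite[Theorem 2.8]{SPIVAK1965}) to obtain joint continuous differentiability on $\R^{q+1}_{+}$.

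The last two properties are immediate from \eqref{eq: partial derivative of J_2}. The bound \eqref{eq: bound for partial derivative_2} follows by Cauchy--Schwarz, $\big|\partial J_{\vu}/\partial\vtau_i\big|\le2\tilde{B}_i\,J_{\vu}^{1/2}(\vtau)$, together with $J_{\vu}^{1/2}(\vtau)\le\|\vu\|_2+\|\vtau\|_2\tilde{B}$ (triangle inequality, $\vO\in K$, and \eqref{eq: upper bound for sum of unbounded sets}). For the Lipschitz estimate \eqref{eq: partial derivative is Lipschitz_2}, let $E\coloneqq\overline{K}+\sum_j\vtau_jS_j$; \cite[Theorem III.3.1.1]{HIRIART1993} shows the $\bar{\vs}_i$ of \eqref{eq: partial derivative of J_2} maximizes $\langle\vu-\Pi_E(\vu),\vs_i\rangle$ over $S_i$, and non-expansiveness of $\mI-\Pi_E$ \cite[p.~275]{AMEL2014} then gives $\big|\langle\vu-\Pi_E(\vu),\bar{\vs}_i\rangle-\langle\vu'-\Pi_E(\vu'),\bar{\vs}_i'\rangle\big|\le\tilde{B}_i\|\vu-\vu'\|_2$, hence \eqref{eq: partial derivative is Lipschitz_2}; the case $\vtau_i=0$ follows by taking limits. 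The main obstacle, as flagged above, is entirely concentrated in the differentiability step: the summand $K$ is unbounded, so the clean compactness argument of Lemma~\ref{lem: distance to sum of sets} must be replaced by the closedness of ``closed $+$ compact'' Minkowski sums to guarantee that the relevant projections and infima are attained, while simultaneously keeping track of the cone structure of $K$ in the convexity and lower-bound steps.
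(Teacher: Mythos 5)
Your proposal is correct and follows essentially the same route as the paper's proof: cone identities $\lambda K=K$, $K+K=K$ for convexity and the lower bound, closedness of ``closed $+$ compact'' Minkowski sums for the differentiability step, and the same citations for the singleton subdifferential, boundary derivative, and Lipschitz estimate. The only (harmless) deviation is that you justify attainment of the inner infimum over $\vt\in T$ by projecting $\vu$ onto the closed set $T+\vtau_iS_i$ and decomposing, whereas the paper argues coercivity and convexity of $g(\vtau_i,\cdot)$ to restrict to a compact slice of $T$; both are valid.
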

 
\begin{proof}
  	Lemma \ref{lem: distance to sum of unbounded sets} generalizes Lemma \ref{lem: distance to sum of sets} by allowing some of the sets to be unbounded or contain the origin.
 
 	\textit{Convexity.} Note that to prove the convexity of $J_{\vu}$, it is sufficient to prove that the function
 	$$
 	J^{\frac{1}{2}}_{\vu}(\vtau) \coloneqq \sqrt{J_{\vu}(\vtau)} = \dist (\vu, K + \sum_{i = 0}^{q} \vtau_i S_i)
 	$$
	is convex. To this end, fix any $\vtau, \tilde{\vtau} \in \R^{q + 1}_{+}$ and $\lambda_1, \lambda_2 \in \R_{+}$ satisfying $\lambda_1 + \lambda_2 = 1$. Since $K$ and $S_i$, $0 \le i \le q$, are convex sets, it from \cite[Theorem 3.2]{ROCK1970} that:
 	\begin{equation} \label{eq: convex decomposition of set_2}
	  K = \lambda_1 K + \lambda_2 K, \ (\lambda_1\vtau_i + \lambda_2 \tilde{\vtau}_i) S_i = \lambda_1\vtau_i S_i + \lambda_2 \tilde{\vtau}_i S_i \ \ \textnormal{for} \ \ 0 \le i \le q.
   	\end{equation}
	Then by the definition of $J_{\vu}^{\frac{1}{2}}$ and the triangle inequality, we have
 	\begin{align*} \label{eq: convexity of J^1/2_2}
	  	J^{\frac{1}{2}}_{\vu}(\lambda_1\vtau + \lambda_2\tilde{\vtau}) 
		&= \dist \big(\vu, K + \sum_{i = 0}^{q} (\lambda_1\vtau_i + \lambda_2 \tilde{\vtau}_i) S_i\big)
		= \dist \big(\vu, (\lambda_1 K + \lambda_2 K) + \sum_{i = 0}^{q} (\lambda_1\vtau_iS_i + \lambda_2 \tilde{\vtau}_i S_i)\big) \\
		&= \dist \big(\lambda_1\vu + \lambda_2\vu, \lambda_1 (K + \sum_{i = 0}^{q} \vtau_iS_i) + \lambda_2 (K + \sum_{i = 0}^{q} \tilde{\vtau}_i S_i)\big) \\
		&= \inf_{\vkappa \in K, \tilde{\vkappa} \in K \atop \vs_i \in S_i, \tilde{\vs}_i \in S_i, 0 \le i \le q} \Big\|\lambda_1\vu + \lambda_2\vu - \big[\lambda_1 (\vkappa + \sum_{i = 0}^{q} \vtau_i\vs_i) + \lambda_2 (\tilde{\vkappa} + \sum_{i = 0}^{q} \tilde{\vtau}_i \tilde{\vs}_i)\big] \Big\|_2 \\
		&\le \inf_{\vkappa \in K, \tilde{\vkappa} \in K \atop \vs_i \in S_i, \tilde{\vs}_i \in S_i, 0 \le i \le q} \lambda_1 \Big\|\vu - \big(\vkappa + \sum_{i = 0}^{q} \vtau_i\vs_i\big) \Big\|_2 + \lambda_2 \Big\| \vu - \big(\tilde{\vkappa} + \sum_{i = 0}^{q} \tilde{\vtau}_i \tilde{\vs}_i\big) \Big\|_2 \\
		&= \lambda_1 \cdot \dist\big(\vu, K + \sum_{i = 0}^{q} \vtau_i S_i\big) + \lambda_2 \cdot \dist\big(\vu, K + \sum_{i = 0}^{q} \tilde{\vtau}_i S_i\big) \\
		&=\lambda_1 J^{\frac{1}{2}}_{\vu}(\vtau) + \lambda_2 J^{\frac{1}{2}}_{\vu}(\tilde{\vtau}),
	\end{align*}
 	which implies that $J^{\frac{1}{2}}_{\vu}$ is convex. The convexity of $J_{\vu}$ follows immediately.
 
 	\textit{Continiuty.} We first consider the case when $\vtau \in \R^{q+1}_{+}$ and take any $\vepsilon \in \R^{q+1}$. To verify the continuity, note that
 	\begin{equation} \label{eq: dist to e+t_2}
 	  J^{\frac{1}{2}}_{\vu}(\vtau + \vepsilon) = \dist\Big(\vu, K + \sum_{i=0}^{q} (\vtau_i + \vepsilon_i) S_i \Big) = \inf_{\vkappa \in K \atop \vs_i \in S_i, 0 \le i \le q} \Big\| \vu - \big(\vkappa + \sum_{i=0}^{q} \vepsilon_i \vs_i + \sum_{i=0}^{q} \vtau_i \vs_i\big) \Big\|_2.
 	\end{equation}
 	The triangle inequality gives us that
 	\begin{equation} \label{eq: triangle to dist to e+t_2}
 		\Big\| \vu - \big(\vkappa  + \sum_{i=0}^{q} \vtau_i \vs_i\big) \Big\|_2 - \Big\| \sum_{i=0}^{q} \vepsilon_i \vs_i \Big\|_2 \le \Big\| \vu - \big(\vkappa + \sum_{i=0}^{q} \vepsilon_i \vs_i + \sum_{i=0}^{q} \vtau_i \vs_i\big) \Big\|_2 \le \Big\| \vu - \big(\vkappa  + \sum_{i=0}^{q} \vtau_i \vs_i\big) \Big\|_2 + \Big\| \sum_{i=0}^{q} \vepsilon_i \vs_i \Big\|_2.
   	\end{equation}
 	Putting \eqref{eq: dist to e+t_2} and \eqref{eq: triangle to dist to e+t_2} together, we obtain that
 	\begin{multline} \label{eq: triangle to dist_2}
 		\dist\Big(\vu, K + \sum_{i=0}^{q} \vtau_i S_i\Big) - \sup_{\vs_i \in S_i \atop 0 \le i \le q} \Big\| \sum_{i=0}^{q} \vepsilon_i \vs_i \Big\|_2 \le \dist\Big(\vu, K + \sum_{i=0}^{q} \vepsilon_i S_i + \sum_{i = 0}^{q} \vtau_i S_i \Big) \\
 		\le \dist\Big(\vu, K + \sum_{i=0}^{q} \vtau_i S_i\Big) + \sup_{\vs_i \in S_i \atop 0 \le i \le q} \Big\| \sum_{i=0}^{q} \vepsilon_i \vs_i \Big\|_2.
   	\end{multline}
 	Recalling the bound in \eqref{eq: upper bound for sum of unbounded sets}, we obtain from \eqref{eq: triangle to dist_2} that
 	$$
	\dist\Big(\vu, K + \sum_{i=0}^{q} \vtau_i S_i\Big) - \|\vepsilon\|_2 \tilde{B} \le \dist\Big(\vu, K + \sum_{i=0}^{q} \vepsilon_i S_i + \sum_{i = 0}^{q} \vtau_i S_i \Big) \le \dist\Big(\vu, K + \sum_{i=0}^{q} \vtau_i S_i\Big) + \|\vepsilon\|_2 \tilde{B}.
 	$$
 	In other words,
 	\begin{equation*} \label{eq: bound for dist_2}
	  J^{\frac{1}{2}}_{\vu}(\vtau) - \|\vepsilon\|_2 \tilde{B} \le J^{\frac{1}{2}}_{\vu}(\vtau + \vepsilon) \le J^{\frac{1}{2}}_{\vu}(\vtau) + \|\vepsilon\|_2 \tilde{B},
   	\end{equation*}
 	which implies that
	\begin{equation} \label{eq: differences of J}
	\|\vepsilon\|^2_2 \tilde{B}^2 - 2 \|\vepsilon\|_2 \tilde{B} \cdot J^{\frac{1}{2}}_{\vu} (\vtau) \le J_{\vu}(\vtau + \vepsilon) - J_{\vu}(\vepsilon) \le \|\vepsilon\|_2^2 \tilde{B}^2 + 2 \|\vepsilon\|_2 \tilde{B} \cdot J_{\vu}^{\frac{1}{2}} (\vtau).
  \end{equation}
	Moreover, select $\vkappa = \vO \in K$ and any $\vs_i \in S_i$, and we have
	$$
	J_{\vu}^{\frac{1}{2}} (\vtau) \le \big\|\vu - \vkappa - \sum_{i=0}^q \vtau_i \vs_i \big\|_2 \le \|\vu\|_2 + \|\vkappa\|_2 + \|\sum_{i=0}^q \vtau_i \vs_i\|_2 \le \|\vu\|_2 + \|\vtau\|_2 \cdot \tilde{B}.
	$$
	Substituting the above inequality into \eqref{eq: differences of J} yields
 	\begin{align} \label{eq: difference of J_2}
 	  	\big| J_{\vu}(\vtau + \vepsilon) - J_{\vu}(\vepsilon) \big|
		\le \|\vepsilon\|_2^2 \tilde{B}^2 + 2 \|\vepsilon\|_2 \tilde{B} \cdot J_{\vu}^{\frac{1}{2}} (\vtau) 
		 \le \|\vepsilon\|_2^2 \tilde{B}^2 + 2 \|\vepsilon\|_2 \tilde{B} \cdot \big( \|\vu\|_2 + \|\vtau\|_2 \tilde{B} \big).
 	\end{align}
 	Now it is easy to see that if $\epsilon \rightarrow \vO$, we have $\big| J_{\vu}(\vtau + \vepsilon) - J_{\vu}(\vepsilon) \big| \rightarrow 0$. Similar argument holds as well when $\vtau$ is on the boundary of $\R_{+}^{q+1}$. Therefore, the function $J_{\vu}$ is continuous in $\R_+^{q+1}$.
 
	\textit{Attainment of minimum.} Note that by Lemma \ref{lem: sum of unbounded sets}, we know that there exists a number $\tilde{b} > 0$ such that
 	$$
	\Big\| \vkappa + \sum_{i = 0}^{q} \vtau_i \vs_i \Big\|_2 \ge \tilde{b}, \ \textnormal{for any } \vtau \in \S^{q} \cap \R^{q+1}_{+}, \vkappa \in K,\ \textnormal{and} \ \vs_i \in S_i, \ 0 \le i \le q.
 	$$
 	Therefore, for any $\vtau \neq \vO$, by the triangle inequality,
 	\begin{align} \label{eq: expression of J_2}
 	  	J^{\frac{1}{2}}_{\vu}(\vtau) 
 		&= \dist(\vu, K + \sum_{i=0}^q \vtau_i S_i) = \inf_{\vkappa \in K, \atop \vs_i \in S_i, 0 \le i \le q} \big\|\vu - (\vkappa + \sum_{i=0}^{q} \vtau_i \vs_i) \big\|_2 \ge \inf_{\vkappa \in K, \atop \vs_i \in S_i, 0 \le i \le q} \big\|\vkappa + \sum_{i=0}^{q} \vtau_i \vs_i \big\|_2 - \|\vu\|_2 \notag \\
		&= \|\vtau\|_2 \cdot \inf_{\vkappa \in K, \atop \vs_i \in S_i, 0 \le i \le q} \Big\|\vkappa + \sum_{i=0}^{q} \frac{\vtau_i}{\|\vtau\|_2} \vs_i \Big\|_2 - \|\vu\|_2 \ge \|\vtau\|_2 \cdot \tilde{b} - \|\vu\|_2.
   	\end{align}
	The identity in the second line holds because $K$ is a convex cone, and the last inequality comes from \eqref{eq: bound for sum of unbounded sets}. Therefore, when $\|\vtau\|_2 \ge \|\vu\|_2 / \tilde{b}$, by squaring both sides of \eqref{eq: expression of J_2}, we obtain the bound
	$$
	J_{\vu}(\vtau) \ge \big( \|\vtau\|_2 \cdot \tilde{b} - \|\vu\|_2 \big)^2.
	$$
	Moreover, if $\|\vtau\|_2 \ge 2\|\vu\|_2 / \tilde{b}$, we have $J_{\vu}(\vtau) \ge \|\vu\|_2^2 \ge J_{\vu}(\vO) = \dist^2(\vu, K)$, since $K$ contains the origin. Then, it follows from the convexity and continuity of $J_{\vu}$ that the function $J_{\vu}$ must attain its minimum in the compact set $\B(\vO, 2\|\vu\|_2/\tilde{b}) \cap \R^{q+1}_{+}$.
 
 	\textit{Continuous differentiability in $\R^{q+1}_{++}$.} To prove that $J_{\vu}$ is continuously differentiable in $\R^{q+1}_{++}$, we need to show that the partial derivative $\partial J_{\vu} / \partial \vtau_i$ exists and is continuous, for each $\vtau_i$, $0 \le i \le q$. For this purpose, fix any $i$, $0 \le i \le q$, and define the function $\tilde{J}_{\vu}(\vtau_i)$ to be
 	\begin{equation} \label{eq: rewritten of J_2}
 	  	\tilde{J}_{\vu}(\vtau_i) \coloneqq J_{\vu}(\vtau) =  \dist^2(\vu, K + \sum_{i=0}^{q} \vtau_i S_i) = \dist^2(\vu, T + \vtau_i S_i) = \inf_{\vt \in T} \dist^2(\vu - \vt, \vtau_i S_i),
   \end{equation}
   where $T = \overline{K} + \sum_{0 \le j \le q, j \neq i}\vtau_i S_i$. Note that $T$ is closed since the sum of compact sets are compact \cite[Exercise 3(d), page 38]{RUDIN1991} and the sum of a compact set and a closed set is closed \cite[Exercise 3(e), page 38]{RUDIN1991}. Now define the function $g(\vtau_i, \vt) = \dist^2(\vu - \vt, \vtau_i S_i)$, $(\vtau_i, \vt) \in \R_{++} \times \R^n$. The function $g(\vtau_i, \vt)$ is continuously differentiable. To see this, first note that the function $\partial g / \partial \vtau_i$ exists, and takes the form
 	$$
 	\frac{\partial g}{\partial \vtau_i} = -\frac{2}{\vtau_i}\left< \vu- \vt - \Pi_{\vtau_i S_i}(\vu - \vt), \Pi_{\vtau_i S_i}(\vu- \vt)\right>.
 	$$
	Moreover, $\partial g / \partial \vtau_i$ is continuous \cite[Lemma C.1 (3)]{AMEL2014}. Next, the function $\tilde{g}(\vt) = \dist^2(\vu- \vt, \vtau_i S_i)$ is differentiable, and the differential is
 	$$
	\nabla \tilde{g}(\vt) = -2\big(\vu - \vt - \Pi_{\vtau_i S_i}(\vu - \vt)\big).
 	$$
	This point results from \cite[Theorem 2.26]{ROCK1998}. Furthermore, the projection onto a convex set is continuous \cite[Theorem 2.26]{ROCK1998}, hence, $\nabla \tilde{g}$ is a continuous function. It follows that $\partial g / \partial \vt_j$ is continuous for any $1 \le j \le n$. Therefore, we obtain that the function $g(\vtau_i, \vt)$ is continuously differentiable in $\R_{++} \times \R^n$. As a result, by \cite[Theorem 2.8]{SPIVAK1965}, $g(\vtau_i, \vt)$ is differentiable in $\R_{++} \times \R^n$, and the differential is
 	$$
 	\nabla g(\vtau_i, \vt) = \Big[-\frac{2}{\vtau_i}\left< \vu- \vt - \Pi_{\vtau_i S_i}(\vu - \vt), \Pi_{\vtau_i S_i}(\vu- \vt)\right>, -2\big(\vu - \vt - \Pi_{\vtau_i S_i}(\vu - \vt)\big)^T \Big]^T.
 	$$
 	The subdifferential of a differentiable function contains only the differential of the function \cite[Theorem 25.1]{ROCK1970}. Thus, the subdifferential of $g$ at $(\vtau_i, \vt)$ is
 	\begin{equation} \label{eq: subdifferential of g_2}
 		\partial g(\vtau_i, \vt) = \Big\{ \Big[-\frac{2}{\vtau_i}\left< \vu- \vt - \Pi_{\vtau_i S_i}(\vu - \vt), \Pi_{\vtau_i S_i}(\vu- \vt)\right>, -2\big(\vu - \vt - \Pi_{\vtau_i S_i}(\vu - \vt)\big)^T \Big]^T \Big\}.
 	\end{equation}
	Now select\footnote{Before we do such selection, we must argue that the infimum of $g(\vtau_i, \vt)$ can be attained over $\vt \in T$ for any fixed $\vtau_i > 0$. Actually, let $c > 0$ be a sufficiently large constant. Then when $\|\vt\|_2 \ge c$, we can make $\dist(\vu-\vt, \vtau_i S_i)$ be sufficiently large such that $\dist(\vu - \vt, \vtau_i S_i) \ge \dist(\vu, \vtau_i S_i) = g(\vtau_i, \vO)$, because $S_i$ is compact. Furthermore, the function $g(\vtau_i, \vt)$ is convex in $\vt$, because the distance function to a convex set is convex, and the composition of a convex function and an affine mapping is convex. Thus, the infimum of $g(\vtau_i, \vt)$ over $\vt \in T$ must be attained when $\|\vt\|_2 \le c$, i.e., when $\vt \in \B(\vO, c) \cap T$. Clearly, the set $\B(\vO, c) \cap T$ is compact. Thus, the continuity of $g(\vtau_i, \vt)$ in $\vt$ implies that the infimum must be attained at some point.} any $\bar{\vt} \in T$ such that $g(\vtau_i, \bar{\vt}) = \tilde{J}_{\vu}(\vtau_i)$. Let us confirm that $-\nabla \tilde{g}(\bar{\vt}) = 2\big(\vu - \bar{\vt} - \Pi_{\vtau_i S_i}(\vu - \bar{\vt})\big) \in N(\bar{\vt}; T)$, where $N(\bar{\vt}; T) \coloneqq \{ \vw \in \R^n: \left< \vw, \vt - \bar{\vt}\right> \le 0,\ \forall \, \vt \in T\}$, denotes the normal cone to $T$ at $\bar{\vt}$. To this end, let $\bar{\vs}_i \in S_i$ such that $\vtau_i \bar{\vs}_i = \Pi_{\vtau_i S_i}(\vu - \bar{\vt})$. Then by the definition of projection, it is not difficult to see that $\bar{\vt} = \Pi_{T}(\vu - \vtau_i \bar{\vs}_i)$. Thus, 
 	$$
	-\nabla \tilde{g}(\bar{\vt}) = 2\big(\vu - \bar{\vt} - \Pi_{\vtau_i S_i}(\vu - \bar{\vt})\big) = 2\big( \vu - \vtau_i \bar{\vs}_i - \Pi_{T}(\vu - \vtau_i \bar{\vs}_i) \big).
 	$$
 	By \cite[Theorem III.3.1.1]{HIRIART1993}, we know that
 	$$
	\left< \vu - \vtau_i \bar{\vs}_i - \Pi_{T}(\vu - \vtau_i \bar{\vs}_i), \vt - \Pi_{T}(\vu - \vtau_i \bar{\vs}_i)\right> \le 0, \ \textnormal{for any} \ \vt \in T.
 	$$
 	Therefore, we obtain that 
 	\begin{equation} \label{eq: subdifferential in normal cone_2}
	  -\nabla \tilde{g}(\bar{\vt}) = 2\big(\vu - \bar{\vt} - \Pi_{\vtau_i S_i}(\vu - \bar{\vt})\big) \in N(\bar{\vt}; T).
   	\end{equation}
 	Now we can give a conclusion about the subdifferential of $\tilde{J}_{\vu}$:
 	$$
 	\partial \tilde{J}_{\vu}(\vtau_i) = \Big\{ -\frac{2}{\vtau_i}\left< \vu- \bar{\vt} - \Pi_{\vtau_i S_i}(\vu - \bar{\vt}), \Pi_{\vtau_i S_i}(\vu- \bar{\vt})\right> \Big\}.
 	$$
 	This is a direct consequence of \cite[Example 2.59 and Theorem 2.61]{MORD2014}, \eqref{eq: subdifferential of g_2}, \eqref{eq: subdifferential in normal cone_2}, and the fact that $g(\vtau_i, \vt)$ is continuous. That the subdifferential of $\tilde{J}_{\vu}$ is a singleton implies $\tilde{J}_{\vu}$ is differentiable \cite[Theorem 25.1]{ROCK1970}, and the differential is
 	$$
 	\nabla \tilde{J}_{\vu}(\vtau_i) = -\frac{2}{\vtau_i}\left< \vu- \bar{\vt} - \Pi_{\vtau_i S_i}(\vu - \bar{\vt}), \Pi_{\vtau_i S_i}(\vu- \bar{\vt})\right>.
 	$$
 	The above formula is equivalent to that the partial derivative $\partial J_{\vu} / \partial \vtau_i$ exists, and takes the form
 	$$
 	\frac{\partial J_{\vu}}{\partial \vtau_i}(\vtau) = -\frac{2}{\vtau_i}\left< \vu- \bar{\vt} - \vtau_i \bar{\vs}_i, \vtau_i \bar{\vs}_i\right>,
 	$$
	for any $\bar{\vt} \in T, \bar{\vs}_i \in S_i$ such that $\|\vu - ( \bar{\vt} + \vtau_i \bar{\vs}_i)\|_2 = \dist(\vu, T+\vtau_i S_i)$. Since $T = \overline{K} + \sum_{0 \le j \le q, j \neq i} \vtau_i S_i$ is closed, we have
 	$$
   	\bar{\vt} = \bar{\vkappa} + \sum_{0 \le j \le q, j \neq i}\vtau_j \bar{\vs}_j, \ \textnormal{for some} \ \bar{\vkappa} \in \overline{K}, \ \bar{\vs}_j \in S_j, \ 0 \le j \le q, \ j \neq i.
 	$$
 	Therefore, the partial derivative $\partial J_{\vu} / \partial \vtau_i$ can be rewritten as
	$$
 		\frac{\partial J_{\vu}}{\partial \vtau_i}(\vtau) = -\frac{2}{\vtau_i}\left< \vu- \big(\bar{\vkappa} + \sum_{i=0}^{q} \vtau_i \bar{\vs}_i \big), \vtau_i \bar{\vs}_i\right> = -2\left< \vu- \big(\bar{\vkappa} + \sum_{i=0}^{q} \vtau_i \bar{\vs}_i \big), \bar{\vs}_i\right>
	$$
 	for any $\bar{\vkappa} \in \overline{K}, \bar{\vs}_i \in S_i, 0 \le i \le q$ such that $\|\vu - ( \bar{\vkappa} + \sum_{i = 0}^{q} \vtau_i \bar{\vs}_i)\|_2^2 = J_{\vu}(\vtau)$. It remains to prove that $\partial J_{\vu} / \partial \vtau_i$ is continuous in $\vtau_i$. Indeed, $\tilde{J}_{\vu}$ is a proper convex function, and is differential in $\R_{++}$. It follows from \cite[Theorem 25.5]{ROCK1970} that the gradient mapping $\nabla \tilde{J}_{\vu}$ is continuous in $\R_{++}$, which means that $\partial J_{\vu} / \partial \vtau_i$ is continuous in $\R_{++}$. Since for any $0 \le i \le q$, $\partial J_{\vu} / \partial \vtau_i$ exists and is continuous in $\R_{++}$, we obtain that $J_{\vu}$ is continuously differentiable in $\R^{q+1}_{++}$.
 
	\textit{Differentiablity at the boundary of $\R^{q+1}_{+}$ and its continuity.} The function $\tilde{J}_{\vu}$ is a closed proper convex function. It is continuous in $[0, +\infty]$ and continuously differentiable in $(0, +\infty)$. Hence, as a consequence of \cite[Theorem 24.1]{ROCK1970}, the right derivative at the origin exists and the limit formula holds. To study the continuity of the differential of $J_{\vu}$ at the boundary of $\R^{q+1}_{+}$, without loss of generality, we assume that $\vtau = (\vtau_0, \vtau_1, \dots, \vtau_l, \vtau_{l+1}, \dots, \vtau_q)$, where $\vtau_i > 0$ for $0 \le i \le l$ and $\vtau_i = 0$ for $l < i \le q$. Let $\vh = (\vh_0, \vh_1, \dots, \vh_l, \vh_{l+1}, \dots, \vh_{q})$, where $\vh_i \ge 0$ for $l < i \le q$. Similar as the proof for \cite[Theorem 2.8]{SPIVAK1965}, we have
 	\begin{align*}
 	  J_{\vu}(\vtau + \vh) - J_{\vu}(\vtau) = & J_{\vu}(\vtau_0 + \vh_0, \vtau_1, \dots, \vtau_q) - J_{\vu}(\vtau_0, \vtau_1, \dots, \vtau_q) \\
 	                                          & +J_{\vu}(\vtau_0 + \vh_0, \vtau_1 + \vh_1, \vtau_2, \dots, \vtau_q) - J_{\vu}(\vtau_0 + \vh_0, \vtau_1, \vtau_2, \dots, \vtau_q) \\
 											  & + \dots \\
 											  & +J_{\vu}(\vtau_0 + \vh_0, \vtau_1 + \vh_1, \dots, \vtau_{q-1} + \vh_{q-1}, \vtau_q + \vh_q) - J_{\vu}(\vtau_0 + \vh_0, \vtau_1 + \vh_1, \dots, \vtau_{q-1} + \vh_{q-1}, \vtau_q).
 	\end{align*}
 	Let us look at the first term $J_{\vu}(\vtau_0 + \vh_0, \vtau_1, \dots, \vtau_q) - J_{\vu}(\vtau_0, \vtau_1, \dots, \vtau_q)$ first. By the mean-value theorem, we know that there exists some $\vb_0$ between $\vtau_0$ and $\vtau_0 + \vh_0$ such that
 	$$
 	J_{\vu}(\vtau_0 + \vh_0, \vtau_1, \dots, \vtau_q) - J_{\vu}(\vtau_0, \vtau_1, \dots, \vtau_q) = \frac{\partial J_{\vu}}{\partial \vtau_0}(\vb_0, \vtau_1, \dots, \vtau_q) \cdot \vh_0.
 	$$
 	Similarly, for the $i$-th term, there exists some $\vb_{i-1}$ between $\vtau_{i-1}$ and $\vtau_{i-1} + \vh_{i-1}$ such that
 	\begin{multline*}
 	  	J_{\vu}(\vtau_0 + \vh_0, \dots, \vtau_{i-2} + \vh_{i-2}, \vtau_{i-1} + \vh_{i-1}, \vtau_i, \dots, \vtau_q) - J_{\vu}(\vtau_0 + \vh_0, \dots, \vtau_{i-2} + \vh_{i-2}, \vtau_{i-1}, \vtau_i, \dots, \vtau_q) \\
 		= \frac{\partial J_{\vu}}{\partial \vtau_{i-1}}(\vtau_0 + \vh_0, \dots, \vtau_{i-2} + \vh_{i-2}, \vb_{i-1}, \vtau_i, \dots, \vtau_q) \cdot \vh_{i-1}.
   	\end{multline*}
 	Then,
 	\begin{align*}
 	  &\lim_{\vh_i \rightarrow 0, 0 \le i \le l, \atop \vh_i \rightarrow 0^+, l < i \le q} \frac{|J_{\vu}(\vtau + \vh) - J_{\vu}(\vtau) - \sum_{i=0}^q \frac{\partial J_{\vu}}{\partial \vtau_i} \cdot \vh_i|}{\|\vh\|_2} \\
 	  &\hspace*{120pt}= \lim_{\vh_i \rightarrow 0, 0 \le i \le l, \atop \vh_i \rightarrow 0^+, l < i \le q} \frac{ \Big|\sum_{i=0}^q \big[\frac{\partial J_{\vu}}{ \partial \vtau_i} (\vtau_0, \dots, \vb_i, \dots, \vtau_q) - \frac{\partial J_{\vu}}{\partial \vtau_i}(\vtau_0, \dots, \vtau_i, \dots\vtau_q) \big] \cdot \vh_i \Big|}{\|\vh\|_2} \\
 	  &\hspace*{120pt}\le \lim_{\vh_i \rightarrow 0, 0 \le i \le l, \atop \vh_i \rightarrow 0^+, l < i \le q}  \sum_{i=0}^q  \Big|\frac{\partial J_{\vu}}{ \partial \vtau_i} (\vtau_0, \dots, \vb_i, \dots, \vtau_q) - \frac{\partial J_{\vu}}{\partial \vtau_i}(\vtau_0, \dots, \vtau_i, \dots\vtau_q) \Big| \cdot \frac{|\vh_i|}{\|\vh\|_2} \\
 	  &\hspace*{120pt}\le \lim_{\vh_i \rightarrow 0, 0 \le i \le l, \atop \vh_i \rightarrow 0^+, l < i \le q}  \sum_{i=0}^q  \Big|\frac{\partial J_{\vu}}{ \partial \vtau_i} (\vtau_0, \dots, \vb_i, \dots, \vtau_q) - \frac{\partial J_{\vu}}{\partial \vtau_i}(\vtau_0, \dots, \vtau_i, \dots\vtau_q) \Big| \\
 	  &\hspace*{120pt}= \lim_{\vb_i \rightarrow \vtau_i, 0 \le i \le l, \atop \vb_i \rightarrow \vtau_i^+, l < i \le q}  \sum_{i=0}^q  \Big|\frac{\partial J_{\vu}}{ \partial \vtau_i} (\vtau_0, \dots, \vb_i, \dots, \vtau_q) - \frac{\partial J_{\vu}}{\partial \vtau_i}(\vtau_0, \dots, \vtau_i, \dots\vtau_q) \Big| \\
 	  &\hspace*{120pt} = 0.
 	\end{align*}
 	The last identity holds because the partial derivative $\frac{\partial J_{\vu}}{ \partial \vtau_i}$ is continuous in $[0, +\infty)$.
 
 	\textit{Bound for the partial derivative.} Using the Cauchy-Schwarz inequality to \eqref{eq: partial derivative of J_2}, we obtain that
 	\begin{equation} \label{eq: bound for derivative_2}
 		\Big| \frac{\partial J_{\vu}}{\partial \vtau_i}(\vtau) \Big| \le 2 \big\| \vu- \big(\bar{\vkappa} + \sum_{i=0}^{q} \vtau_i \bar{\vs}_i \big)\big\|_2 \cdot \|\bar{\vs}_i\|_2.
   	\end{equation}
 	Since $\bar{\vkappa}$ and $\bar{\vs}_i$ satisfy that $\bar{\vkappa} \in \overline{K}$, $\bar{\vs}_i \in S_i, 0 \le i \le q$, and $\|\vu - ( \bar{\vkappa} + \sum_{i = 0}^{q} \vtau_i \bar{\vs}_i)\|_2^2 = J_{\vu}(\vtau)$, it holds that for any $\tilde{\vkappa} \in \overline{K}$ and $\tilde{\vs}_i \in S_i$,
 	$$
 	\big\| \vu - \big( \bar{\vkappa} + \sum_{i = 0}^{q} \vtau_i \bar{\vs}_i\big) \big\|_2 \le \big\| \vu - \big( \tilde{\vkappa} + \sum_{i = 0}^{q} \vtau_i \tilde{\vs}_i \big) \big\|_2.
 	$$
 	Since $K$ is a convex cone containing the origin, we can set $\tilde{\vkappa} = \vO$ and obtain
 	$$
	\big\| \vu - \big( \bar{\vkappa} + \sum_{i = 0}^{q} \vtau_i \bar{\vs}_i\big) \big\|_2 \le \|\vu\|_2 + \| \sum_{i = 0}^{q} \vtau_i \tilde{\vs}_i \|_2 \le \|\vu\|_2 + \|\vtau\|_2 \cdot \tilde{B},
 	$$
 	where we have used the triangle inequality and \eqref{eq: upper bound for sum of unbounded sets}. Substituting it into \eqref{eq: bound for derivative_2} yields the desired result
 	$$
	\Big| \frac{\partial J_{\vu}}{\partial \vtau_i}(\vtau) \Big| \le 2 \tilde{B}_i\big( \|\vu\|_2 + \|\vtau\|_2 \tilde{B}).
 	$$
 
 	\textit{Lipschitz property.} Fix any $\vtau \in \R^{q+1}_+$ satisfying $\vtau_i > 0$. We make use of \cite[Theorem III.3.1.1]{HIRIART1993} to obtain that
 	$$
 	\left<\vu - (\bar{\vkappa} + \sum_{j=0}^q \vtau_j \bar{\vs}_j), \bar{\vkappa} + \sum_{j=0}^q \vtau_j \bar{\vs}_j\right> \ge \left<\vu - (\bar{\vkappa} + \sum_{j=0}^q \vtau_j \bar{\vs}_j), \bar{\vkappa} + \vtau_i \vs_i + \sum_{0 \le j \le q, \atop j \neq i} \vtau_j \bar{\vs}_j\right> \quad \textnormal{for any} \ \vs_i \in S_i,
 	$$
	where $\bar{\vkappa} \in \overline{K}, \bar{\vs}_i \in S_i, 0 \le i \le q$, satisfy $\|\vu - (\bar{\vkappa} + \sum_{i = 0}^{q} \vtau_i \bar{\vs}_i)\|_2 = \dist(\vu, K + \sum_{i = 0}^{q} \vtau_i S_i)$. Simplifying the above inequality yields
 	$$
 	\left<\vu - (\bar{\vkappa} + \sum_{j=0}^q \vtau_j \bar{\vs}_j), \bar{\vs}_i\right> \ge \left<\vu - (\bar{\vkappa} + \sum_{j=0}^q \vtau_j \bar{\vs}_j), \vs_i\right> \quad \textnormal{for any} \ \vs_i \in S_i.
 	$$
 	Therefore, for any $\vu, \vu' \in \R^n$,
 	\begin{align} \label{eq: lipschitz and non-expansive_2}
 	  \left<\vu - (\bar{\vkappa} + \sum_{j=0}^q \vtau_j \bar{\vs}_j), \bar{\vs}_i\right> - \left<\vu' - (\bar{\vkappa}' + \sum_{j=0}^q \vtau_j \bar{\vs}_j'), \bar{\vs}_i'\right> 
 	  &\le \left< \Big[\vu - (\bar{\vkappa} + \sum_{j=0}^q \vtau_j \bar{\vs}_j)\Big] - \Big[\vu' - (\bar{\vkappa}' + \sum_{j=0}^q \vtau_j \bar{\vs}_j')\Big], \bar{\vs}_i\right> \notag \\
 	  &\le \big\| (\mI - \Pi_E )(\vu) - (\mI - \Pi_E )(\vu') \big\|_2 \cdot \|\bar{\vs}_i\|_2 \notag \\
	  &\le \| \vu - \vu' \|_2 \cdot \tilde{B}_i,
 	\end{align}
 	where $\Pi_E(\vu)$ denotes the projection of $\vu$ onto the set $E \coloneqq \overline{K} + \sum_{i=0}^q \vtau_i S_i$. In the second inequality, we have used the Cauchy-Schwarz inequality, and the last inequality comes from the fact that the map $\mI - \Pi_E$ is non-expansive with respect to the Euclidean norm \cite[pp. 275]{AMEL2014}. Interchanging the roles of $\vu$ and $\vu'$ in \eqref{eq: lipschitz and non-expansive_2}, we obtain that
 	$$
	\Big| \left<\vu - (\bar{\vkappa} + \sum_{j=0}^q \vtau_j \bar{\vs}_j), \bar{\vs}_i\right> - \left<\vu' - (\bar{\vkappa}' + \sum_{j=0}^q \vtau_j \bar{\vs}_j'), \bar{\vs}_i'\right> \Big| \le \| \vu - \vu' \|_2 \cdot \tilde{B}_i.
 	$$
 	Now recall the expression \eqref{eq: partial derivative of J_2} for the partial derivative of $J$. The above inequality implies that
 	$$
	\Big| \frac{\partial J_{\vu}}{\partial \vtau_i}(\vtau) - \frac{\partial J_{\vu'}}{\partial \vtau_i}(\vtau) \Big| \le 2 \tilde{B}_i \cdot \|\vu - \vu'\|_2.
 	$$
	For the case when $\vtau_i = 0$, the above formula holds because the limit formula holds. Therefore, the map $\vu \mapsto J_{\vu}$ is Lipschitz.
\end{proof}

\subsection{The Expected Distance to the Sum of Sets} \label{subsec: expected distance to sum of unbounded sets}
 
 Using the results in Lemma \ref{lem: distance to sum of unbounded sets}, we can study the expected distance to the sum of sets.
 
\begin{lemma} \label{lem: expected distance to the sum of unbounded sets}
   	Let $S_i$, $0 \le i \le q$, be some non-empty, compact, convex subsets of $\R^n$ that do not contain the origin, and $K$ be a non-empty and convex cone of $\R^n$ that contains the origin. Suppose that $\|\vs_i\|_2 \le \tilde{B}_i$ for some $\tilde{B}_i > 0$ and for any $\vs_i \in S_i$, $0 \le i \le q$. Moreover, suppose that
 	\begin{equation*}
 	  	\vO \notin \overline{K} + \sum_{i = 0}^{q} \vtau_i S_i, \ \ \textnormal{for any } \vtau \in \S^{q} \cap \R^{q+1}_{+}.
   	\end{equation*}
 	Define the function $J : \R_+^{q + 1} \rightarrow \R$ by
 	$$
 	J(\vtau) \coloneqq \E \dist^2(\vg, K + \sum_{i = 0}^{q} \vtau_i S_i) = \E [J_{\vg}(\vtau)], \ \textnormal{for } \vtau = (\vtau_0, \vtau_1,\dots, \vtau_{q}) \in \R_+^{q + 1},
 	$$
	where $\vg \sim N(\vO, \mI_n)$. The function $J$ is convex, continuous and continuously differentiable in $\R_{+}^{q+1}$. It attains its minimum in a compact subset of $\R^{q+1}_+$. Furthermore, 
 	\begin{equation} \label{eq: differential of J_2}
 		\nabla J(\vtau) = \E [ \nabla J_{\vg}(\vtau)] \ \textnormal{for all } \vtau \in \R_{+}^{q+1}.
   	\end{equation}
 	For $\vtau$ on the boundary of $\R_{+}^{q+1}$, we interpret the partial derivative $\frac{\partial J}{\partial \vtau_i}(\vtau)$ as the right partial derivative if $\vtau_i = 0$, i.e., 
	$$
	\frac{\partial J}{\partial \vtau_i} (\vtau) = \lim_{\epsilon \downarrow 0} \frac{{J}(\vtau_0, \dots, \vtau_{i-1}, \epsilon, \dots, \vtau_k) - J(\vtau_0, \dots, \vtau_{i-1}, 0, \dots, \vtau_k)}{\epsilon}.
	$$
	Moreover, suppose that 
	\begin{equation} \label{eq: sets not equal_2}
	  \overline{K} + \sum_{i = 0}^{q} \vtau_iS_i \neq \overline{K} + \sum_{i = 0}^{q} \tilde{\vtau}_i S_i \quad \textnormal{for any }\vtau \neq \tilde{\vtau} \in \R^{q+1}_+,
  	\end{equation}
	then the function $J(\vtau)$ is strictly convex, and attains its minimum at a unique point. 
\end{lemma}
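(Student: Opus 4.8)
The plan is to transcribe the proof of Lemma~\ref{lem: expected distance to the sum of multiple sets} almost verbatim, substituting the pointwise estimates for $J_{\vg}$ proved in Lemma~\ref{lem: distance to sum of unbounded sets} (namely \eqref{eq: lower bound for J_u_2}, \eqref{eq: partial derivative of J_2}, \eqref{eq: bound for partial derivative_2}, \eqref{eq: difference of J_2}) for their bounded-set counterparts, and invoking Lemma~\ref{lem: sum of unbounded sets} for the existence of the constants $\tilde B,\tilde b$. \emph{Convexity} follows by taking expectations in the pointwise inequality $J_{\vg}(\lambda_1\vtau+\lambda_2\tilde\vtau)\le\lambda_1 J_{\vg}(\vtau)+\lambda_2 J_{\vg}(\tilde\vtau)$ from Lemma~\ref{lem: distance to sum of unbounded sets}(1). \emph{Continuity} follows from Jensen's inequality, $|J(\vtau+\vepsilon)-J(\vtau)|\le\E|J_{\vg}(\vtau+\vepsilon)-J_{\vg}(\vtau)|$, together with the pointwise bound \eqref{eq: difference of J_2}; taking expectations and using $\E\|\vg\|_2\le\sqrt n$ gives $|J(\vtau+\vepsilon)-J(\vtau)|\le\|\vepsilon\|_2^2\tilde B^2+2\|\vepsilon\|_2\tilde B(\sqrt n+\|\vtau\|_2\tilde B)\to 0$, with the boundary case handled the same way.

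For \emph{continuous differentiability} and the formula \eqref{eq: differential of J_2}, I would apply the Dominated Convergence Theorem (\cite[Corollary 5.9]{BARTLE1995}): $J_{\vg}(\vtau)$ is integrable because $\E|J_{\vg}(\vtau)|\le(\sqrt n+\|\vtau\|_2\tilde B)^2$ by the triangle inequality and \eqref{eq: upper bound for sum of unbounded sets}; by Lemma~\ref{lem: distance to sum of unbounded sets}(3)--(4), $\vtau\mapsto J_{\vg}(\vtau)$ is continuously differentiable with $|\tfrac{\partial J_{\vg}}{\partial\vtau_i}(\vtau)|\le 2\tilde B_i(\|\vg\|_2+\|\vtau\|_2\tilde B)$, which is integrable and locally uniformly bounded by an integrable function. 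Hence $J$ is continuously differentiable with $\tfrac{\partial J}{\partial\vtau_i}(\vtau)=\E[\tfrac{\partial J_{\vg}}{\partial\vtau_i}(\vtau)]$, and the right-derivative interpretation on the boundary of $\R^{q+1}_+$ is inherited from the pointwise statement. For \emph{attainment of the minimum in a compact subset}, when $\|\vtau\|_2\tilde b\ge\sqrt n$ I condition on $\{\|\vg\|_2\le\sqrt n\}$, which has probability at least $1/2$ since the median of $\|\vg\|_2$ does not exceed $\sqrt n$, and combine the law of total expectation with \eqref{eq: lower bound for J_u_2} to obtain $J(\vtau)\ge\tfrac12(\|\vtau\|_2\tilde b-\sqrt n)^2$; since $J(\vO)=\E\dist^2(\vg,K)\le\E\|\vg\|_2^2=n$ (because $\vO\in K$), for $\|\vtau\|_2\ge(1+\sqrt 2)\sqrt n/\tilde b$ we get $J(\vtau)\ge n\ge J(\vO)$, so by convexity and continuity $J$ attains its minimum on $\B(\vO,(1+\sqrt 2)\sqrt n/\tilde b)\cap\R^{q+1}_+$.

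The substantive step is \emph{strict convexity}, argued by contradiction exactly as in Lemma~\ref{lem: expected distance to the sum of multiple sets}. Assume \eqref{eq: sets not equal_2} holds but $J$ is not strictly convex; then there are $\vtau\ne\tilde\vtau\in\R^{q+1}_+$ and $\eta\in(0,1)$ with $\E J_{\vg}(\eta\vtau+(1-\eta)\tilde\vtau)=\eta\E J_{\vg}(\vtau)+(1-\eta)\E J_{\vg}(\tilde\vtau)$, and since $J_{\vg}$ is pointwise convex this forces equality almost surely with respect to the Gaussian measure. Put $E_1\coloneqq\overline K+\sum_i\vtau_i S_i$ and $E_2\coloneqq\overline K+\sum_i\tilde\vtau_i S_i$; both are nonempty convex and closed (sum of the closed set $\overline K$ with the compact sets $\vtau_i S_i$, using \cite[Exercise 3(d),(e), page 38]{RUDIN1991}), so their metric projections are single-valued. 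By \eqref{eq: sets not equal_2} these sets differ, so there is a point $\va$ in one but not the other, whence $\Pi_{E_1}(\va)\ne\Pi_{E_2}(\va)$. The key algebraic observation is that $K$ is a convex cone, so $\eta\overline K+(1-\eta)\overline K=\overline K$, and with \cite[Theorem 3.2]{ROCK1970} this gives $\eta E_1+(1-\eta)E_2=\overline K+\sum_i(\eta\vtau_i+(1-\eta)\tilde\vtau_i)S_i$; hence $\eta\Pi_{E_1}(\va)+(1-\eta)\Pi_{E_2}(\va)$ is a feasible point for $J_{\va}(\eta\vtau+(1-\eta)\tilde\vtau)$. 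Strict convexity of $\|\cdot\|_2^2$ then yields $J_{\va}(\eta\vtau+(1-\eta)\tilde\vtau)<\eta J_{\va}(\vtau)+(1-\eta)J_{\va}(\tilde\vtau)$, and since $\vg\mapsto J_{\vg}$ is continuous this strict inequality persists on a neighborhood of $\va$, contradicting the almost-sure equality. Finally, a strictly convex continuous function attaining its minimum on a compact set does so at a unique point. I expect the cone-closure bookkeeping in this last step — verifying $\eta\overline K+(1-\eta)\overline K=\overline K$ and that the projections onto $E_1,E_2$ are well defined — to be the only delicate point; everything else is a routine adaptation of the bounded-set argument.
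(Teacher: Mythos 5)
Your proposal is correct and follows essentially the same route as the paper's proof: Jensen plus the pointwise oscillation bound for continuity, dominated convergence for differentiability, the median-of-$\|\vg\|_2$ conditioning argument for attainment of the minimum, and the contradiction argument via distinct projections and the identity $\eta\overline{K}+(1-\eta)\overline{K}=\overline{K}$ for strict convexity. Your handling of $J(\vO)=\E\dist^2(\vg,K)\le n$ is in fact slightly more careful than the paper's, which asserts equality there, but this does not change the argument.
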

 
\begin{proof}
  	There properties follow from the results in Lemma \ref{lem: distance to sum of unbounded sets}. The proof is similar as that for Lemma \ref{lem: expected distance to the sum of multiple sets}. But for the sake of completeness, we present the whole proof.
 
 	\textit{Continuity.} We first consider the case when $\vtau \in \R_{+}^{q+1}$ and take any $\vepsilon \in \R^{q+1}$. Note that by Jensen's inequality, we have
 	$$
 	\big| J(\vtau + \vepsilon) - J(\vepsilon) \big| = \Big| \E [ J_{\vg}(\vtau + \vepsilon) - J_{\vg}(\vepsilon)] \Big| \le \E \Big| [ J_{\vg}(\vtau + \vepsilon) - J_{\vg}(\vepsilon)] \Big|.
 	$$
 	Now combining the bound for $\big| [ J_{\vg}(\vtau + \vepsilon) - J_{\vg}(\vepsilon)] \big|$ in \eqref{eq: difference of J_2}, we obtain
 	\begin{align*}
 	  	\big| J(\vtau + \vepsilon) - J(\vepsilon) \big|
 		&\le \|\vepsilon\|_2^2 \tilde{B}^2 + 2 \|\vepsilon\|_2 \tilde{B} \cdot \big( \E \|\vg\|_2 + \|\vtau\|_2 \tilde{B} \big) \rightarrow 0 \ \textnormal{when }\vepsilon \rightarrow \vO.
 	\end{align*}
 	Similar argument holds as well when $\vtau$ is on the boundary of $\R_{+}^{q+1}$. Therefore, the function $J$ is continuous in $\R_+^{q+1}$.
 
 	\textit{Convexity.} The convexity of the function $J$ comes from the convexity of the function $J_{\vg}$. In fact, take $\vtau, \tilde{\vtau} \in \R^{q+1}_+$ and let $\lambda_1, \lambda_2 \in \R_+$ and $\lambda_1 + \lambda_2 = 1$. The convexity of $J_{\vg}$ implies that
	$$
	J(\lambda_1 \vtau + \lambda_2 \tilde{\vtau}) = \E J_{\vg} (\lambda_1 \vtau + \lambda_2 \tilde{\vtau}) \le \E \big[ \lambda_1 J_{\vg}(\vtau) + \lambda_2 J_{\vg}(\tilde{\vtau}) \big] = \lambda_1 J(\vtau) + \lambda_2 J(\tilde{\vtau}).
	$$
	Thus, the function $J$ is convex in $\R^{q+1}_+$.
 
 	\textit{Continuous differentiability.} The differentiability of $J$ is a direct consequence of the Dominated Convergence Theorem \cite[Corollary 5.9]{BARTLE1995}. To apply this theorem, note that for any $\vtau \in \R^{q+1}_+$, the function $J_{\vg}(\vtau)$ is integrable with respect to the Gaussian measure, since
 	$$
	\E \big| J_{\vg}(\vtau) \big| = \E \inf_{\vkappa \in K \atop \vs_i \in S_i, 0 \le i \le q} \big\|\vg - \big( \vkappa + \sum_{i=0}^q \vtau_i \vs_i \big) \big\|_2^2 \le \E \big( \|\vg\|_2 + \big\| \sum_{i=0}^q \vtau_i \vs_i \big\|_2 \big)^2 \le \big(\sqrt{n} + \|\vtau\|_2 \tilde{B}\big)^2 < \infty,
 	$$
 	where in the first inequality, we have used the triangle inequality and the fact that $K$ contains the origin. Moreover, the function $J_{\vg}$ is continuously differentiable, and the partial derivative $\frac{\partial J_{\vg}}{\partial \vtau_i} (\vtau)$ has the upper bound in \eqref{eq: bound for partial derivative_2}. Therefore, we can use the Dominated Convergence Theorem \cite[Corollary 5.9]{BARTLE1995}, which implies that the function $J$ is continuously differentiable, and the partial derivative is
 	$$
 	\frac{\partial J}{\partial \vtau_i}(\vtau) = \E \Big[ \frac{\partial J_{\vg}}{\partial \vtau_i}(\vtau) \Big] \ \textnormal{for all } \vtau \in \R_{+}^{q+1}.
 	$$
 	The differential formula \eqref{eq: differential of J_2} follows immediately.
 
	\textit{Attainment of minimum in a compact set.} When $\|\vtau\|_2 \tilde{b} \ge \sqrt{n}$, we have
 	\begin{align*}
	  J(\vtau) = \E [ J_{\vg}(\vtau) ] \ge \E \big[ J_{\vg}(\vtau) | \|\vg\|_2 \le \sqrt{n} \big] \cdot \P \big\{ \|\vg\|_2 \le \sqrt{n} \big\} \ge \frac{1}{2} \E \big[ (\|\vtau\|_2 \tilde{b} - \|\vg\|_2)^2 | \|\vg\|_2 \le \sqrt{n} \big] \ge \frac{1}{2} (\|\vtau\|_2 \tilde{b} - \sqrt{n})^2,
 	\end{align*}
	where in the first inequality we have used the law of total expectation, and the second comes from \eqref{eq: lower bound for J_u_2} and the fact that the median of random variable $\|\vg\|_2$ does not exceed $\sqrt{n}$. Therefore, when $\|\vtau\|_2 \ge (1+\sqrt{2})\sqrt{n}/\tilde{b}$, we have
	$$
	J(\vtau) \ge \frac{1}{2}\big( (1+\sqrt{2})\sqrt{n} - \sqrt{n} \big)^2 = n = J(\vO).
	$$
	Since $J$ is convex and continuous, the minimum of $J$ must be attained in the compact set $\B\big(\vO, (1+\sqrt{2})\sqrt{n}/\tilde{b}\big) \cap \R^{q+1}_+$.

	\textit{Strict convexity.} We prove this point by contradiction. Suppose that the condition \eqref{eq: sets not equal_2} holds, but $J$ is not strictly convex. Then by the definition of strict convexity, there exist $\vtau, \tilde{\vtau} \in \R^{q+1}_+$, $\vtau \neq \tilde{\vtau}$, and $\eta \in (0,1)$ such that
	\begin{equation} \label{eq: not strictly convex_2}
	  \E \big[ J_{\vg} \big(\eta \vtau + (1-\eta)\tilde{\vtau} \big) \big] = \eta \E J_{\vg}(\vtau) + (1-\eta) \E J_{\vg}(\tilde{\vtau}).
	\end{equation}
	Recall that in Lemma \ref{lem: distance to sum of unbounded sets}, we have shown that $J_{\vg}$ is convex, which means
	\begin{equation} \label{eq: relation resulted from convexity of J_g_2}
	  J_{\vg} \big(\eta \vtau + (1-\eta)\tilde{\vtau} \big) \le \eta J_{\vg}(\vtau) + (1-\eta) J_{\vg}(\tilde{\vtau}).
  	\end{equation}
	Therefore, the identity \eqref{eq: not strictly convex_2} holds if and only if the two sides of \eqref{eq: relation resulted from convexity of J_g_2} is equal almost surely with respect to the Gaussian measure. However, since $\vtau \neq \tilde{\vtau}$, by \eqref{eq: sets not equal_2}, the two sets $E_1 \coloneqq \overline{K} + \sum_{i = 0}^{k} \vtau_i S_i$ and $ E_2 \coloneqq \overline{K} + \sum_{i = 0}^{k} \tilde{\vtau}_i S_i$ are not identical. Thus, without loss of generality, we can find a point $\va \in E_1$ but $\va \notin E_2$. Then $\Pi_{E_1}(\va) = \va$. But since $E_2$ is closed, so $\Pi_{E_2}(\va) \neq \va$. Thus, $\Pi_{E_1}(\va) \neq \Pi_{E_2}(\va)$. Let $\vg = \va$, we have
	\begin{align} \label{eq: strict inequality_2}
	  \eta J_{\vg}(\vtau) + (1-\eta) J_{\vg}(\tilde{\vtau})
		&= \eta \|\vg - \Pi_{E_1}(\vg)\|_2^2 + (1-\eta) \|\vg - \Pi_{E_2}(\vg)\|_2^2
		> \big\| \eta \big(\vg - \Pi_{E_1}(\vg)\big) + (1-\eta)\big(\vg - \Pi_{E_2}(\vg)\big)\big\|_2^2 \notag \\
		&= \big\| \vg - \big(\eta \Pi_{E_1}(\vg) + (1-\eta)\Pi_{E_2}(\vg)\big)\big\|_2^2.
	\end{align}
	The strict inequality comes from the strict convexity of square function, the fact that $0 < \eta < 1$ and the fact that $\Pi_{E_1}(\vg) \neq \Pi_{E_2}(\vg)$. In addition, note that
	\begin{equation} \label{eq: projection in sum_2}
		\eta \Pi_{E_1}(\vg) + (1-\eta)\Pi_{E_2}(\vg) \in \eta E_1 + (1-\eta) E_2,
  	\end{equation}
	and that
	\begin{equation} \label{eq: put sets together_2}
	  \eta E_1 + (1-\eta) E_2 = \eta \Big[ \overline{K}+ \sum_{i = 0}^{q} \vtau_i S_i \Big] + (1-\eta) \Big[ \overline{K} + \sum_{i = 0}^{q} \tilde{\vtau}_i S_i \Big] = \overline{K} + \sum_{i=0}^q \big(\eta \vtau_i + (1-\eta) \tilde{\vtau}_i \big) S_i,
  	\end{equation}
	where we have used \cite[Theorem 3.2]{ROCK1970}. Putting \eqref{eq: projection in sum_2} and \eqref{eq: put sets together_2} together, we get
	\begin{equation} \label{eq: projection in set_2}
	  \eta \Pi_{E_1}(\vg) + (1-\eta)\Pi_{E_2}(\vg) \in \overline{K} +  \sum_{i=0}^q \big(\eta \vtau_i + (1-\eta) \tilde{\vtau}_i \big) S_i.
	\end{equation}
	Substituting \eqref{eq: projection in set_2} into \eqref{eq: strict inequality_2}, we obtain 
	\begin{align*}
	  \eta J_{\vg}(\vtau) + (1-\eta) J_{\vg}(\tilde{\vtau}) 
	  &> \big\| \vg - \big(\eta \Pi_{E_1}(\vg) + (1-\eta)\Pi_{E_2}(\vg)\big)\big\|_2^2 \ge \inf_{\vkappa \in \overline{K}, \atop\vs_i \in S_i, 0 \le i \le q} \big\| \vg - \big[ \vkappa + \sum_{i = 0}^q \big(\eta \vtau_i + (1-\eta) \tilde{\vtau}_2 \big)\vs_i \big] \big\|_2^2 \\
	  &= J_{\vg}\big(\eta \vtau + (1-\eta) \tilde{\vtau}\big).
  	\end{align*}
	Moreover, it is easy to see that the map $\vg \mapsto J_{\vg}$ is continuous. Therefore, there exists some $\epsilon > 0$ such that when $\vg \in \B(\va, \epsilon)$, we have
	$$
	\eta J_{\vg}(\vtau) + (1-\eta) J_{\vg}(\tilde{\vtau}) > J_{\vg}\big(\eta \vtau + (1-\eta) \tilde{\vtau}\big).
	$$
	This contravenes \eqref{eq: not strictly convex_2}. 

	\textit{Attainment of minimum at a unique point.} We have shown that $J$ attains its minimum in the compact set $\B\big(\vO, (1+\sqrt{2})\sqrt{n} / b\big) \cap \R^{q+1}_+$. Now, since $J$ is strictly convex and continuous, it must attain its minimum at a unique point in $\B\big(\vO, (1+\sqrt{2})\sqrt{n} / b\big) \cap \R^{q+1}_+$.
\end{proof}

\section{Phase Transition of Linear Inverse Problems with $\ell_2$ Norm Constraints}

\subsection{Proof of Proposition \ref{prop: minimizer of J in bounded case}} \label{subsec: minimizer of J in bounded case}

Assume that $f_0$ is some norm. For any non-zero point $\fxs \in \R^n$, we know from \cite[Example VI.3.1]{HIRIART1993} that the subdifferential of $f_0$ at $\fxs$ is
\begin{equation} \label{eq: subdifferential of norm}
  	\partial f_0(\fxs) = \big\{ \vs \in \R^n: \left< \vs, \fxs \right> = f_0(\fxs) \ \textnormal{and} \ f_0^{\circ}(\vs) = 1 \big\},
\end{equation}
where $f_0^{\circ}$ is the dual norm to $f_0$. To find the minimum of $J_1$, let us compute the differential of $J_1$ first. Recall our previous results \eqref{eq: partial derivative of J_u} and \eqref{eq: differential of J}. The partial derivative of $J_1$ with respect to $\vtau_1$ satisfies
\begin{align*}
	\frac{\partial J_1}{\partial \vtau_1}(\vtau)
	&= \E \frac{\partial J_{\vg}}{\partial \vtau_1}(\vtau) = \E \Big[ -2 \left< \vg - \big(\vtau_0 \bar{\vs} + \vtau_1 \frac{\fxs}{\|\fxs\|_2} \big), \frac{\fxs}{\|\fxs\|_2} \right> \Big] = 2 \E \left< \vtau_0 \bar{\vs} + \vtau_1 \frac{\fxs}{\|\fxs\|_2}, \frac{\fxs}{\|\fxs\|_2} \right> \\
	&= 2 \vtau_0 \cdot \frac{f_0(\fxs)}{\|\fxs\|_2} + 2 \vtau_1 = 2 \vtau_0 \cdot f_0(\fxs/\|\fxs\|_2) + 2 \vtau_1 \ge 0.
\end{align*}
To reach the first identity in the second line, we use the fact that $\left< \bar{\vs} , \fxs \right> = f_0(\fxs)$ for $\bar{\vs} \in \partial f_0(\fxs)$. The second identity in the second line results from the homogeneity property of norm. Since $\fxs \neq \vO$, we have $\frac{\partial J_1}{\partial \vtau_1}(\vtau) = 0$ if and only if $\vtau = (0, 0)$. Now we argue that the minimizer $\vtau^{\star}$ satisfies $\vtau^{\star}_1 = 0$. If not, we have $\frac{\partial J_1}{\partial \vtau_1}(\vtau^{\star}) > 0$. Since $\frac{\partial J_1}{\partial \vtau_1}$ is continuous, we know that there exists some $\epsilon > 0$ such that $\frac{\partial J_1}{\partial \vtau_1}(\vtau^{\star}_0, c) > 0$ when $0 \le \vtau^{\star}_1 - \epsilon < c < \vtau^{\star}_1$. By the first-order condition for strictly convex function, we obtain
$$
J_1(\vtau^{\star}_0, \vtau^{\star}_1) > J_1(\vtau^{\star}_0, c) + \Big[ \frac{\partial J_1}{\partial \vtau_1}(\vtau^{\star}_0, c) \Big] \cdot (\vtau^{\star}_1 - c) > J_1(\vtau^{\star}_0, c).
$$
This contradicts with the assumption that $\vtau^{\star}$ is the unique minimizer of $J_1$. Therefore, we conclude that $\vtau^{\star}_1$ must be zero. It follows that $\vtau^{\star}_0$ is the unique minimizer of the function
$$
J_2(\tau) \coloneqq J_1(\tau, 0) = \E \dist^2 \big(\vg, \tau \cdot \partial f_0(\fxs) \big),
$$
and the infimum of $J_1$ and $J_2$ are equal. For the function $J_2$, Amelunxen \textit{et al.} have studied its properties: It is strictly convex, continuously differentiable in $\R_+$, and attains its minimum at a unique point. See \cite[Proposition 4.1]{AMEL2014} for details. This completes the proof.

\subsection{Proof of Proposition \ref{prop: equivalence of statistical dimensions}} \label{subsec: proof of equivalence of statistical dimensions}

Assume that $f_0$ is a norm. For any non-zero point $\fxs \in \R^n$ and any $\vs \in \partial f_0(\fxs)$, we have
\begin{equation} \label{eq: dot product of s and x}
	\left< \vs, \fxs \right> = f_0(\fxs) > 0.
\end{equation}
Since both $\partial f_0(\fxs)$ and $\partial \|\fxs\|_2$ are non-empty, compact, and do not contain the origin, we have $N(f_0, \fxs) = \cone \big( \partial f_0(\fxs) \big)$ and $N(\|\cdot\|_2, \fxs) = \cone(\fxs)$. Therefore, take any $\va \in N(f_0, \fxs)$ and $\vb \in N(\|\cdot\|_2, \fxs)$. The relation in \eqref{eq: dot product of s and x} implies that
$$
\left< \va, \vb \right> \ge 0.
$$
As a result of Fact \ref{fact: statistical dimension of sum}, we obtain that
\begin{equation} \label{eq: comparison_two_cone_one_side}
	\delta\big( N(f_0, \fxs) + N(\|\cdot\|_2, \fxs) \big) \le \delta \big( N(f_0, \fxs) \big) + \delta \big( N(\|\cdot\|_2, \fxs) \big) = \delta \big( N(f_0, \fxs) \big) + \frac{1}{2}.
\end{equation}
The identity holds because $\delta \big( N(\|\cdot\|_2, \fxs) \big) = 1/2$. This point results from the fact that $\delta(\R_+) = 1/2$ \cite[pp. 241]{AMEL2014}, the rotational invariance of the statistical dimension \cite[Proposition 3.8 (6)]{AMEL2014} and the embedding property of the statistical dimension \cite[Proposition 3.8 (9)]{AMEL2014}. On the other hand, since $N(f_0, \fxs) \subseteq N(f_0, \fxs) + N(\|\cdot\|_2, \fxs)$, we trivially have
\begin{equation} \label{eq: comparison_two_cone_two_side}
	\delta\big( N(f_0, \fxs) \big) \le \delta\big( N(f_0, \fxs) + N(\|\cdot\|_2, \fxs) \big).
\end{equation}
This is a consequence of the monotonicity property of the statistical dimension \cite[Proposition 3.8 (10)]{AMEL2014}. Putting \eqref{eq: comparison_two_cone_one_side} and \eqref{eq: comparison_two_cone_two_side} together, we obtain that
\begin{equation} \label{eq: relation of two normal cones}
\delta\big( N(f_0, \fxs) \big) \le \delta\big( N(f_0, \fxs) + N(\|\cdot\|_2, \fxs) \big) \le \delta \big( N(f_0, \fxs) \big) + \frac{1}{2}.
\end{equation}
Furthermore, note that $\fdc_1^{\circ} = N(f_0, \fxs) + N(\|\cdot\|_2, \fxs)$ and $\fdc_2^{\circ} = N(f_0, \fxs)$. By the complementarity property of the statistical dimension \cite[Proposition 3.8 (8)]{AMEL2014}, 
\begin{equation} \label{eq: relation of normal and descent}
\delta(\fdc_1) = n - \delta\big( N(f_0, \fxs) + N(\|\cdot\|_2, \fxs) \big) \quad \textnormal{and} \quad \delta(\fdc_2) = n - \delta \big( N(f_0, \fxs) \big)
\end{equation}
Simply combining \eqref{eq: relation of two normal cones} and \eqref{eq: relation of normal and descent} completes the proof.

\section{Phase Transition of Linear Inverse Problem with Non-negativity Constraints} \label{sec: proof linear inverse non-negative}

\subsection{Proof of Proposition \ref{prop: error bound for calc}} \label{subsec: error bound for non-negativity}

The proof is similar with that in \cite[Appendix C.2]{AMEL2014}. For the sake of completeness, we include the whole proof here. Before we begin to prove Proposition \ref{prop: error bound for calc}, we first show that for any $\vg \in \R^n$, there is a unique $\tau_{\vg}$ satisfies $J_{\vg}(\tau_{\vg}) = \inf_{\tau \ge 0} J_{\vg}(\tau)$, where 
$$
J_{\vg}(\tau) = \dist^2 \big(\vg, N + \tau \cdot \partial f_0(\fxs)\big).
$$
We prove this point by contradiction. Suppose there are $\tau_1,\tau_2 \ge 0$, $\tau_1 \neq \tau_2$, satisfying $J_{\vg}(\tau_1) = J_{\vg}(\tau_2) = \inf_{\tau \ge 0} J_{\vg}(\tau)$, then 
	$$
	\dist\big(\vg, N + \tau_1 \cdot \partial f_0(\fxs)\big) = \dist\big(\vg, N+\tau_2 \cdot \partial f_0(\fxs)\big) = \inf_{\tau \ge 0} \dist\big(\vg, N + \tau \cdot \partial f_0(\fxs) \big) = \dist(\vg, N + K), 
	$$
	where $K = \cone\big(\partial f_0(\fxs)\big) = \bigcup_{\tau \ge 0} \tau \cdot \partial f_0(\fxs)$. Since $N+K$ is convex and closed, the projection of $\vg$ onto it is unique \cite[pp. 116]{HIRIART1993}. Therefore, there exist $\vt_1,\vt_2 \in N$ and $\vs_1,\vs_2 \in \partial f_0(\fxs)$ satisfying
	\begin{equation} \label{eq: unique projection}
		\vt_1 + \tau_1 \vs_1 = \vt_2 + \tau_2 \vs_2 = \Pi_{N+K}(\vg).
  	\end{equation}
	However, note that $N$ is the normal cone of $I_{\R^n_{+}}$ at $\fxs$, and its definition \eqref{eq: normal cone of nonegative} implies $\left< \vt_1, \fxs \right> = \left< \vt_2, \fxs \right> = 0$. It follows that
	\begin{equation} \label{eq: product with fxs}
		\left< \vt_1 + \tau_1 \vs_1, \fxs \right> = \tau_1 \left< \vs_1, \fxs \right> = \tau_1 f_0(\fxs) \ \textnormal{and} \ \left< \vt_2 + \tau_2 \vs_2, \fxs \right> = \tau_2 \left< \vs_2, \fxs \right> = \tau_2 f_0(\fxs).
  	\end{equation}
	Combining \eqref{eq: unique projection} and \eqref{eq: product with fxs}, we see that
	$$
	\tau_1 f_0(\fxs) = \tau_2 f_0(\fxs).
	$$
	Since $\fxs \neq \vO$, it holds that $f_0(\fxs) \neq 0$. This contravenes the assumption that $\tau_1 \neq \tau_2$. So the optimal $\tau_g$, which satisfies $J_{\vg}(\tau_{\vg}) = \inf_{\tau \ge 0} J_{\vg}(\tau)$, is unique.

	Now let us derive the bound in Proposition \ref{prop: error bound for calc}. Since $J_3(\tau)$ is strictly convex, it attains its infimum at a unique point, so we may define $\tau_{\star}$ as
	$$
	\tau_{\star} \coloneqq \argmin_{\tau \ge 0} J_3(\tau).
	$$
	Moreover, we have proved that for any $\vg \in \R^n$, the function $J_{\vg}(\tau)$ attains its infimum at a unique point $\tau_{\vg}$. Using the first-order condition for convex function, we can bound the error between $J_{\vg}(\tau_{\vg})$ and $J_{\vg}(\tau_{\star})$ as follows:
	$$
	J_{\vg}(\tau_{\vg}) \ge J_{\vg}(\tau_{\star}) + (\tau_{\vg} - \tau_{\star}) \cdot J'_{\vg}(\tau_{\star}).
	$$
	Taking expectation both sides with respect to $\vg$ yields
	\begin{align} \label{eq: error by variance}
	  	\E \Big[ \inf_{\tau \ge 0} J_{\vg}(\tau) \Big] 
		&\ge \E \big[ J_{\vg}(\tau_{\star}) \big] + \E \big[ (\tau_{\vg} - \tau_{\star}) \cdot J'_{\vg}(\tau_{\star}) \big] \notag \\
		&= J_3(\tau_{\star}) + \E \Big[ (\tau_{\vg} - \tau_{\star}) \cdot \big( J'_{\vg}(\tau_{\star}) - \E [ J'_{\vg}(\tau_{\star}) ] \big) \Big] + \E (\tau_{\vg} - \tau_{\star}) \cdot \E \big[ J'_{\vg}(\tau_{\star}) \big] \notag \\
		&= J_3(\tau_{\star}) + \E \Big[ (\tau_{\vg} - \E \tau_{\vg}) \cdot \big( J'_{\vg}(\tau_{\star}) - \E [ J'_{\vg}(\tau_{\star}) ] \big) \Big] + \E (\tau_{\vg} - \tau_{\star}) \cdot \E \big[ J'_{\vg}(\tau_{\star}) \big] \notag \\
		&\ge \inf_{\tau \ge 0} J_3(\tau) - \Big[ \var(\tau_{\vg}) \cdot \var\big( J'_{\vg}(\tau_{\star})\big) \Big]^{1/2} + \E (\tau_{\vg} - \tau_{\star}) \cdot J'_{\vg}(\tau_{\star}).
	\end{align}
	The second identity holds because the term $J'_{\vg}(\tau_{\star}) - \E [ J'_{\vg}(\tau_{\star}) ]$ have zero mean. The last inequality is a consequence of the Cauthy-Schwarz inequality. Therefore, to bound the error, it is sufficient to bound the variances and the last term. 

	First, the last term is nonnegative, i.e.,
	\begin{equation} \label{eq: last term nonnegative}
	  	\E (\tau_{\vg} - \tau_{\star}) \cdot J'_{\vg}(\tau_{\star}) \ge 0.
	\end{equation}
	To see this, we consider to cases. Define $e_1 \coloneqq \E (\tau_{\vg} - \tau_{\star}) \cdot J'_{\vg}(\tau_{\star})$. On one hand, when $\tau_{\star} > 0$, the derivative $J'_{\vg}(\tau_{\star}) = 0$ because $\tau_{\star}$ is the minimizer of $J$. Hence, $e_1 = 0$. On the other hand, when $\tau_{\star} = 0$, the right derivate $J'(0)$ must be nonnegative, otherwise, since $J'(\tau)$ is continuous, $J'(0) < 0$ will imply that $J(0)$ is not the minimum of $J$. Combining this observation with the fact that $\tau_{\vg} \ge 0$, we see $e_1 \ge 0$.

	Next, let us verify that the map $\vg \mapsto \tau_{\vg}$ is Lipschitz, and compute the variance of $\tau_{\vg}$. Indeed, \eqref{eq: product with fxs} indicates that $\tau_{\vg}$ has the following expression:
	$$
	\tau_{\vg} = \frac{\left< \Pi_{N+K}(\vg), \fxs \right> }{f_0(\fxs)}.
	$$
	Therefore, for any $\vg, \vg' \in \R^n$, we have
	\begin{align*}
		|\tau_{\vg} - \tau_{\vg'}| 
		&= \Big| \frac{\left< \Pi_{N+K}(\vg), \fxs \right> }{f_0(\fxs)} - \frac{\left< \Pi_{N+K}(\vg'), \fxs \right> }{f_0(\fxs)} \Big| = \frac{1}{f_0(\fxs)} \big| \left< \Pi_{N+K}(\vg) - \Pi_{N+K}(\vg'), \fxs \right> \big| \\
		&\le \frac{\|\fxs\|_2}{f_0(\fxs)} \cdot \big\| \Pi_{N+K}(\vg) - \Pi_{N+K}(\vg') \big\|_2 \le \frac{\|\fxs\|_2}{f_0(\fxs)} \cdot \| \vg - \vg' \|_2.
  	\end{align*}
	In the last inequality, we have used the fact that the projection onto a convex set is non-expansive. Thus, the variance of $\tau_{\vg}$ can be bounded by \cite[Fact C.3]{AMEL2014}:
	\begin{equation} \label{eq: variance of tau_g}
	  	\big(\var(\tau_{\vg})\big)^{1/2} \le \frac{\|\fxs\|_2}{f_0(\fxs)} = \frac{1}{f_0(\fxs/ \|\fxs\|_2)}.
	\end{equation}

	Then, let us compute the variance of $J_{\vg}'(\vtau)$ as a function of $\vg$. For this purpose, note that Lemma \ref{lem: distance to sum of sets} already shows that $J_{\vg}'(\tau)$ is a Lipschitz function of $\vg$ with the Lipschitz constant $2 \sup_{\vs \in \partial f_0(\fxs)} \|\vs\|_2$. Again, \cite[Fact C.3]{AMEL2014} delivers the bound
	\begin{equation} \label{eq: variance of J'_g}
	  	\big(\var[J_{\vg}'(\tau)]\big)^{1/2} \le 2 \sup_{\vs \in \partial f_0(\fxs)} \|\vs\|_2.
	\end{equation}

	At last, combining \eqref{eq: error by variance}, \eqref{eq: last term nonnegative}, \eqref{eq: variance of tau_g}, and \eqref{eq: variance of J'_g}, we obtain Proposition \ref{prop: error bound for calc}.

\subsection{Statistical dimension of the prior feasible descent cone of the $\ell_1$ minimization with nonnegative constraints} \label{app: calc_l1_nonnegative}

Without loss of generality, we assume that the first $s$ coordinates of $\fxs$ are positive, and the last $n-s$ coordinates are zero. Note that the subdifferential of $\|\cdot\|_1$ at $\fxs$ is
\begin{displaymath}
	\vu \in \partial \|\fxs\|_1 \Leftrightarrow \left\{
	\begin{array}{ll}
	  	\vu_i = 1, & \textnormal{when} \ \fxs_i > 0, \\
		-1 \le \vu_i \le 1, & \textnormal{when} \ \fxs_i = 0.
	\end{array}
\	\right.
\end{displaymath}
Therefore, for any $\tau \ge 0$, we have
$$
S(\tau) = N + \tau \cdot \partial \|\fxs\|_1 = \big\{ \fx \in \R^n: \fx_i = \tau \ \textnormal{for} \ 1 \le i \le s, \ \textnormal{and} \ \fx_i \le \tau \ \textnormal{for} \ s < i \le n \big\}.
$$
It follows that
$$
\dist^2 \big(\vg, S(\tau)\big) = \sum_{i=1}^s (\vg_i - \tau)^2 + \sum_{i=s+1}^n [ \max(\vg_i - \tau, 0) ]^2.
$$
Hence, the function $J_3(\tau)$ is
$$
J_3(\tau) = \E \dist^2 \big(\vg, S(\tau)\big) = \sum_{i=1}^s \E (\vg_i - \tau)^2 + \sum_{i=s+1}^n \E [ \max(\vg_i - \tau, 0) ]^2 = s(1+\tau^2) + \frac{1}{2}(n-s) \int_{\tau}^{\infty} (u - \tau)^2 \varphi(u) \mathrm{d}u,
$$
where the function $\varphi(u) = \sqrt{\frac{2}{\pi}} e^{-u^2/2}$. Now, denote $\psi_2: [0,1] \rightarrow [0, 1]$ the following function:
\begin{equation*} \label{eq: function psi 2}
  \psi_2(\rho) = \inf_{\tau \ge 0} \Big\{ \rho(1+\tau^2) + \frac{1}{2}(1-\rho) \int_{\tau}^{\infty} (u - \tau)^2 \varphi(u) \mathrm{d}u \Big\}.
\end{equation*}
By Corollary \ref{coro: bound_l1_nonnegative}, we reach the following relation:
$$
\delta(\fdc_3) \le n \cdot \psi_2(s/n).
$$
For the lower bound, we need to bound the term 
$$
\frac{2 \sup \{\|\vs\|_2: \vs \in \partial \|\fxs\|_1\}}{\|\fxs\|_1 / \|\fxs\|_2}.
$$
To this end, first note that 
$$
2\sup_{\vs \in \partial \|\fxs\|_1} \|\vs\|_2 = 2 \sqrt{n}.
$$
Moreover, since all non-negative vectors with exactly $s$ positive entries generate the same subdifferential, and hence, the same prior restricted cone, so we may select each of the positive entries to be $1$, and obtain that $\|\fxs\|_1 / \|\fxs\|_2 = \sqrt{s}$. The lower bound follows immediately.

Next, let us check the infimum in \eqref{eq: function psi} is attained at the unique solution of the stationary equation \eqref{eq: stationary equation}. Recall that Lemma \ref{lem: expected distance to the sum of multiple sets} shows that the infimum of $J(\tau)$ must be attained at a unique point. Moreover, we can compute the right derivative of $J(\tau)$ at the origin, and find that it is negative. Therefore, the infimum of the function $J(\tau)$ must be attained when $J'(\tau) = 0$. Simplifying $J'(\tau) = 0$ leads to the stationary equation \eqref{eq: stationary equation}.

\section{Proof of Fact \ref{fact: statistical dimension of sum}} \label{sec: proof of statistical dimension of sum}

We treat the case when $\left< \va, \vb \right> = 0$ for any $\va \in \K_1$ and $\vb \in \K_2$. The other two cases are similar. The statistical dimension of a convex cone can be expressed via its polar \cite[Proposition 3.1 (4)]{AMEL2014}, so we have
\begin{align*}
  	\delta \big( (\K_1 + \K_2)^{\circ} \big) 
	&= \E \dist^2 (\vg, \K_1 + \K_2) = \E \inf_{\va \in K_1, \vb \in \K_2} \| \vg - \va - \vb \|_2^2 \\
	&= \E \inf_{\va \in \K_1, \vb \in \K_2} \big( \|\vg\|_2^2 + \|\va\|_2^2 + \|\vb\|_2^2 -2\left< \vg, \va\right> - 2 \left< \vg, \vb \right> + 2\left< \va, \vb\right> \big) \\
	&= \E \inf_{\va \in \K_1, \vb \in \K_2} \big( \|\vg - \va\|_2^2 + \|\vg - \vb\|_2^2 - \|\vg\|_2^2 \big) 
	= \E \inf_{\va \in \K_1} \|\vg-\va\|_2^2 + \E \inf_{\vb \in \K_2} \|\vg - \vb\|_2^2 - \E \|\vg\|_2^2 \\
	&= \E \dist^2 (\vg, \K_1) + \E \dist^2 (\vg, \K_2) - n \\
	&= \delta(K_1^{\circ}) + \delta(\K_2^{\circ}) - n.
\end{align*}
The sum of the statistical dimension of a convex cone and that of its polar equals the ambient dimension \cite[Proposition 3.1 (8)]{AMEL2014}. It follows that
\begin{align*}
	\delta(\K_1 + \K_2) 
	= n - \delta \big( (\K_1 + \K_2)^{\circ} \big) = n - \big[ \delta(K_1^{\circ}) + \delta(\K_2^{\circ}) - n \big]
	= [ n - \delta(\K_1^{\circ}) ] + [ n - \delta(\K_2^{\circ}) ] = \delta(\K_1) + \delta(\K_2).
\end{align*}

\bibliographystyle{IEEEtran}
\bibliography{IEEEabrv,references}

\begin{thebibliography}{10}
\providecommand{\url}[1]{#1}
\csname url@samestyle\endcsname
\providecommand{\newblock}{\relax}
\providecommand{\bibinfo}[2]{#2}
\providecommand{\BIBentrySTDinterwordspacing}{\spaceskip=0pt\relax}
\providecommand{\BIBentryALTinterwordstretchfactor}{4}
\providecommand{\BIBentryALTinterwordspacing}{\spaceskip=\fontdimen2\font plus
\BIBentryALTinterwordstretchfactor\fontdimen3\font minus
  \fontdimen4\font\relax}
\providecommand{\BIBforeignlanguage}[2]{{%
\expandafter\ifx\csname l@#1\endcsname\relax
\typeout{** WARNING: IEEEtran.bst: No hyphenation pattern has been}%
\typeout{** loaded for the language `#1'. Using the pattern for}%
\typeout{** the default language instead.}%
\else
\language=\csname l@#1\endcsname
\fi
#2}}
\providecommand{\BIBdecl}{\relax}
\BIBdecl

\bibitem{LUSTIG2007}
M.~Lustig, D.~L. Donoho, and J.~M. Pauly, ``Sparse mri: The application of
  compressed sensing for rapid mr imaging,'' \emph{Magnetic Resonance in
  Medicine}, vol.~58, no.~6, pp. 1182--1195, Dec. 2007.

\bibitem{HAUPT2008}
J.~Haupt, W.~U. Bajwa, M.~Rabbat, and R.~Nowak, ``Compressed sensing for
  networked data,'' \emph{{IEEE} Signal Process. Mag.}, vol.~25, no.~2, pp.
  92--101, Mar. 2008.

\bibitem{DONOHO2005}
D.~L. Donoho and J.~Tanner, ``Neighborliness of randomly projected simplices in
  high dimensions,'' \emph{Proc. Natl Acad. Sci.}, vol. 102, no.~27, pp.
  9452--9457, Mar. 2005.

\bibitem{DONOHO2009}
------, ``Counting faces of randomly projected polytopes when the projection
  radically lowers dimension,'' \emph{J. Amer. Math. Soc.}, vol.~22, no.~1, pp.
  1--53, Jan. 2009.

\bibitem{DONOHO2010a}
------, ``Counting the faces of randomly-projected hypercubes and orthants,
  with applications,'' \emph{Discrete {\&} Computational Geometry}, vol.~43,
  no.~3, pp. 522--541, Apr. 2010.

\bibitem{DONOHO2010b}
------, ``Exponential bounds implying construction of compressed sensing
  matrices, error-correcting codes, and neighborly polytopes by random
  sampling,'' \emph{{IEEE} Trans. Inf. Theory}, vol.~56, no.~4, pp. 2002--2016,
  Apr. 2010.

\bibitem{DONOHO2013a}
D.~L. Donoho, I.~Johnstone, and A.~Montanari, ``Accurate prediction of phase
  transitions in compressed sensing via a connection to minimax denoising,''
  \emph{{IEEE} Trans. Inf. Theory}, vol.~59, no.~6, pp. 3396--3433, Jun. 2013.

\bibitem{DONOHO2013b}
D.~L. Donoho, M.~Gavish, and A.~Montanari, ``The phase transition of matrix
  recovery from gaussian measurements matches the minimax mse of matrix
  denoising,'' \emph{Proc. Natl Acad. Sci.}, vol. 110, no.~21, 2013.

\bibitem{OYMAK2016}
S.~Oymak and B.~Hassibi, ``Sharp mse bounds for proximal denoising,''
  \emph{Found. Comput. Math.}, vol.~16, no.~4, pp. 965--1029, Aug. 2016.

\bibitem{AMEL2014}
D.~Amelunxen, M.~Lotz, M.~B. McCoy, and J.~A. Tropp, ``Living on the edge:
  phase transitions in convex programs with random data,'' \emph{Information
  and Inference: A Journal of the IMA}, vol.~3, no.~3, pp. 224--294, Jan. 2014.

\bibitem{RUDELSON2008}
M.~Rudelson and R.~Vershynin, ``On sparse reconstruction from fourier and
  gaussian measurements,'' \emph{Comm. Pure Appl. Math}, vol.~61, no.~8, pp.
  1025--1045, 2008.

\bibitem{GORDON1988}
Y.~Gordon, \emph{On Milman's inequality and random subspaces which escape
  through a mesh in $\R^n$}.\hskip 1em plus 0.5em minus 0.4em\relax Berlin,
  Heidelberg: Springer Berlin Heidelberg, 1988, pp. 84--106.

\bibitem{CHAN2012}
V.~Chandrasekaran, B.~Recht, P.~A. Parrilo, and A.~S. Willsky, ``The convex
  geometry of linear inverse problems,'' \emph{Found. Comput. Math.}, vol.~12,
  no.~6, pp. 805--849, Dec. 2012.

\bibitem{TROPP2015}
J.~A. Tropp, ``Convex recovery of a structured signal from independent random
  linear measurements,'' in \emph{Sampling Theory, a Renaissance: Compressive
  Sensing and Other Developments}.\hskip 1em plus 0.5em minus 0.4em\relax
  Springer International Publishing, 2015, pp. 67--101.

\bibitem{BAYATI2015}
M.~Bayati, M.~Lelarge, and A.~Montanari, ``Universality in polytope phase
  transitions and message passing algorithms,'' \emph{Ann. Appl. Probab.},
  vol.~25, no.~2, pp. 753--822, Apr. 2015.

\bibitem{OYMAK2015}
S.~Oymak and J.~A. Tropp, ``Universality laws for randomized dimension
  reduction, with applications,'' 2015, [Online]. Available:
  https://arxiv.org/abs/1511.09433 preprint.

\bibitem{ROCK1970}
R.~T. Rockafellar, \emph{Convex Analysis}.\hskip 1em plus 0.5em minus
  0.4em\relax Princeton University Press, 1970.

\bibitem{MORD2014}
B.~S. Mordukhovich and N.~M. Nam, \emph{An Easy Path to Convex Analysis and
  Applications}.\hskip 1em plus 0.5em minus 0.4em\relax Morgan \& Claypool,
  2014.

\bibitem{CVX1}
M.~Grant and S.~Boyd, ``{CVX}: Matlab software for disciplined convex
  programming, version 2.1,'' \url{http://cvxr.com/cvx}, Mar. 2014.

\bibitem{CVX2}
------, ``Graph implementations for nonsmooth convex programs,'' in
  \emph{Recent Advances in Learning and Control}, ser. Lecture Notes in Control
  and Information Sciences, V.~Blondel, S.~Boyd, and H.~Kimura, Eds.\hskip 1em
  plus 0.5em minus 0.4em\relax Springer-Verlag Limited, 2008, pp. 95--110.

\bibitem{RUDIN1991}
W.~Rudin, \emph{Functional Analysis}.\hskip 1em plus 0.5em minus 0.4em\relax
  McGraw-Hill, 1991.

\bibitem{RUDIN1976}
------, \emph{Principles of Mathematical Analysis}.\hskip 1em plus 0.5em minus
  0.4em\relax McGraw-Hill, 1976.

\bibitem{ROCK1998}
R.~T. Rockafellar and R.~J.-B. Wets, \emph{Variational Analysis}.\hskip 1em
  plus 0.5em minus 0.4em\relax Springer, 1998.

\bibitem{SPIVAK1965}
M.~Spivak, \emph{Calculus on Manifolds: A Modern Approach to Classical Theorems
  of Advanced Calculus}.\hskip 1em plus 0.5em minus 0.4em\relax Avalon
  Publishing, 1965.

\bibitem{HIRIART1993}
J.~B. Hiriart-Urruty and C.~Lemarechal, \emph{Convex Analysis and Minimization
  Algorithms. I: Fundamentals}.\hskip 1em plus 0.5em minus 0.4em\relax
  Springer, 1993.

\bibitem{BARTLE1995}
R.~G. Bartle, \emph{The elements of integration and Lebesgue measure}.\hskip
  1em plus 0.5em minus 0.4em\relax Wiley, 1995.

\end{thebibliography}

\end{document}